\documentclass[onecolumn,12pt,letterpaper]{IEEEtran}
\usepackage{graphicx}
\usepackage{setspace}
\usepackage{cite}
\usepackage{epstopdf}
\usepackage{latexsym}
\usepackage{pifont}
\usepackage{textcomp}
\usepackage{marvosym}
\usepackage{amssymb}
\usepackage{subfigure}
\usepackage{amsmath, amsthm,  amssymb} 
\usepackage{algpseudocode}
\usepackage{algorithm}
\usepackage{hyperref}
\usepackage{wasysym}
\usepackage[english]{babel}
\usepackage{geometry}
\usepackage{color}
\geometry{
 left=1in,
 right=1in,
 top=1in,
 bottom=1in,
 }

\begin{document}

\sloppy

\title{Token-based Function Computation with Memory}

\author{
  \IEEEauthorblockN{\normalsize Saber Salehkaleybar{\em, Student Member, IEEE}, and S. Jamaloddin Golestani{\em, Fellow, IEEE}}
  \\
  \IEEEauthorblockA{Dept. of Electrical Engineering, Sharif University of Technology, Tehran, Iran\\
    Emails: saber\_saleh@ee.sharif.edu, golestani@ieee.org} 
}

\maketitle
\begin{abstract}
In distributed function computation, each node has an initial value and the goal is to compute a function of these values in a distributed manner. In this paper, we propose a novel token-based approach to compute a wide class of target functions to which we refer as ``Token-based function Computation with Memory'' (TCM) algorithm. In this approach, node values are attached to tokens and travel across the network. Each pair of travelling tokens would coalesce when they meet, forming a token with a new value as a function of the original token values. In contrast to the Coalescing Random Walk (CRW) algorithm, where token movement is governed by random walk, meeting of tokens in our scheme is accelerated by adopting a novel chasing mechanism. We proved that, compared to the CRW algorithm, the TCM algorithm results in a reduction of time complexity by a factor of at least $\sqrt{n/\log(n)}$ in Erd\"{o}s-Renyi and complete graphs, and by a factor of $\log(n)/\log(\log(n))$ in torus networks. Simulation results show that there is at least a constant factor improvement in the message complexity of TCM algorithm in all considered topologies. Robustness of the CRW and TCM algorithms in the presence of node failure is analyzed. We show that their robustness can be improved by running multiple instances of the algorithms in parallel. 

\end{abstract}
\theoremstyle{definition}
\newtheorem{mydef}{Definition}[section]
\newtheorem{mycol}{Corollary}[section]
\newtheorem{myrem}{Remark}[section]
\theoremstyle{theorem}
\newtheorem{mylm}{Lemma}[section]
\newtheorem{myth}{Theorem}[section]
\newtheorem{myprop}{Proposition}[section]
\section{Introduction}


Distributed function computation is an essential building block in many network applications where it is required to compute a function of initial values of nodes in a distributed manner. For instance, in wireless sensor networks, distributed inference algorithms can be executed by computing average of the sensor measurements as a subroutine. Examples of distributed inference in sensor networks include transmitter localization \cite{Almodovar2012}, parameter estimation \cite{Chiuso2011}, and data aggregation \cite{Necchi2007}. As another application, consider a network with $n$ processors in which each processor has a local utility function and the goal is to obtain the optimal solution of sum of the utility functions subject to some constraints. This problem has frequently arisen in network optimization algorithms such as distributed learning \cite{mateos2010distributed}, link scheduling \cite{Lee2012}, and network utility maximization \cite{Nedic2009}. All these algorithms utilize a distributed sum or average computation subroutine in solving the optimization problems.

Consider the problem of computing a target function $f_n(v_1^0,\cdots,v_n^0)$ in a network with $n$ nodes, where $v_i^0$ is the initial value of node $i$. A common approach is based on constructing spanning trees \cite{lynch1996distributed, sappidi2013computing}. In this solution, the values would be sent toward the root where the final result is computed and sent back to all nodes over the spanning tree. Although the spanning tree-based solution is quite efficient in terms of message and time complexities, it is not robust against network perturbations such as node failures or time-varying topologies. For example, the final result may be dramatically corrupted if a node close to the root fails.

To overcome the above drawback of spanning tree-based solutions, recent approaches take advantage of local interactions between nodes \cite{Dimakis2010b}. In these approaches, each node $i$ which has a value, chooses one of its neighbors, say node $j$; The two nodes then update their values based on a predefined rule function $g(.,.)$ which is determined by the target function $f_n(.)$ (see Lemma \ref{lemma2_1}). By iterating this process in the entire network, the target function is computed in a distributed manner. Let $v_i$ and $v_j$ be the current values of nodes $i$ and $j$, respectively. Two possible options for executing the rule function $g(v_i,v_j)$ are:

\begin{equation}
\begin{cases}
    1) v_i^+=v_j^+=g(v_i,v_j),\\
    2) v_i^+=e, v_j^+=g(v_i,v_j),
\end{cases}
\label{equpdate}
\end{equation}
where $v_i^+$ and $v_j^+$ are the updated values of nodes $i$ and $j$, respectively. The value $e$ is the identity element of the rule function $g(.,.)$, i.e. $g(v,e)=g(e,v)=v$ for any value $v$.

The first option in (\ref{equpdate}) corresponds to the class of distributed algorithms commonly called gossip algorithms \cite{Dimakis2010b}. The main advantage of these algorithms is that they are robust against network perturbations due to their simple structure. However, this robust structure is obtained at the expense of huge time and message complexities \cite{Dimakis2010b}. For the first option, various updating rule functions have been proposed for specific target functions like average \cite{Boyd2006}, min/max, and sum \cite{Ayaso2010}. For instance, the updating rules $g(v_i,v_j)=(v_i+v_j)/2$ and $g(v_i,v_j)=\min(v_i,v_j)$ can be used to compute average and min functions, respectively.

The second updating option can compute a wide class of target functions including the ones computable by gossip algorithms (see Lemma \ref{lemma2_1}) and it is much more energy-efficient than the gossip algorithms \cite{saligrama2011token}. This approach can be easily implemented by a token-based algorithm: Suppose that each node has a token at the beginning of the algorithm and passes its initial value to its token. A node is said to be inactive when it does not have a token.
If the local clock of an active node like $i$ ticks, it chooses a random neighbor node, like node $j$, and sends its token carrying its value. Upon receiving the token, node $j$ updates its value, and becomes active (if it is not already)\footnote{In case of computing the sum function, the updating rule function $g(v_i,v_j)$ is $v_i+v_j$ and the identity element is equal to zero.}. Then, node $i$ sets its own value to $e$, and becomes inactive. From token's view, each token walks in the network, randomly, until it meets another token. The two tokens will then coalesce and form a token with an updated value. This process continues until the result is aggregated in one token. Finally, the last active node can broadcast the result by a controlled flooding mechanism\footnote{In section II, we will explain how the last active node broadcasts the final result.}. This computation scheme is called Coalescing Random Walk (CRW) algorithm after the coalescing random walks \cite{cooper2013coalescing}.
 
The CRW algorithm offers comparable performance to spanning tree-based solutions in terms of message complexity \cite{saligrama2011token}, making it much more energy-efficient than the gossip algorithms. However, it is still slow due to deficiency in token coalescence when only a few tokens remain in the network. Hence, authors in \cite{saligrama2011token}, modified the CRW algorithm in order to improve its running time. In the modified algorithm, which we call the truncated CRW algorithm, at some point of time, the execution of the CRW algorithm is terminated and each active node broadcasts the value of its token via a controlled flooding mechanism, leaving the completion of the computation to each network node. However, this solution does not lead to a significant improvement in time or message complexity \cite{saligrama2011token}. 

In this paper, we propose a mechanism to speed up the coalescence of tokens. Suppose that each token has a unique identifier (UID) besides its carried value. In the proposed mechanism, each node registers the maximum UID of tokens seen so far, and the outgoing edge taken by the token with the maximum UID. When a token enters a node previously visited by a token with higher UID, it follows the registered outgoing edge. Otherwise, it will go to a random chosen neighbor node, according to a predefined probability. Figure \ref{fig10} illustrates a scenario where two tokens are left in the network and show how coalescing is expedited in the proposed scheme.
Since nodes {\em memorize} the outgoing edge of a token with maximum UID they have seen, we call the proposed scheme ``Token-based function Computation with Memory'' (TCM) Algorithm.

It is interesting to mention an analogy between this scheme and cosmology. Think of tokens in the network as cosmic dusts in space. Accordingly, the process of function computation is like forming a planet from cosmic dusts. By running the TCM algorithm, tokens with small UID (light dusts) are trapped in the set of nodes visited by tokens with higher UID (in the gravitational field of heavy dusts). The coalescing process continues until a single token is left, similar to birth of a planet. 

\begin{figure}[!t]
\centering
\includegraphics[width=2.2in]{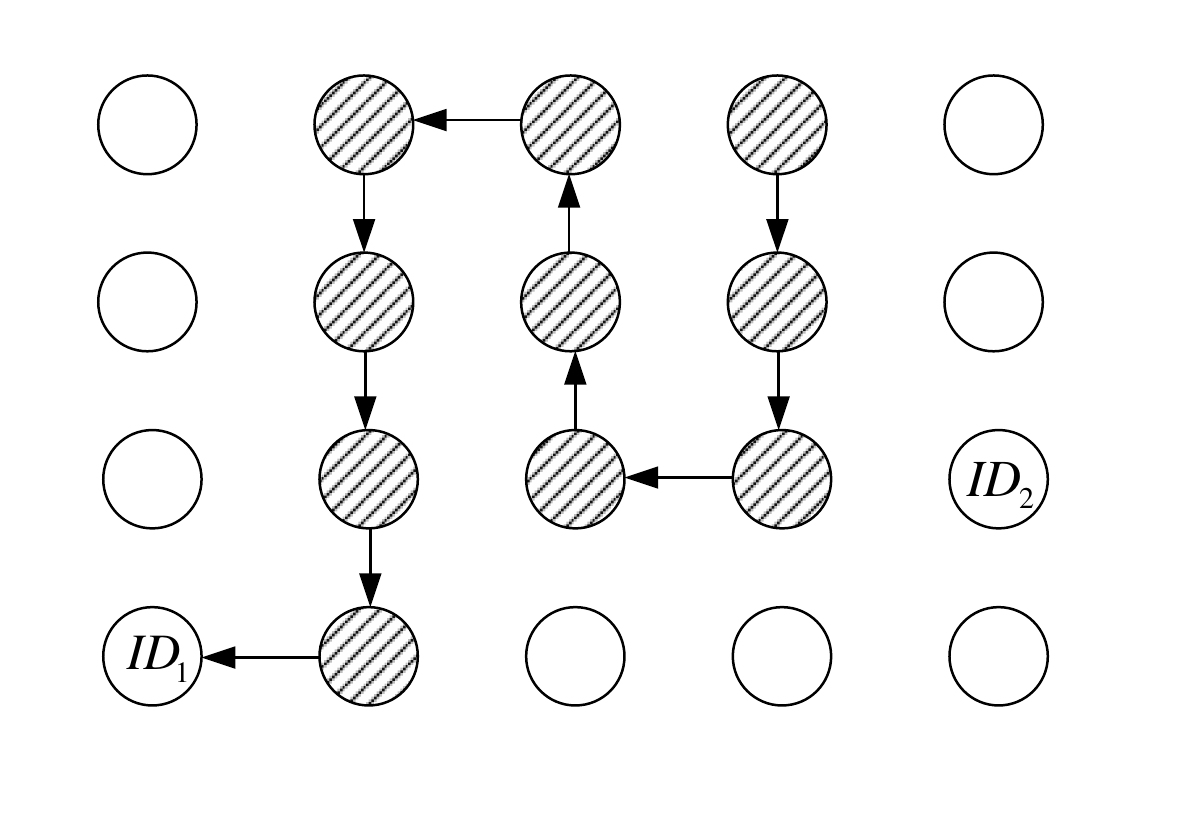}
\caption{An example of execution of TCM algorithm in a torus network: Suppose that two tokens are left in the network. Let the UIDs of the two tokens be $ID_1$ and $ID_2$ where $ID_1>ID_2$. Nodes with shaded patterns are the nodes that token $ID_1$ has visited seen so far. The arrows show the most recent direction taken by token $ID_1$. If token $ID_2$ chooses its left neighbor node in the next step, it is trapped in the set of shaded nodes and follows a path to token $ID_1$.  }
\label{fig10}
\end{figure}

The main contributions of the paper are as follows:
\begin{itemize}
\item We show that the proposed TCM algorithm, by accelerating coalescing of tokens, reduces the average time complexity by a factor $\sqrt{n/\log(n)}$ in complete graphs and Erd\"{o}s-Renyi model compared to the CRW algorithm and its truncated version. Furthermore, there is at least $\log(n)/\log(\log(n))$ factor improvement in torus networks. Simulation results show that the TCM algorithm also outperforms the CRW algorithm in terms of message complexity.

\item In CRW and TCM algorithms, the final result may be corrupted if an active node fails. Hence, it is quite important to study the robustness of these algorithms under node failures. In this regard, we evaluate the performance of CRW and TCM algorithms based on a proposed robustness metric. We show that the robustness can be substantially improved by running multiple instances of the TCM and CRW algorithms in parallel. We prove that, for the CRW algorithm, the required number of instances in order to tolerate the failure rate $\alpha/n$ in complete graphs, is of the order $O(n^{\alpha})$. While the TCM algorithm needs to run only $O(1)$ instances in parallel.

\item We study the performance of TCM and CRW algorithms under random walk mobility model \cite{camp2002survey}. Simulation results show that both algorithms can compute the class of target functions defined in Lemma II.1 successfully even in high mobility conditions.
\end{itemize}

The remainder of the paper is organized as follows: In Section II, the TCM algorithm is described. In Section III, the performances of TCM and CRW algorithms are analyzed and compared for different network topologies. In Section IV, we study the robustness of both algorithms in complete graphs. In Section V, the performances of TCM and CRW algorithms are evaluated through simulations and then compared with analytical results. Finally, we conclude with Section VI.


\section{The TCM algorithm}
\subsection{System model}
Consider a network of $n$ nodes, where each node $i$ has an initial value $v_i^0$ and the goal is to compute a function $f_n(v_1^0,\cdots,v_n^0)$ of initial values in a distributed manner. The topology of the network is represented by a bidirected graph, $G=(V,E)$, with the vertex set $V=\{1,...,n\}$, and the edge set $E\subseteq V\times V$, such that $(i,j)\in E$ if and only if nodes $i$ and $j$ can communicate directly. We index ports of node $i$ with $\{1,\cdots,d_i\}$, where $d_i$ is the degree of node $i$. 

It is assumed that the function $f_n(.)$ is symmetric for any permutation $\pi$ of the set $\{1,\cdots,n\}$, i.e. $f_n(v_1^0,\cdots,v_n^0)=f_n(v_{\pi_1}^0,\cdots,v_{\pi_n}^0)$. This means that it does not matter which node of the network holds which part of the initial values. 

\subsection{Description of the TCM algorithm}
Assume that a UID is assigned to each node $i$.\footnote{One can use randomized algorithms to assign UIDs. Each node randomly chooses an integer number in the set $\{1,\cdots,kn^2\}$. From birthday problem \cite{johnson1977urn}, it can be shown that each node gets a UID with high probability if $k$ is large enough. Furthermore, each node can encode its UID with $O(\log(n))$ bits.} At the beginning of the algorithm, each node has a token to which it passes its UID and initial value. It is also assumed that each node has an independent clock which ticks according to a Poisson process with rate one. Let the value and UID of the token at node $i$ be $value(i)$ and $ID(i)$, respectively. We denote the token at node $i$ by the vector $[value(i),size(i),ID(i)]$. The role of parameter $size(i)$ will be explained in the next part. 

The TCM algorithm computes the target function $f_n(.)$ by passing and merging tokens in the network. When a node does not have a token, it becomes inactive until a neighbor node gets in contact with it. 
Let $memory(i)$ be the maximum UID of the tokens, node $i$ has seen so far. Algorithm \ref{TCM} describes how and when an active node $i$ sends or merges tokens. The subroutine \textsc{Send()} is executed by each tick of local clock while the subroutine \textsc{Receive()} is activated upon receiving a token from some neighbor node. 

Suppose that the local clock of active node $i$ ticks. Node $i$ decides to send the token $[value(i),size(i),ID(i)]$ to a neighbor node.  In this respect, we make distinction between two cases:

Case 1- $memory(i)=ID(i)$: In this case, node $i$ decides to pass the token to a random neighbor node with probability $p_{send}$. Thus, node $i$ waits for $\frac{1}{p_{send}}$ number of clock ticks on average before sending out the token. To implement the waiting mechanism, node $i$ will exit the subroutine Send() with probability $1-p_{send}$, each time its clock ticks (line 6). Otherwise, it chooses a random port $j$, sets the $path(i)$ to $j$, and sends the token on that port (lines 7-8). 

\begin{algorithm}[!t]
\caption{The TCM algorithm} 
\label{TCM} 
\begin{algorithmic}[1]
{\footnotesize
	\State {\bf Initialization}: $memory(i) \gets ID(i)$, $path(i) \gets \{\}$, $value(i)\gets v_i^0$, $size(i)\gets 1$ ,$\forall i \in\{1,\cdots,n\}$,
    \State \qquad \qquad \qquad Node $i$ generates token $[value(i),size(i),ID(i)]$.
    \medskip
	\Procedure{Send}{ }
	 \If {$ID(i)\neq 0$} \Comment{$ID(i)$: the UID of token which is now in node $i$. It is equal to zero for inactive nodes.}
    	\If {$memory(i)=ID(i)$}  
 			\State {\bf Break} with probability $1-p_{send}$. 	
    	    \State choose a port randomly like $j$.
    		\State $path(i) \gets j$ \Comment{$path(i)$: a port number of node $i$ through which the token with highest UID has passed.}	
    	\EndIf
		\State Send token $[value(i),size(i),ID(i)]$ on port $path(i)$.
		\State $ID(i) \gets 0$, \ \  $value(i) \gets e$, \ \ $size(i)\gets 0$.   	
     \EndIf
     \EndProcedure
     \medskip
	 \Procedure{Receive}{$[value,size,ID]$}
	 	\State $ID(i)\gets \max(ID(i), ID)$
    	\State $value(i) \gets g(value(i),value)$
    	\State $size(i)\gets size(i)+size$.
     	\State $memory(i)\gets \max(memory(i),ID)$  \Comment{$memory(i)$: maximum UID that node $i$ has ever seen.}
	 \EndProcedure
}
\end{algorithmic} 
\end{algorithm}

Case 2- $ID(i)<memory(i)$: In this case, node $i$ sends the token on the port $path(i)$ with probability one. 

Now, suppose that node $i$ receives a token $[value,size,ID]$. If node $i$ is inactive, then the received token remains unchanged. Otherwise, it will coalesce with the token at nodes $i$ and the token with greater UID remains in the network (line 15). Then, the parameters $value(i)$, $size(i)$, and $memory(i)$ are updated to $g(value(i),value)$, $size(i)+size$, and $\max(memory(i),ID)$, respectively (lines 16-18). The updating rule function $g(.,.)$ is determined by the target function $f_n(.)$ as explained in Lemma \ref{lemma2_1}. Furthermore, the value $e$ is the identity element of the rule function $g(.,.)$, i.e. $g(v,e)=g(e,v)=v$ for any value $v$.

From top view, each token walks randomly in the network until it enters a node visited by a token with higher UID (Case 1). Then, it follows a path to meet the token with higher UID (Case 2). We call the walking modes in the first and second cases the {\em random walk} and {\em chasing} modes, respectively. In the random walk mode, a token walks with the lower speed $p_{send}$. Thus, it can be followed by tokens with lower UID more quickly. 

\subsection{Termination of the TCM algorithm}  The process in Algorithm \ref{TCM} continues until a few tokens remain in the network. In order to terminate the algorithm, we consider two options:
\begin{itemize}
\item Option 1- Assume that the exact network size, $n$, is known by all nodes. Furthermore, each node $i$ has a parameter $size(i)$, beside its initial value which is equal to one at the beginning. The sum of parameters $\{size(i), i\in\{1,\cdots,n\}\}$ can be computed in parallel to the target function. If the parameter $size$ in an active node reaches $n$, it can identify itself as the unique active node in the network. Then, it broadcasts the output of the TCM algorithm to all nodes by controlled flooding, further explained below. 
\item Option 2- Suppose that there exists an upper bound on the network size. Then, the execution time of the TCM algorithm can be adjusted to a time $T_{run}$ such that, on average, at most a constant number of active nodes remain after time $T_{run}$. Afterwards, each active node broadcasts the value of its token including the UID. All nodes can obtain the final result by combining values received from the active nodes. In analyzing the performances of CRW and TCM algorithms, we consider the first option.
\end{itemize}

In controlled flooding, an active node $i$ sends the value and UID of its token to all neighbor nodes. Each node $j$, upon receiving this message from a node $k$ for the first time, forwards it to all its neighbor nodes except node $k$. Since each message is transmitted on each edge at most twice, the time and message complexities of controlled flooding are $\Theta(\mbox{diam}(G))$ and $\Theta(|E|)$, respectively\footnote{In complete graphs, we can employ gossip algorithm proposed in \cite{mosk2008fast} to broadcast the output with time and message complexities of the order $O(\log(n))$ and $O(n\log(n))$, respectively.}.

The allocation of memory at node $i$ would be: $(memory(i),path(i),size(i),value(i))$ where the possible values of the first three entries are in the set $\{1,\cdots,n\}$. Thus, the TCM algorithm requires at most $\Theta(\log(n))$ bits more storage capacity compared to the CRW algorithm. The next Lemma identifies the class of target functions $f_n(v_1^0,\cdots,v_n^0)$ which can be computed by the TCM algorithm.

\begin{mylm} The TCM algorithm can compute a collection of symmetric functions $\{f_n(.)\}$ if there exists an updating rule function $g(.,.)$ such that for any permutation $\pi$ of the set $\{1,\cdots,n\}$, we have: $f_n(v_1^0,\cdots,v_n^0)=g(f_k(v_{\pi_1}^0,\cdots,v_{\pi_k}^0),f_{n-k}(v^0_{\pi_{k+1}},\cdots,v_{\pi_n}^0))$, $1\leq k\leq n$, $\forall n$.
\label{lemma2_1}
\end{mylm}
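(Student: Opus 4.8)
The plan is to track, throughout the execution, what each token's carried value represents, and to prove by induction that the coalescing operation is consistent with the decomposition hypothesis on $\{f_n(.)\}$. Concretely, I would maintain the following invariant: at every instant the tokens present in the network (whether resting at an active node or in transit on an edge) are in one-to-one correspondence with a partition of $V$ into disjoint blocks $S_1,\ldots,S_r$, where block $S_t$ consists of exactly those nodes whose initial values have been absorbed into that token, and the token carries $value=f_{|S_t|}$ evaluated on $\{v_i^0\}_{i\in S_t}$ together with $size=|S_t|$; every inactive node carries the identity $e$ and contributes no block. Because each $f_n$ is symmetric, the internal ordering of a block is irrelevant, so it suffices to speak of the multiset of absorbed values.

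First I would check the base case. At initialization each node $i$ is active with $value(i)=v_i^0$ and $size(i)=1$, so the blocks are the singletons $\{i\}$, they are trivially disjoint and cover $V$, and the token carries $f_1(v_i^0)=v_i^0$ once we identify the aggregate of a singleton block with the node's own value (the natural $n=1$ reading, consistent with the hypothesis). This establishes the invariant before any clock tick.

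For the inductive step I would inspect the two procedures of Algorithm \ref{TCM} and argue that each event preserves the invariant. The \textsc{Send} routine merely moves a token across a port and resets the originating node to $ID=0$, $value=e$, $size=0$; it creates and destroys no block and changes only which node (or edge) holds a given token, so the partition is untouched. In particular the walking mode—random walk in Case~1 or chasing in Case~2—affects only where tokens travel, never the values they carry, and is therefore immaterial to correctness. The sole step that alters a carried value is \textsc{Receive}. When a token holding the aggregate of a block $S_a$ arrives at an active node holding the aggregate of a disjoint block $S_b$, line~16 sets the new value to $g\bigl(f_{|S_b|}(\{v_i^0\}_{i\in S_b}),\,f_{|S_a|}(\{v_i^0\}_{i\in S_a})\bigr)$ and line~17 sets the new size to $|S_a|+|S_b|$. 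Applying the decomposition hypothesis of the lemma with $n\to|S_a|+|S_b|$, $k\to|S_b|$, and a permutation $\pi$ that lists the nodes of $S_b$ first and those of $S_a$ second, this equals $f_{|S_a|+|S_b|}$ evaluated on $\{v_i^0\}_{i\in S_a\cup S_b}$. Hence the merged token carries precisely the aggregate of the new block $S_a\cup S_b$, and replacing $S_a,S_b$ by their union keeps the blocks disjoint and covering $V$. If instead the receiving node is inactive, its stored value is $e$, so $g(e,value)=value$ by the identity property and the arriving token passes through unchanged, again consistent with the invariant.

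The remaining step is the light bookkeeping of termination. Under Option~1 the algorithm halts when some active token attains $size=n$; by the invariant its block is all of $V$, so it carries $f_n(v_1^0,\ldots,v_n^0)$, and controlled flooding then delivers this value to every node. I expect the only genuinely delicate point to be ensuring that a \textsc{Send} never duplicates or drops a node's contribution, i.e. that the blocks remain a bona fide partition even while a token is in transit; this is handled by observing that the originating node is zeroed out at the instant of sending and the token is re-accounted for at its destination upon \textsc{Receive}, so at no instant is any $v_i^0$ counted twice or omitted. With that, the invariant holds for the lifetime of the algorithm and the correctness claim follows.
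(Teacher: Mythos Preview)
Your proof is correct. The paper does not actually supply its own argument here---its entire proof reads ``The proof is the same as Lemma 3.1 in \cite{saligrama2011token}''---so there is nothing substantive in the paper to compare against. That said, the invariant you maintain (tokens in bijection with the blocks of a partition of $V$, each token carrying $f_{|S|}$ of the absorbed initial values together with $size=|S|$) and the induction over \textsc{Send}/\textsc{Receive} events is exactly the standard way to establish this kind of correctness claim for coalescing-token schemes, and is almost certainly what the cited lemma does as well. Your treatment of the two edge cases---reception at an inactive node via the identity $g(e,\cdot)=\cdot$, and the in-transit bookkeeping ensuring no $v_i^0$ is double-counted---closes the only places where a gap could hide.
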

\begin{proof}
The proof is the same as Lemma 3.1 in \cite{saligrama2011token}.
\end{proof}
 A wide class of target functions fulfil these requirements such as min/max, average, sum, and exclusive OR. For instance, updating rule functions $g(v_i,v_j)=v_i+v_j$, $g(v_i,v_j)=\max(v_i,v_j)$, and $g(v_i,v_j)=v_i \oplus v_j$ are used for computing sum, minimum, and exclusive OR functions, respectively. The average function can also be computed by dividing the output of the sum function by the network size which is obtained by summing parameter $size$ of nodes in parallel to computing the sum function.

\section{Performance Analysis of the CRW and TCM Algorithms}
In this section, we study the performances of CRW and TCM algorithms in complete graphs, Erd\"{o}s-Renyi model, and torus networks. The considered network topologies may resemble different practical networks. For instance, the topology of a wireless network, in which all stations are in transmission range of each other, is typically modelled by a complete graph. A peer-to-peer network such that all nodes can communicate with each other in the overlay network, is another example of complete graphs. As we explain later, the Erd\"{o}s-Renyi model is frequently used as a model to represent social networks. Furthermore, torus network is a simple structure widely used to model distributed processing systems with grid layout or grid-based wireless sensor networks.

As a prelude to analyze the performance of the TCM algorithm, we first present an analysis of time and message complexities of the CRW algorithm for complete graphs, although the CRW algorithm is already analyzed in \cite{tavare1984line}. Then, we study time complexity of the TCM algorithm in complete graphs. We also give a naive analysis of message complexity of the TCM algorithm in complete graphs and time/message complexity of both algorithms in Erd\"{o}s-Renyi model and torus networks. The summary of time and message complexities for the TCM algorithm and the CRW/truncated CRW algorithms are given in Table 1. In complete graphs and Erd\"{o}s-Renyi model, the TCM algorithm reduces the time complexity at least by a factor $\sqrt{n/\log(n)}$. In the case of torus networks, there is an improvement at least by a factor $\log(n)/\log(\log(n))$ with respect to the CRW algorithm. Furthermore, the message complexity of the TCM algorithm is at most the same as the CRW and truncated CRW algorithms. Simulation results show that there is at least a constant factor improvement in the message complexity by employing the TCM algorithm in all considered topologies.

In analyzing the CRW and TCM algorithms, we assume that each token is transmitted instantaneously. Furthermore, passing a token is counted as sending one message in the network.

\begin{table*}
{\scriptsize
  \centering
  \caption{Performance Comparison of the TCM and CRW algorithms in terms of time and message complexities.}
  \subtable[Time complexity]{
    \begin{tabular}{|c|c|c|c|}
 		\hline
 		& Complete graphs & Erd\"{o}s-Renyi model & Torus networks
		\\
		\hline
		\hline
		TCM & $O(\sqrt{n\log(n)})$ & $O(\sqrt{n\log(n)})$ & $O(n \log(\log(n)))$
		\\
		\hline	
		CRW & $\Theta(n)$ & $\Theta(n)$ & $\Theta(n\log(n))$ \cite{saligrama2011token}
		\\
		\hline
		Truncated CRW & $\Theta(n)$ & $\Theta(n)$ & $\Theta(n)$ \cite{saligrama2011token}
		\\
		\hline
    \end{tabular}
  }
  
  \qquad\qquad\qquad\qquad\qquad\qquad\subtable[Message complexity]{
    \begin{tabular}{|c|c|c|c|}
 		\hline
 		& Complete graphs & Erd\"{o}s-Renyi model & Torus networks
		\\
		\hline
		\hline
		TCM & $O(n\log(n))$ & $O(n\log(n))$ & -
		\\
		\hline	
		CRW & $\Theta(n\log(n))$ & $\Theta(n\log(n))$ & $\Theta(n\log^2(n))$ \cite{saligrama2011token}
		\\
		\hline
		Truncated CRW & $\Theta(n\log(n))$ & $\Theta(n\log(n))$ & $\Theta(n\log^2(n))$ \cite{saligrama2011token}
		\\
		\hline
    \end{tabular}
  }
}
\end{table*}
\subsection{Time and message complexities of the CRW algorithm on complete graphs}
Let $T_{CRW}$ and $M_{CRW}$ be the average time and message complexities of the CRW algorithms, respectively. Next theorem gives a tight bound on $T_{CRW}$ and $M_{CRW}$.

\begin{myth} The average time and message complexities of the CRW algorithm in complete graphs are of the orders $\Theta(n)$ and $\Theta(n\log(n))$, respectively.
\label{th3_1}
\end{myth}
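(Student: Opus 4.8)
The plan is to reduce the dynamics to a one-dimensional continuous-time death process and read off both complexities from its holding times. Let $N(t)$ denote the number of tokens present at time $t$, which equals the number of active nodes because every coalescence merges exactly two tokens into one node, so active nodes always carry distinct tokens and $N(t)$ falls from $n$ to $1$ in unit steps. On a complete graph the key simplification is that the per-tick coalescence probability depends only on the current count $k$ and not on which nodes happen to be active: when an active node's rate-one clock fires it forwards its token to one of the $n-1$ other nodes chosen uniformly, and exactly $k-1$ of these currently hold a token. The total clock rate among active nodes is $k$, so the rate at which $N$ drops from $k$ to $k-1$ is $\lambda_k = k\cdot\frac{k-1}{n-1} = \frac{k(k-1)}{n-1}$. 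Using the memorylessness of the Poisson clocks together with this symmetry, I would argue that $\{N(t)\}$ is a time-homogeneous Markov death chain with these rates, so the sojourn at level $k$ is exponential with mean $1/\lambda_k$.

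For the time complexity I would sum the expected sojourn times over all levels and telescope using $\frac{1}{k(k-1)} = \frac{1}{k-1}-\frac{1}{k}$:
\begin{equation}
T_{CRW} = \sum_{k=2}^{n}\frac{n-1}{k(k-1)} = (n-1)\sum_{k=2}^{n}\left(\frac{1}{k-1}-\frac{1}{k}\right) = \frac{(n-1)^2}{n},
\end{equation}
which is $\Theta(n)$. For the message complexity I would use that each clock tick of an active node emits exactly one message, so while $N=k$ messages are produced at total rate $k$; the expected number emitted during the sojourn at level $k$ is therefore $k/\lambda_k = (n-1)/(k-1)$, and
\begin{equation}
M_{CRW} = \sum_{k=2}^{n}\frac{n-1}{k-1} = (n-1)\,H_{n-1} = \Theta(n\log n),
\end{equation}
since the harmonic number $H_{n-1} = \sum_{j=1}^{n-1}1/j = \Theta(\log n)$. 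Note that both expressions are exact, yielding tight two-sided $\Theta$ bounds rather than mere upper bounds.

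The main obstacle is the first step: rigorously justifying that the count process $\{N(t)\}$ is Markov with exactly the rates $\lambda_k$. The delicate point is that the spatial configuration of active nodes must be irrelevant to the next coalescence, which I would establish from the exchangeability built into uniform random forwarding on the complete graph—every non-sender node is equally likely to receive the token, so the coalescence probability at each tick is $(k-1)/(n-1)$ independently of the past. Once this reduction to the death chain is secured the two sums above are elementary, and the whole argument reproduces the classical coalescent computation specialized to complete-graph geometry.
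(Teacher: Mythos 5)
Your proposal is correct and follows essentially the same route as the paper: both model the number of active tokens as a death chain with rate $\frac{k(k-1)}{n-1}$ at level $k$, telescope $\sum_{k=2}^{n}\frac{n-1}{k(k-1)}=(n-1)(1-1/n)$ for the time bound, and count $(n-1)/(k-1)$ expected messages per level to get $(n-1)H_{n-1}=\Theta(n\log n)$. The only difference is that you explicitly flag and justify the reduction to a state-count Markov chain via exchangeability on the complete graph, which the paper takes for granted.
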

\begin{proof}
We can represent the process of token coalescing by a Markov chain with the number of active nodes remaining in the network defined as the state (see Fig. \ref{figmarkov}). The chain undergoes transition from state $k$ to state $k-1$ if a token chooses an active nodes for the next step, which occurs with rate $\frac{k(k-1)}{n-1}$. Let $T_k$ be the sojourn time in state $k$. Then the average time complexity is:

\begin{equation}
T_{CRW}=\displaystyle\sum_{k=2}^n \mathbb{E}\{T_k\}= \displaystyle\sum_{k=2}^n \frac{n-1}{k(k-1)}= (n-1)(1-1/n)\approx n-2.
\end{equation}

Besides, in state $k$, on average, $(n-1)/(k-1)$ messages are transmitted before observing a coalescing event. Therefore, the average message complexity would be\footnote{$\displaystyle\sum_{k=1}^n 1/k \approx \log(n) +c$ where $c \approx 0.577$ is the Euler-Mascheroni Constant.}:
\begin{equation}
M_{CRW}= \displaystyle\sum_{k=2}^n \frac{n-1}{k-1}\approx (n-1) (\log(n-1)+0.577).
\end{equation}
\end{proof}
\begin{figure}[!t]
\centering
\includegraphics[width=4in]{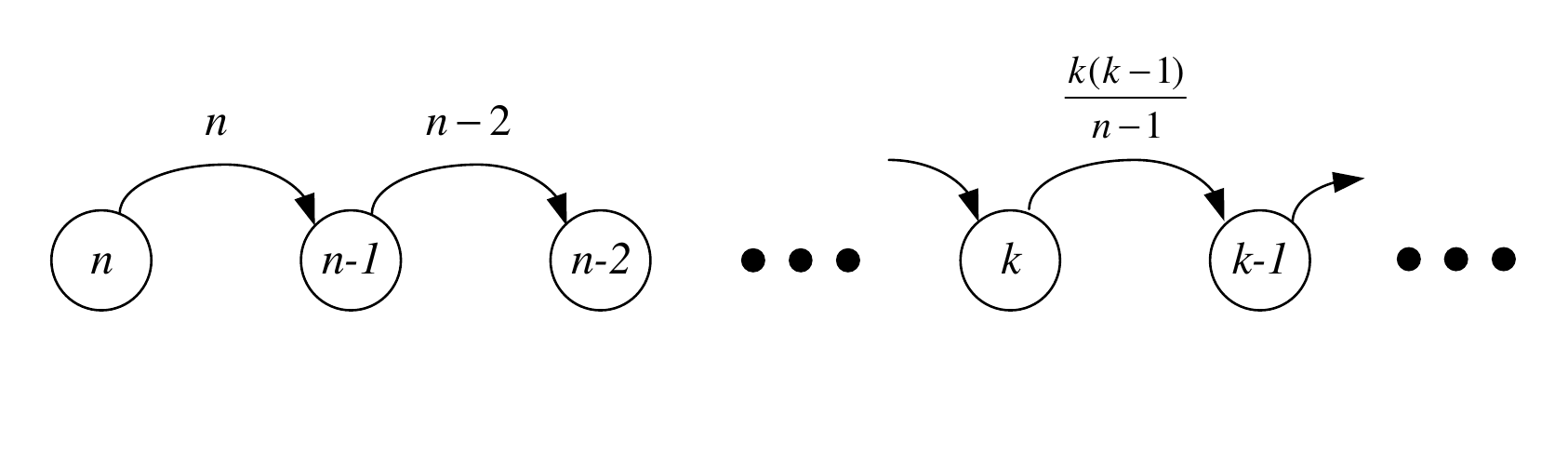}
\caption{Markov chain model for the process of token coalescing in the CRW algorithm. The state $k$ corresponds to $k$ active nodes in the network.}
\label{figmarkov}
\end{figure}
Thus, the average time and message complexities of CRW algorithm are of the orders $\Theta(n)$ and $\Theta(n\log(n))$, respectively.
\subsection{Time complexity of TCM algorithm on complete graphs}
Let the UIDs of the $n$ tokens at the beginning of the algorithm be denoted as $ID_1,\cdots,ID_n$. Without loss of generality, assume that $ID_1>\cdots >ID_n$. Throughout this section, we also assume that $p_{send}=\frac{1}{2}$.
\begin{mydef} Let $T_{coal}(ID_i)$, $i=2,\cdots,n$, denote the time that token $ID_i$ coalesces with a token with a larger UID. Thus, the algorithm running time would be: $T_{run}(n)=\max_{i \in \{2,\cdots, n\}} T_{coal}(ID_i)$.
\label{def3_1}
\end{mydef}
In the TCM algorithm, token $ID_1$ walks randomly in the network. In each step, it chooses a random node from the whole set of network nodes except the node where it is currently presented. After taking $j$ steps, the average number of visited nodes by token $ID_1$ would be: $n-(n-1)\times(1-1/(n-1))^j$. 

\begin{mydef} We call the set of nodes visited by token $ID_1$ during its first $j$ movements {\em the event horizon of $ID_1$}, and denote it by $EH1(j)$. 
\label{def3_2}
\end{mydef}

Notice that, in the TCM algorithm, when a token gets in the event horizon of token $ID_1$, it cannot escape and will eventually coalesce with token $ID_1$. We borrowed the term event horizon from general relativity, where it refers to ``the point of no return''. 

\begin{mylm} The size of event horizon of token $ID_1$ after taking $2j$ steps, i.e. $|EH1(2j)|$, is at least $\mathbb{E}\{|EH1(j)|\}\approx n(1-(1-1/n)^j)$ with probability greater than $1-e^{-n/4- j \eta}$ where constant $\eta\geq 0.05$.
\label{lemma3_2}
\end{mylm}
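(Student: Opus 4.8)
The plan is to reduce the statement to an upper-tail concentration bound on the number of nodes that token $ID_1$ has \emph{not} yet visited, and then to obtain the exponential estimate through a Chernoff-type argument whose slack is supplied precisely by the doubling from $j$ to $2j$ steps.

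First I would adopt the occupancy viewpoint. By Definition \ref{def3_2} and the coverage computation preceding it, the expected number of nodes covered by $j$ (approximately uniform) moves is $\mathbb{E}\{|EH1(j)|\}\approx n(1-(1-1/n)^{j})$, so I would work with the complementary quantity $Z_t=n-|EH1(t)|$, the number of unvisited nodes after $t$ moves. Writing $q=1-1/n$, the target inequality $|EH1(2j)|\ge n(1-q^{j})$ is exactly $Z_{2j}\le nq^{j}$. Since the mean of the complementary count is $\mathbb{E}\{Z_{2j}\}\approx nq^{2j}$, the threshold can be written as $nq^{j}=q^{-j}\,\mathbb{E}\{Z_{2j}\}$. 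Thus the event to be controlled is an upper deviation of $Z_{2j}$ by the \emph{multiplicative} factor $q^{-j}=(1-1/n)^{-j}$ above its own mean, and it is this factor — which grows with the extra $j$ moves — that furnishes the room needed to drive the failure probability down to the claimed level.

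Next I would justify replacing the random walk of $ID_1$ by $2j$ independent uniform samples from the node set. Because each move is forbidden only from repeating the current node (it chooses uniformly among the other $n-1$), excluding self-loops can only help coverage; I would make this precise by a coupling showing that $|EH1(2j)|$ for the walk stochastically dominates the distinct-count of $2j$ i.i.d.\ uniform draws, so that it suffices to prove the bound in the clean balls-into-bins model. There $Z_{2j}=\sum_{w}\mathbf{1}[w\text{ empty}]$ is a sum of negatively associated indicators, each with mean $q^{2j}$, so a Chernoff upper-tail estimate applies as if the terms were independent: for $\lambda>0$, $\Pr\{Z_{2j}\ge a\}\le \exp\!\big(nq^{2j}(e^{\lambda}-1)-\lambda a\big)$. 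Setting $a=nq^{j}$ and $e^{\lambda}=q^{-j}$ collapses the right-hand side to $\exp\!\big(nq^{j}[(1-q^{j})-j\ln(1/q)]\big)$, whose bracket is negative; I would then lower-bound it in absolute value to extract a negative exponent that is linear in both $n$ and $j$.

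The main obstacle will be turning this exponent into the exact advertised form $e^{-n/4-j\eta}$ with the explicit constant $\eta\ge 0.05$, uniformly in $j$. The behaviour differs sharply across regimes: for $j=O(1)$ the inequality is essentially deterministic (the walk cannot miss the handful of nodes required), so the $e^{-n/4}$ factor is harmlessly crude; for $j\gtrsim n$ the target is a genuine constant-fraction large deviation whose rate is $\Theta(n)$, which is where the $n/4$ is actually earned; the delicate case is the intermediate range, where one must track $q^{j}$, $q^{2j}$ and $\ln(1/q)=1/n+O(1/n^{2})$ rather than only their leading approximations, and tune $\lambda$ (rather than take the naive optimizer) so that the bound factors cleanly into an $n$-linear piece and a $j$-linear piece with a provable constant $\eta\ge 0.05$. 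I would also verify carefully that the negative-association step in the balls-into-bins reduction is legitimate, since the alternative bounded-difference (Doob martingale) route applied directly to the walk fails: rerouting a single move perturbs the entire tail of the trajectory, so the per-coordinate Lipschitz constant is not $O(1)$. This is exactly why passing to independent samples first is the safer path.
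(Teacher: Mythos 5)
Your reduction --- passing to the count $Z_{2j}=n-|EH1(2j)|$ of unvisited nodes, coupling the walk on $K_n$ to i.i.d.\ uniform draws, invoking negative association of the empty-bin indicators, and applying a Chernoff bound --- is sound, and is in fact \emph{more} rigorous than the paper's own proof: Appendix A approximates $|EH_1(k)|$ by a Gaussian via a local limit theorem for urn models and then reads off the Gaussian tail at a deviation of order $\sqrt{n}$ standard deviations, far outside the range where a local CLT controls anything (its $O(1/\sigma)$ approximation error swamps the claimed $e^{-n/4}$).

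The genuine gap is your final step, and it is not fixable, because the stated bound is false. With $q=1-1/n$ and $x=j\ln(1/q)\approx j/n$, the exponent you derive equals $nq^{j}\big[(1-q^{j})-j\ln(1/q)\big]=-ne^{-x}(x-1+e^{-x})\approx -j^{2}/(2n)$ for $j=o(n)$; replacing the Poissonized moment bound $1+p(e^{\lambda}-1)\le e^{p(e^{\lambda}-1)}$ (very lossy here since $p=q^{2j}$ is close to $1$) by the exact Bernoulli MGF, i.e.\ the binomial KL bound $e^{-nD(q^{j}\|q^{2j})}$, improves this only to about $-(1-\ln 2)\,j$. Neither is $\le -n/4-\eta j$, and no tuning of $\lambda$ can help: the event $\{|EH1(2j)|\le n(1-q^{j})\}$ contains the event that all $2j$ steps stay inside one fixed set of $\approx n(1-q^{j})\approx j$ nodes, which has probability at least roughly $(j/n)^{2j}=e^{-2j\ln(n/j)}$. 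For $j\asymp\sqrt{n\log n}$ --- exactly the regime in which Theorem III.2 invokes this lemma --- that is $e^{-\Theta(\sqrt{n}\log^{3/2}n)}\gg e^{-n/4}$; and one can check $nD(q^{j}\|q^{2j})\lesssim 0.18\,n$ for \emph{every} $j$, so the constant $1/4$ is unattainable in any regime. (Your remark that $j=O(1)$ is ``essentially deterministic'' also only covers $j\le 2$; for, say, $j=10$ the walk can ping-pong among few nodes with probability $n^{-\Theta(j)}\gg e^{-n/4}$.) The correct output of your computation is a bound of the form $e^{-cj}$ with $c=1-\ln 2$ (up to the time-versus-steps conversion), which is precisely Lemma B.1 in Appendix B ($\alpha_{0}=(1-\log 2)/4$, with no $n/4$ term) --- the weaker statement the paper actually relies on in its rigorous proof of Theorem III.2. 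So your machinery proves the lemma that is needed, but not the lemma that is stated.
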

\begin{proof}
See Appendix A in the supplemental material.
\end{proof}

Now, we can obtain an upper bound on the average time complexity of the TCM algorithm, from Lemma \ref{lemma3_2}.

\begin{myth} In complete graphs, the average time complexity of TCM algorithm is of the order $O(\sqrt{n \log(n)})$.
\label{th3_2}
\end{myth}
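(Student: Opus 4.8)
The plan is to bound the running time $T_{run}(n) = \max_{i \geq 2} T_{coal}(ID_i)$ by showing that, with high probability, \emph{every} token $ID_i$ for $i \geq 2$ gets absorbed into the event horizon of token $ID_1$ within $O(\sqrt{n\log n})$ steps, and then to translate this step bound into a time bound. The central mechanism, already supplied by Lemma III.2, is that after $2j$ steps the event horizon $EH1(2j)$ has size at least $\approx n(1-(1-1/n)^j)$ with probability at least $1 - e^{-n/4 - j\eta}$. For $j = \Theta(n)$ this set already covers a constant fraction of the network, but I want a sharper window: the key observation is that once $EH1$ covers a constant fraction of nodes, any other randomly walking token in random-walk mode lands inside it at each step with constant probability, so it enters the event horizon (and thereby begins chasing $ID_1$) after only a geometric, i.e. $O(\log n)$ with high probability, number of additional steps. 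Once trapped, a token in chasing mode follows the recorded $path$ pointers and coalesces with $ID_1$ along a path whose length I must bound.

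The steps I would carry out, in order, are as follows. First, I would fix a target step count $J = c\sqrt{n\log n}$ and use Lemma III.2 to argue that by step $J$ the event horizon of $ID_1$ has grown to cover a constant fraction of the $n$ nodes with probability at least $1 - e^{-n/4}$ — here I would check that the $-j\eta$ term in the exponent makes the failure probability negligible once $j$ is this large. Second, for each of the remaining $n-1$ tokens I would show that within $O(\sqrt{n\log n})$ steps it enters $EH1$: while $ID_1$'s event horizon is still small (early phase), I need a separate argument, but the dominant contribution comes from the phase where $EH1$ is already large, at which point the hitting probability per step is $\Omega(1)$, so the number of steps to enter is $O(\log n)$ with probability $1 - n^{-\Theta(1)}$. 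Third, I would bound the \emph{chasing time}: once a token is inside $EH1(j)$, it traverses the stored path to reach $ID_1$, and the length of this path is at most the size of the event horizon, hence $O(n)$ steps in the worst case — but since chasing tokens move deterministically at full clock rate while $ID_1$ moves slowly (rate $p_{send} = 1/2$), the chase closes. Fourth, I would take a union bound over all $n-1$ tokens to conclude that $\max_i T_{coal}(ID_i)$ is $O(\sqrt{n\log n})$ steps with high probability. Finally, since each node's clock is a rate-one Poisson process and a token advances roughly once per clock tick (up to the factor $p_{send}$), the number of steps and the elapsed time are of the same order, so the time complexity is $O(\sqrt{n\log n})$; I would also absorb the low-probability failure events into the expectation, noting that even the trivial $O(n)$ bound on the failure event contributes only a lower-order term to the average.

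The main obstacle I anticipate is reconciling the chasing dynamics with the step count to get the clean $\sqrt{n\log n}$ rate rather than a larger bound. The subtle point is that the $\sqrt{n\log n}$ figure suggests the dominant cost is \emph{not} the full $O(n)$ coverage time of Lemma III.2, but rather a balance: one wants $ID_1$'s event horizon to reach size $\Theta(\sqrt{n\log n})$ (a sub-constant fraction), and one wants each other token to have hit a set of that size. With a set of size $s$, a random step hits it with probability $\approx s/n$, so the expected number of steps for a token to enter is $\approx n/s$; setting $s \approx \sqrt{n \log n}$ gives $n/s \approx \sqrt{n/\log n}$ steps per token, and a union bound over $n$ tokens needs the per-token failure probability below $1/n$, costing a $\log n$ factor and yielding $\sqrt{n \log n}$ overall. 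Making this balance rigorous — in particular controlling the correlation between the growth of $EH1$ and the simultaneous motion of the other tokens, and ensuring the chasing phase does not dominate — is the delicate part, and it is where I would expect to lean most heavily on the concentration guarantee of Lemma III.2 together with a careful geometric-tail union bound.
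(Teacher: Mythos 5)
Your skeleton matches the paper's: grow the event horizon of $ID_1$, bound the hitting time of the other tokens into it, bound the chase, and union-bound. Your final paragraph even identifies the correct balance (event horizon of size $\Theta(\sqrt{n\log n})$, per-step hitting probability $\sqrt{\log n/n}$, a $\log n$ tail factor for the union bound). But two concrete gaps remain, and one claim in your plan is wrong. First, in your second step you assert that after $J=c\sqrt{n\log n}$ steps the event horizon covers a \emph{constant fraction} of the nodes. It does not: $|EH1(J)|\approx n(1-(1-1/n)^{J/2})\approx J/2=O(\sqrt{n\log n})$, a vanishing fraction. Reaching a constant fraction requires $\Theta(n)$ steps, which would already ruin the bound; your last paragraph quietly replaces this with the correct size $\Theta(\sqrt{n\log n})$, but then the hitting probability per step is only $\Theta(\sqrt{\log n/n})$, not $\Omega(1)$, and the entry time is $\Theta(\sqrt{n\log n})$ steps (with the tail needed for the union bound), not $O(\log n)$. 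You need to commit to the second accounting and discard the first.

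Second, you leave the chasing phase at ``$O(n)$ steps in the worst case,'' which would dominate everything. The paper closes this by observing that at the time of entry, $ID_1$ has taken only $O(\sqrt{n\log n})$ steps, so the recorded path has length at most $O(\sqrt{n\log n})$; since the chaser advances at rate $1$ while $ID_1$ advances at rate $p_{send}=1/2$, the gap closes at relative speed $1/2$ and the chase costs only another $O(\sqrt{n\log n})$ (made rigorous in the appendix via a Skellam-distribution tail bound on the difference of the two Poisson step counts). Finally, your union bound implicitly treats every token $i\ge 3$ as performing a fresh random walk, but such a token may instead be chasing an intermediate token $ID_j$ with $1<j<i$. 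The paper's main-text proof sidesteps this with the unproven monotonicity assumption $\Pr\{T_{coal}(ID_i)\le t\}\ge\Pr\{T_{coal}(ID_2)\le t\}$, reducing to $ID_2$ (which can only chase $ID_1$); the rigorous appendix proof instead lets coalesced tokens ``stick'' together and proves that every traced path is a weakly self-avoiding walk, so the hitting bound still applies. You flag this correlation issue but supply no mechanism for it; some such device is needed before the union bound is legitimate.
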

\begin{proof}
For a complete proof, see Appendix B in the supplemental material. Here, in order to provide better insight about the algorithm, we present a naive analysis, that is based on a modified model of the network, where Poisson assumption for clock ticks is relaxed. Instead, we adopt a slotted model for time, where each token in the chasing mode, takes one step in each time slot. Furthermore, in the random walk mode, we replace the assumption of $p_{send}=\frac{1}{2}$ with sending token every other slot. Tokens which are scheduled to move in a time slot, take steps in a random order. 

In our analysis, we utilize the following inequality that we trust is correct, based on intuition and simulation verification:
\begin{equation}
\Pr\{T_{coal}(ID_i)\leq t\}\geq \Pr\{T_{coal}(ID_2)\leq t\}, 2\leq i\leq n.
\label{eqopen}
\end{equation}

As an example, simulation results are given for a network with $n=100$ nodes in Fig. \ref{fig8}. 

First, we derive an upper bound on the probability that the token $ID_2$ gets in the event horizon of $ID_1$ after time slot $t$. According to the simplified timing model, token $ID_1$ moves at even time slots and token $ID_2$ tries to get in the event horizon of token $ID_1$ at the same time slots. In order to obtain the upper bound, we wait for $2k$ time slots to have a big enough event horizon of token $ID_1$. Since the size of event horizon in the next $2k$ time slots is equal or greater than the one at time slot $2k$, the probability of not hitting the event horizon in time interval $[2k,4k]$ is less than $(1-|EH_1(k)|/n)^{k}$. By bounding $|EH_1(k)|$ from below (see Lemma \ref{lemma3_2}), we have for $k\geq 2\sqrt{n\log(n)}$:

\begin{equation}
\begin{split}
 \Pr\{ T_{EH1}(ID_2)>4k\}\leq & (1-\frac{ \mathbb{E}\{|EH_1(k/2)|\}}{n})^k\times \Pr\{|EH_1(k)|\leq \mathbb{E}\{|EH_1(k/2)|\}\}
 \\  &\quad+ \Pr\{|EH_1(k)|\leq \mathbb{E}\{|EH_1(k/2)|\}\} \times 1
 \\  \leq & (1-\frac{ \mathbb{E}\{|EH_1(k/2)|\}}{n})^k + e^{-n/4-\eta k/2}
 \\ \leq & e^{-\sqrt{\log(n)/n} k} + e^{-n/4-\eta k/2},
\end{split}
\label{eq1aa}
\end{equation}
where the last inequality is obtained by replacing $\mathbb{E}\{|EH_1(k/2)|\}\geq \mathbb{E}\{|EH_1(\lfloor \sqrt{n\log(n)} \rfloor )|\}$, for $k\geq 2\sqrt{n\log(n)}$.

\begin{figure}[!t]
\centering
\includegraphics[width=3.5in]{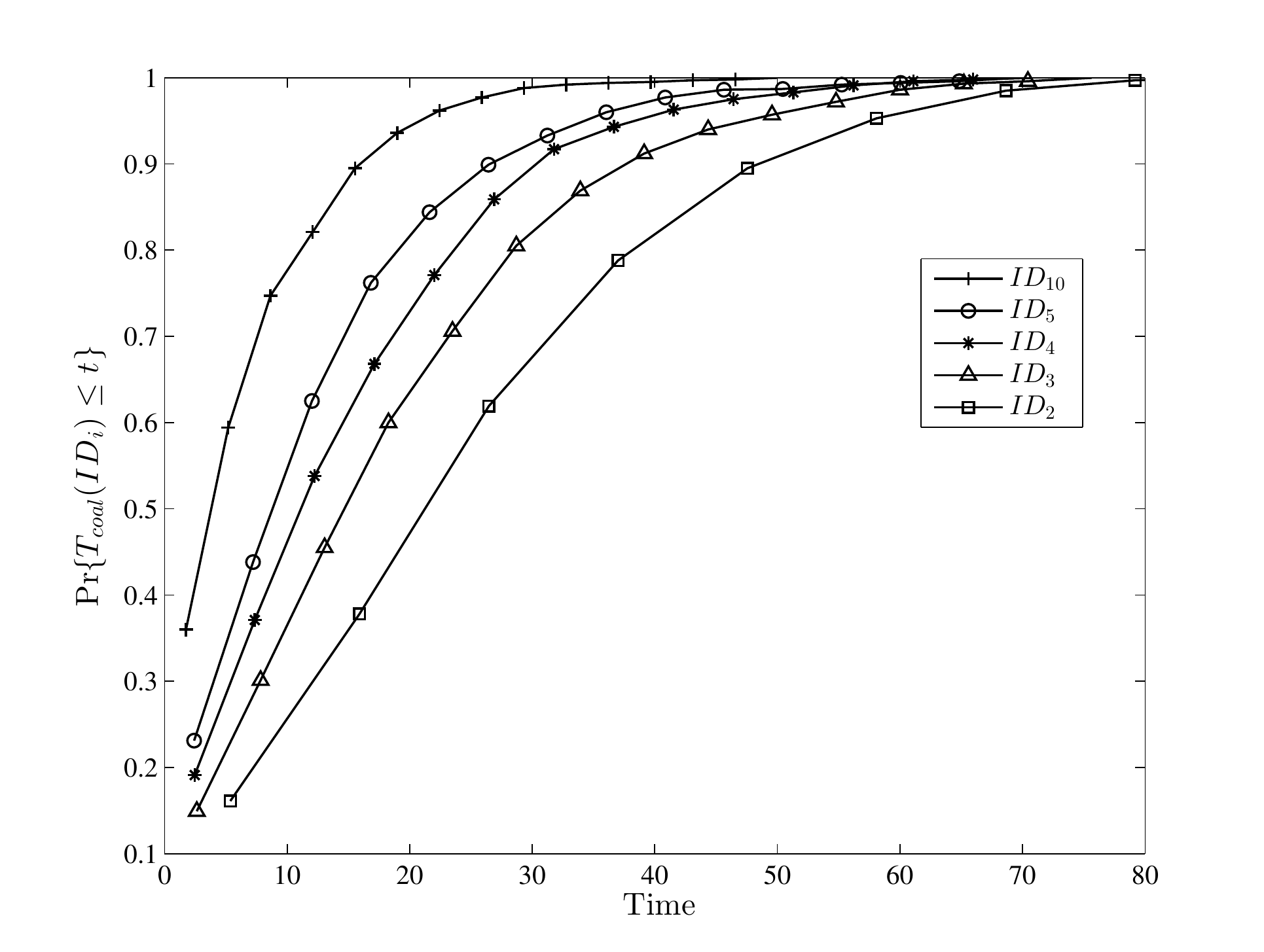}
\caption{Cumulative distribution function of coalescing time for different tokens, $n=100$.}
\label{fig8}
\end{figure}

When token $ID_2$ reaches the event horizon of token $ID_1$ at time slot $4k$, it takes at most another $4k$ time slots to coalesce with token $ID_1$. Because the size of $|EH_1(k)|$ is at most $2k$ and the relative velocity of two tokens is $1/2$. From this fact, we have: $\Pr\{T_{coal}(ID_2)\leq 8k\}\geq \Pr\{T_{EH1}(ID_2)\leq 4k\}$. From (\ref{eq1aa}), we can obtain the following:
\begin{align}
 \Pr\{T_{coal}(ID_2)>k\}&<e^{-\sqrt{\log(n)/n} k/8} + e^{-n/4-\eta k/16}, k\geq 16 \sqrt{n\log{n}} .
\label{eq1}
\end{align}
Now, an upper bound can be derived on the average time complexity:
\begin{equation}
\begin{split}
 \mathbb{E}\{T_{run}(n)\}&=\displaystyle\sum_{k=1}^{\infty} \Pr\{T_{run}(n)>k\}= \displaystyle\sum_{k=1}^{\infty} \Pr\{\max_{i \in \{2,\cdots, n\}} T_{coal}(ID_i)>k\} 
\\ &\leq \displaystyle\sum_{k=1}^{\infty} \min(1,\displaystyle\sum_{i\in \{2,\cdots,n\}} \Pr\{T_{coal}(ID_i)>k\}) 
\\& \leq^a 16 \sqrt{n\log(n)} + \int_{16 \sqrt{n\log(n)}}^{\infty} \min(1,(n-1)\times (e^{-\sqrt{\log(n)/n} t/8} + e^{-n/4-\eta t/16}))dt
\\ &\leq^b 16 \sqrt{n\log(n)} + 8/\sqrt{n\log(n)} +\frac{16n}{\eta} e^{-n/4-\eta \sqrt{n\log(n)}}.
\end{split}
\label{eq2a}
\end{equation}
\\(a) From the inequalities in (\ref{eqopen}) and (\ref{eq1}).
\\ (b) Due to $(n-1)\times (e^{-\sqrt{\log(n)/n} t/8} + e^{-n/4-\eta t/16})\leq 1$ for $t\geq 16\sqrt{n\log(n)}$. 

From (\ref{eq2a}), we conclude that the average time complexity is of the order $O(\sqrt{n\log(n)})$. Comparing with the CRW algorithm, the TCM algorithm improves the time complexity with at least a factor of $\sqrt{n/ \log(n)}$.

\end{proof}

\subsection{Message complexity of TCM algorithm on complete graphs}
\label{sec1}
In this part, we give a naive analysis of the message complexity of TCM algorithm in complete graphs. To obtain the bound on message complexity, we will show that the average number of messages sent in the TCM algorithm until observing a coalescing event, is less than the case for the CRW algorithm. 

\begin{myprop} The average message complexity of the TCM algorithm is of the order $O(n\log(n))$ in complete graphs.
\label{th3_3}
\end{myprop}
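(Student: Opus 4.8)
The plan is to mirror the state-based accounting used for the CRW algorithm in Theorem \ref{th3_1} (the chain of Fig.~\ref{figmarkov}) and then argue a per-coalescing comparison. I would let the state be the number $k$ of active tokens remaining and write the total message count as $M_{TCM}=\sum_{k=2}^{n}\mathbb{E}\{M_k\}$, where $M_k$ is the number of token transmissions that occur while exactly $k$ tokens are present. Since precisely $n-1$ coalescing events occur in total and each one reduces $k$ by one, it suffices to show that the expected number of messages sent in state $k$ before a coalescing occurs is at most the corresponding CRW quantity $\frac{n-1}{k-1}$. Summing then reproduces $\sum_{k=2}^{n}\frac{n-1}{k-1}\approx (n-1)(\log(n-1)+0.577)$, which is $O(n\log(n))$.

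To attack the per-state bound I would classify every transmission in Algorithm \ref{TCM} as either a \emph{random-walk message} (Case~1, sent with probability $p_{send}=\tfrac12$ to a uniformly chosen port) or a \emph{chasing message} (Case~2, sent deterministically along $path(i)$). The random-walk messages are the easy part: such a move lands on a uniformly random node, which is itself active with probability $\frac{k-1}{n-1}$, and moreover the token is \emph{trapped} whenever it steps into the event horizon (Definition \ref{def3_2}) of a higher-UID token. Hence the probability that a single random-walk move triggers an (eventual) coalescing is at least $\frac{k-1}{n-1}$, so the expected number of random-walk messages before a trigger in state $k$ is at most $\frac{n-1}{k-1}$, matching CRW. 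The point of the comparison is that the chasing mechanism only enlarges the ``target set'' that a wandering token must hit, so triggering is at least as fast as in CRW.

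The main obstacle is to control the chasing messages themselves, since a chaser following $path(i)$ may traverse a path whose length is of the order of the current event-horizon size, which can be $\Theta(n)$, and these transmissions are charged on top of the random-walk count. The key fact I would exploit is that chasing is \emph{productive}: the chaser moves at rate one while its target wanders in random-walk mode at rate $p_{send}=\tfrac12$, so the gap closes at relative velocity $\tfrac12$ and the chase always terminates in a coalescing after a number of steps bounded by roughly twice the event-horizon size at capture. To keep this within budget I would invoke the event-horizon growth estimate of Lemma \ref{lemma3_2}, which shows the trapping set fills quickly and therefore capture typically occurs while the relevant horizons, and hence the chase paths, are still short; combined with the running-time bound $T_{run}=O(\sqrt{n\log(n)})$ of Theorem \ref{th3_2}, the aggregate number of chasing steps charged against the $n-1$ coalescing events remains within the same $O(n\log(n))$ order. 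Making the dependence between token trajectories and horizon growth fully rigorous is the delicate point; as in Theorem \ref{th3_2}, I expect to rely on a stochastic-dominance comparison against the CRW coalescing process rather than an exact computation, which is why the resulting argument is necessarily a naive, intuition- and simulation-backed analysis.
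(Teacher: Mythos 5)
Your decomposition by the number $k$ of surviving tokens and your treatment of the random-walk messages match the paper: each random-walk step lands on an active node with probability at least $(k-1)/(n-1)$, so at most $(n-1)/(k-1)$ such messages are expected per coalescing event, and summing over $k$ reproduces the CRW bound $O(n\log(n))$. The divergence, and the genuine gap, is in how you handle the chasing messages. You charge them as an \emph{extra} cost on top of the random-walk count and propose to bound the aggregate chase length via the event-horizon growth of Lemma \ref{lemma3_2} and the running-time bound $T_{run}=O(\sqrt{n\log(n)})$. As stated this does not close: a chase entered around time $t^{\star}=\Theta(\sqrt{n\log(n)})$ follows a registered path of length $\Theta(\sqrt{n\log(n)})$, and charging one such chase to each of the $n-1$ coalescing events already gives $\Theta(n^{3/2}\sqrt{\log(n)})$, which exceeds the claimed bound (as does the crude bound $n\times T_{run}$). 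To rescue your accounting you would additionally need a survival estimate of the type ``only about $n/t$ tokens remain active at time $t$,'' so that long chases are rare and the weighted sum of chase lengths collapses to $O(n)$; you neither state nor carry out this step, and you explicitly defer exactly this point as the ``delicate'' one.

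The paper avoids the separate charge entirely with a different (equally heuristic) observation about Case~2: the next hop $l$ of a chasing token lies on the trajectory of a random walk, and the other $k-1$ tokens are themselves positioned on random-walk trajectories, so node $l$ is active with probability at least $(k-1)/(n-1)$ as well. Hence \emph{every} transmitted message, chasing or not, triggers a coalescing with probability at least $(k-1)/(n-1)$, the per-state message count is dominated by the same geometric variable as in Theorem \ref{th3_1}, and no accounting of chase-path lengths is needed. You should either adopt that per-step argument for the chasing mode or supply the missing survival-count computation; as written, the bound on chasing messages is the missing idea.
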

\begin{proof}
Assume that clock of an active node $i$ ticks at time $t$ and $k$ tokens remain in the network. Suppose that token $ID_r$ is in node $i$. The token $ID_r$ may be in two different modes: Walking randomly or following another token with higher UID. In the first mode, it will choose any node like $j$ with probability $1/(n-1)$. Thus, the probability of coalescing is: 
\begin{equation}
\frac{1}{n-1}\displaystyle\sum_{j\in \{1,\cdots,n\} \setminus \{i\}} \Pr\{\zeta_j(t)=1\},
\end{equation}
where $\zeta_j(t)$ is an indicator parameter which is equal to one if node $j$ is active at time $t$ and otherwise, it is zero. But the expected number of active nodes excluding node $i$ is: $\displaystyle\sum_{j\in \{1,\cdots,n\} \setminus \{i\}} 1\times \Pr\{\zeta_j(t)=1\}=k-1$. Hence, the probability of coalescing in this mode is $(k-1)/(n-1)$. 

In the second mode, token $ID_r$ follows another token with higher UID and decided to go to a neighbor node, let say node $l$. We know that there exist $k-1$ tokens excluding token $ID_r$ which walk randomly or follow another token on a trajectory of a random walk. Thus, node $l$ is active with probability at least $(k-1)/(n-1)$. Following the same arguments in analyzing the message complexity of the CRW algorithm, the message complexity is of the order $O(n\log(n))$.

\end{proof}
\subsection{Time and message complexities of TCM and CRW algorithms in Erd\"{o}s-Renyi model}
In some network applications, it is required to compute a specific function in social networks, such as majority voting \cite{Benezit2011}. Hence, it is quite important to study the performances of TCM and CRW algorithms in these scenarios. Erd\"{o}s-Renyi model is frequently used as a simple model to represent social networks \cite{newman20062}. In this part, we use this model to give a naive analysis on the time and message complexities of TCM and CRW algorithms in social networks. 

In Erd\"{o}s-Renyi model, there exists an edge between any two nodes with probability $p$. It can be shown that the graph is almost certainly connected, if $p\geq 2\log(n)/n$ \cite{erdHos1960evolution}. The next two propositions give upper bounds on the time and message complexities of CRW and TCM algorithms.

\begin{myprop} In the Erd\"{o}s-Renyi model, the average time and message complexities of CRW algorithm are of the order $O(n)$ and $O(n\log(n))$, respectively.
\label{th3_4}
\end{myprop}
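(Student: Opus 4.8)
The plan is to reuse the Markov-chain coalescence argument from Theorem \ref{th3_1} and to show that, up to constant factors, the per-step coalescing probability in the Erd\"{o}s-Renyi graph matches the value $(k-1)/(n-1)$ obtained for complete graphs. I would again take the state of the chain to be the number $k$ of active tokens remaining and seek a lower bound of the form $\Omega\!\left(\frac{k(k-1)}{n}\right)$ on the rate of the transition from state $k$ to state $k-1$. If such a bound holds, then summing the expected sojourn times $\sum_{k=2}^{n}\frac{cn}{k(k-1)}=O(n)$ yields the time complexity, while summing the expected messages-per-coalescence $\sum_{k=2}^{n}\frac{cn}{k-1}=O(n\log n)$ (recall each of the $k$ active nodes moves its token at rate one) gives the message complexity, exactly as in the complete-graph case.

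The crux is therefore the following single-step estimate: when a token at node $i$ ticks and hops to a uniformly random neighbor, the probability that it lands on one of the $k-1$ other active nodes is $\Theta\!\left(\frac{k-1}{n-1}\right)$. First I would establish degree concentration: since $p\ge 2\log(n)/n$, a Chernoff bound gives $d_i=(1\pm o(1))\,np$ simultaneously for all $i$ with high probability, so every walker chooses among $\Theta(np)$ neighbors. Next I would argue that the locations of the active tokens are spread essentially uniformly over the vertex set: because an Erd\"{o}s-Renyi graph above the connectivity threshold is an expander with mixing time $O(\log n)$, each token's position decorrelates from its starting point in $O(\log n)$ steps, which is negligible against the $O(n)$ time horizon. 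Granting this, the expected number of active neighbors of $i$ is $\approx d_i\cdot\frac{k-1}{n-1}$, so a random neighbor is active with probability $\approx \frac{k-1}{n-1}$, which delivers the desired bound and hence the claimed $O(n)$ and $O(n\log n)$ orders.

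The step I expect to be the main obstacle is making the ``active tokens are uniformly distributed'' claim rigorous, particularly in the late stages when only a handful of tokens remain. Conditioning on the past trajectory of a walk correlates its current position with the explored portion of the random edge set, and the joint law of several walkers on the same sample of $G$ is not a product of uniforms. A fully rigorous treatment would invoke meeting-time bounds for random walks on expanders, showing that the meeting time of two walks is $O(n/\lambda)$ with $\lambda$ the spectral gap, which is $\Theta(1)$ for $p\ge 2\log(n)/n$, in order to control the $k=2$ bottleneck. Consistent with the paper's stated ``naive analysis,'' however, I would present the heuristic uniform-spreading argument, treat each hop as landing on an approximately uniform random node, and note that the resulting per-step coalescing probability $\frac{k-1}{n-1}$ reproduces the complete-graph computation of Theorem \ref{th3_1} essentially verbatim, which is the cleanest route to the stated orders.
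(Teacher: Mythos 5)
Your proposal is correct and follows the same overall skeleton as the paper: reduce everything to showing that a single token hop lands on one of the other $k-1$ active nodes with probability $\Theta((k-1)/(n-1))$, then plug that into the complete-graph Markov-chain computation of Theorem \ref{th3_1} to get $\sum_k O(n/(k(k-1)))=O(n)$ for time and $\sum_k O(n/(k-1))=O(n\log n)$ for messages. Where you diverge is in how that single-step estimate is justified. The paper takes an annealed view: it treats the edges incident to the walker's current node as freshly resampled with probability $p$ at each step, writes $P_{selec}=(k-1)\,p\,\mathbb{E}\{1/(d'_l+1)\}$, and then lower-bounds this by Jensen's inequality applied to the convex function $1/(x+1)$, giving $P_{selec}\geq (k-1)/(n-2+1/p)\geq (k-1)/(1.12(n-1))$ for $p\geq 2\log(n)/n$ and $n\geq 100$ --- no degree concentration and no claim about token positions is needed, only the (heuristic) edge-resampling assumption. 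You instead work quenched: Chernoff degree concentration plus the assertion that token positions are near-uniform by expander mixing. Both routes are heuristic at the same point --- the paper's edge resampling ignores correlations with the walker's explored history exactly where your uniform-spreading claim does --- but the paper's Jensen computation is a cleaner and more self-contained way to extract the explicit constant, whereas your version makes the underlying probabilistic gap more visible and correctly identifies that a rigorous proof would route through meeting-time bounds on expanders. Either is acceptable as a ``naive analysis'' in the sense the paper intends.
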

\begin{proof}
Assume that $k$ tokens remain in the network. Consider token $ID_i$ walks randomly until it meets another token. In each step, it may be located in any node. From the token's view, it seems that edges are randomly established with probability $p$ in each step. Suppose that token $ID_i$ is in node $l$ at time $t$. It will choose an active node with probability, $P_{selec}$:
\begin{equation}
P_{selec}=\displaystyle\sum_{m\in \{q|\zeta_q(t)=1\}}\displaystyle\sum_{j=0}^{n-2} p\times \Pr\{d^{\prime}_l=j\} \times 1/(j+1) =(k-1)\times p \times \mathbb{E}\{1/(d^{\prime}_l+1)\},
\end{equation}
where $d^{\prime}_l$ is the degree of node $l$ excluding an active node $m$. The first term in summation shows the probability of having an edge between two nodes $l$ and $m$. The second term represents the probability that node $l$ has $j$ number of neighbor nodes excluding the node $m$ and the last term is the probability that node $l$ chooses active node $m$ from the set of its neighbor nodes. From Jensen's inequality and convexity of function $f(x)=1/(x+1)$ over $x>0$, we have: $P_{selec}\geq (k-1)p/(\mathbb{E}\{d^{\prime}_l\}+1)=(k-1)/(n-2+1/p)\geq (k-1)/(n-2+n/(2\log(n)))$. It can be easily verified that $P_{selec}\geq (k-1)/(1.12(n-1))=\Theta((k-1)/(n-1))$ for $n\geq 100$. Following the same arguments in analyzing the performance of CRW algorithm in complete graphs, we can deduce that the time and message complexities are of the order $O(n)$ and $O(n\log(n))$, respectively.
\end{proof}

\begin{myprop} In the Erd\"{o}s-Renyi model, the average time and message complexities of TCM algorithm are of the orders $O(\sqrt{n\log(n)})$ and $O(n\log(n))$, respectively.
\label{th3_5}
\end{myprop}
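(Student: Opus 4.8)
The plan is to reduce the Erd\"{o}s-Renyi analysis to the complete-graph results of Theorem \ref{th3_2} and Proposition \ref{th3_3} by showing that the two probabilistic quantities that govern the TCM dynamics---the growth rate of the event horizon and the per-step coalescing probability---agree with their complete-graph counterparts up to a constant factor. The enabling fact, already extracted in the proof of Proposition \ref{th3_4}, is that a randomly walking token located at a node $l$ selects any given active node with probability $P_{selec}=\Theta((k-1)/(n-1))$ whenever $k$ tokens remain, provided $p\geq 2\log(n)/n$ and $n\geq 100$. Since this is the same order as in the complete graph, the earlier arguments should transfer with only the constants changed.

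First I would establish an analog of Lemma \ref{lemma3_2} in the Erd\"{o}s-Renyi model. The proof of Theorem \ref{th3_2} hinges on the event horizon $EH1(j)$ of the highest-UID token growing like $n(1-(1-1/n)^j)$ and on a low-UID token entering that set with per-step probability of order $|EH1(j)|/n$. In the random graph, token $ID_1$ in the random-walk mode moves to a uniformly chosen neighbor; since every node has degree concentrated around $np\geq 2\log(n)$ with high probability, each step reaches fresh territory at a rate comparable to the near-uniform sampling used in the complete graph, so the event horizon grows at the same order and with the same concentration (up to adjusted constants $\eta$ and the multiplier of $\sqrt{n\log(n)}$). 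Likewise, a random neighbor of the node currently holding token $ID_2$ lands in $EH1(j)$ with probability of order $|EH1(j)|/n$, because the event horizon is essentially a random subset of the vertices relative to the edges incident to $ID_2$. With these two facts, the tail bound (\ref{eq1}) on $T_{coal}(ID_2)$ and the subsequent integration in (\ref{eq2a}) carry over up to constants, yielding $\mathbb{E}\{T_{run}(n)\}=O(\sqrt{n\log(n)})$.

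For the message complexity I would mirror Proposition \ref{th3_3}. When the clock of an active node holding token $ID_r$ ticks and $k$ tokens remain, the token is either in the random-walk mode or the chasing mode. In the random-walk mode the probability of an immediate coalescence is $P_{selec}$ summed over the $k-1$ other active nodes, which by Proposition \ref{th3_4} is $\Theta((k-1)/(n-1))$. In the chasing mode the node $l$ to which $ID_r$ is directed is active with probability at least of this same order, since the remaining $k-1$ tokens are themselves performing random-walk-type motion. Feeding the bound $\Theta((k-1)/(n-1))$ into the coalescing Markov-chain count used for the CRW algorithm gives the same $O(n\log(n))$ total of transmitted messages.

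I expect the genuine obstacle to be the rigorous justification of the event-horizon analog, i.e.\ the Erd\"{o}s-Renyi version of Lemma \ref{lemma3_2}. Two dependencies must be controlled: the correlation between the random edges revealed by $ID_1$'s walk and the set of nodes it has already visited, and the correlation between $ID_2$'s incident edges and the event horizon it is trying to hit. The clean way around both is to condition on the high-probability event that all vertex degrees lie within a constant factor of $np$ and that the graph is a good expander, so that the walk mixes in $O(\log(n)/\log(np))$ steps and the visited-set sizes concentrate; on this event the sampling is close enough to uniform that the complete-graph estimates apply with adjusted constants. Consistent with the level of rigor the paper adopts for its Erd\"{o}s-Renyi results, this step is carried out as a naive, intuition-plus-simulation argument rather than a fully formal coupling.
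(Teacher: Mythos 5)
Your proposal follows essentially the same route as the paper: the paper's (admittedly naive) proof also reduces to the complete-graph case by observing that a token in random-walk mode visits each node with probability $p\times\mathbb{E}\{1/(d'_l+1)\}\geq 1/(n-2+1/p)\approx 1/(n-1)$, asserts that the tail bounds on $\Pr\{T_{coal}(ID_i)>t\}$ therefore carry over, and invokes the complete-graph arguments for both the time and message bounds. Your writeup is in fact more candid than the paper's about the dependency issues (edge revelation versus visited set) that a rigorous version would have to control, but the underlying argument is the same.
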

\begin{proof}
Suppose that the token $ID_i$ is in random walk mode. In each step, it visits each node with probability $p\times \mathbb{E}\{1/(d^{\prime}_l+1)\}\geq 1/(n-2+1/p)\approx 1/(n-1)$ for large enough $n$. Intuitively, we still have the same bounds on the probabilities $\Pr\{T_{coal}(ID_i)> t\}$, $2\leq i\leq n$. By the same arguments for the case of complete graphs, the time and message complexities are of the order $O(\sqrt{n\log(n)})$ and $O(n\log (n))$, respectively.   
\end{proof}

\subsection{Time complexity of TCM algorithm on torus networks}
In this part, we give a naive analysis on the time complexity of TCM algorithm in torus networks. We will show that the average running time of the algorithm is of the order $O(n\log(\log(n)))$. To obtain the bound, we first need to review two lemmas about single random walks. 

\begin{figure}[!t]
\centering
\includegraphics[width=1.75in]{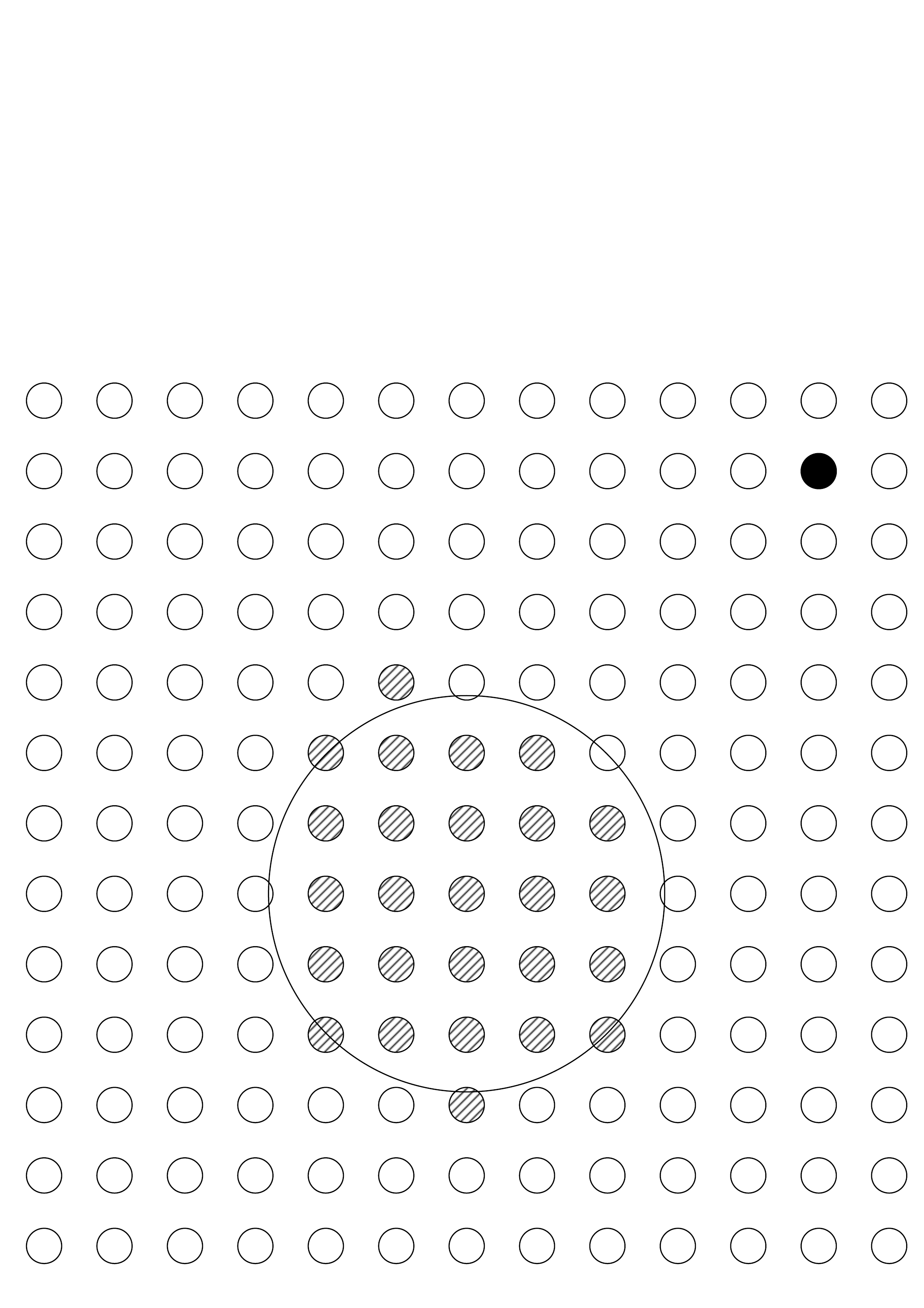}
\caption{After $k$ steps, the region of visited nodes by token $ID_1$, is approximately a disc with radius of $r=\sqrt{k /n\log(k)}$. The visited nodes are shown by dashed patterns. The average time to hit the disc by a token (depicted by black color) is $\Theta(n\log(r^{-1}))$.}
\label{fig11}
\end{figure}
\begin{mylm} \cite{dembo2004cover} Consider a $\sqrt{n}\times \sqrt{n}$ discrete torus. Let $T_{hit}$ be the average time for a single random walk to hit the set of nodes contained in a disc of radius $r<\mathcal{R}/2$ around a point $x$ starting from the boundary of a disc of radius $\mathcal{R}$ around $x$. Then, we have: $\mathbb{E}\{T_{hit}\}=\Theta(n\log(r^{-1}))$.
\label{lemma3_4}
\end{mylm}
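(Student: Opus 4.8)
The plan is to exploit the logarithmic potential theory of the two-dimensional walk and to organize the trajectory into macroscopic excursions, so that the $\log(r^{-1})$ factor emerges from a two-dimensional gambler's-ruin estimate and the factor $n$ from the diffusive time scale of the torus. Let $B(x,\rho)$ denote the disc of radius $\rho$ about the target center $x$, and recall that the relevant radii are measured relative to the side length $\sqrt n$ (as in Fig.~\ref{fig11}), so that $\log(r^{-1})$ is comparable to $\log\bigl((\sqrt n)/\rho_{\mathrm{lat}}\bigr)$, where $\rho_{\mathrm{lat}}$ is the target radius in lattice units.

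First I would establish the discrete two-dimensional gambler's-ruin estimate using the potential kernel $a(\cdot)$ of the planar walk, which satisfies $a(z)=\tfrac{2}{\pi}\log|z|+\kappa+o(1)$ and is harmonic off the origin. Since $a$ is, up to an additive constant, the discrete logarithm, an optional-stopping argument applied to $a(S_t)$ between the two circles shows that a walk started at distance $\rho$ from $x$, with $r\le\rho\le\mathcal R_0$, hits $B(x,r)$ before reaching the maximal torus scale $\mathcal R_0=\Theta(\sqrt n)$ with probability $\Theta\!\bigl(\log(\mathcal R_0/\rho)/\log(\mathcal R_0/r)\bigr)$. Taking $\rho=\mathcal R$ macroscopic yields a per-excursion success probability $q=\Theta\bigl(1/\log(r^{-1})\bigr)$.

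Next I would decompose the trajectory into successive excursions between $\partial B(x,\mathcal R)$ and the far region at distance $\Theta(\sqrt n)$ from $x$. By the estimate above, each excursion hits the target with probability $\Theta(q)$, so the number $N$ of excursions before a hit has mean $\mathbb{E}\{N\}=\Theta(1/q)=\Theta(\log(r^{-1}))$. Separately, I would show that one excursion has expected duration $\Theta(n)$: because the torus has diameter $\Theta(\sqrt n)$ and the walk is diffusive (spectral gap $\Theta(1/n)$), the expected time to travel between points at macroscopic separation and return to the far region is $\Theta(n)$, uniformly over excursions and regardless of whether they are successful. Combining the two via Wald's identity gives $\mathbb{E}\{T_{hit}\}=\mathbb{E}\{N\}\cdot\Theta(n)=\Theta\bigl(n\log(r^{-1})\bigr)$, with the matching lower bound following from the lower bounds on both $\mathbb{E}\{N\}$ and the excursion durations.

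I expect the main obstacle to be the rigorous transfer of the $\mathbb{Z}^2$ potential-kernel asymptotics to the finite torus: one must control the wrap-around of excursions that reach the macroscopic scale, justify the near-independence (or at least a renewal structure) of successive excursions so that $N$ is genuinely close to geometric, and establish the uniform $\Theta(n)$ control on excursion durations together with enough independence between an excursion's success and its length to apply Wald cleanly. Note that a purely electrical-network shortcut via the commute-time identity would not suffice here, since the commute time $\mathbb{E}_y\{T_A\}+\mathbb{E}_A\{T_y\}$ is dominated by the reverse direction (hitting a single far point from the disc costs $\Theta(n\log n)$) and therefore fails to isolate the one-way hitting time. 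These delicate points are precisely the ones carried out in \cite{dembo2004cover}, and the two-sided bound $\Theta(n\log(r^{-1}))$ hinges on matching the constants implicit in $q$ and in the excursion time.
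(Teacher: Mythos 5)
The paper does not actually prove this lemma --- it is imported verbatim from \cite{dembo2004cover} with no argument supplied --- so there is no in-paper proof to compare against; your sketch must stand on its own. In outline it is exactly the mechanism of the cited reference: a potential-kernel (two-dimensional gambler's-ruin) estimate giving a per-excursion hitting probability of order $1/\log(r^{-1})$, a decomposition of the trajectory into excursions between a macroscopic circle and the far region, each of duration $\Theta(n)$, and a Wald-type summation. Two points need to be made precise. First, your gambler's-ruin step takes $\mathcal{R}$ macroscopic, whereas the lemma as stated assumes only $r<\mathcal{R}/2$; if $\mathcal{R}$ is close to $2r$ the walk started on $\partial B(x,\mathcal{R})$ hits the inner disc before escaping to the diffusive scale with probability $1-O\bigl(\log(\mathcal{R}/r)/\log(r^{-1})\bigr)$, and the one-way hitting time is then only $\Theta(n)$, so the two-sided bound $\Theta(n\log(r^{-1}))$ genuinely requires the starting circle (or, more generally, the starting point) to be at macroscopic distance from $x$ --- which is how Proposition \ref{th3_6} uses the lemma, but your proof should state this restriction explicitly rather than leave it implicit in the choice $\rho=\mathcal{R}$. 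Second, Wald's identity in its naive form needs independence of the excursion count $N$ from the excursion lengths, which fails here; the clean repair is the one you gesture at, namely optional stopping of the compensated sum $\sum_{j\le k}\bigl(\tau_j-\mathbb{E}\{\tau_j\mid\mathcal{F}_{j-1}\}\bigr)$ combined with uniform two-sided bounds $cn\le\mathbb{E}\{\tau_j\mid\mathcal{F}_{j-1}\}\le Cn$ on the conditional excursion durations. Your remark that the commute-time identity cannot substitute for this argument is correct: the reverse direction dominates the commute time and destroys the one-way estimate. With these two refinements the sketch is a sound route to the lemma and matches the technique of \cite{dembo2004cover}.
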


\begin{mylm} \cite{dvoretzky1951some} Let $V_k$ be the number of nodes visited by a single random walk on $\mathbb{Z}^2$ after $k$ steps. Then, we have: $\mathbb{E}\{V_{k }\}=\frac{\pi k}{\log{k}}$ and variance $\mathrm{Var}(V_{k })=O(k^2 \frac{\log(\log(k))}{\log(k)^3})$.
\label{lemma3_5}
\end{mylm}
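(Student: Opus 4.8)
The plan is to treat this as the classical Dvoretzky--Erd\"{o}s range estimate and reconstruct it from the escape probability of the planar walk. Write $S_0,S_1,\dots$ for the simple random walk on $\mathbb{Z}^2$ with $S_0=0$, and let $Z_j=\mathbf{1}\{S_j\neq S_m\ \text{for all}\ 0\le m<j\}$ be the indicator that step $j$ reaches a fresh site, so that $V_k=\sum_{j=0}^k Z_j$. Reversing the path $S_0,\dots,S_j$ (which has the same law) shows $\Pr\{Z_j=1\}=\Pr\{S_m\neq 0,\ 1\le m\le j\}=:q_j$, the probability of no return to the origin in $j$ steps. Hence $\mathbb{E}\{V_k\}=\sum_{j=0}^k q_j$, and the whole expectation reduces to the asymptotics of $q_j$.

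For the expectation I would pass to generating functions. The planar local central limit theorem gives $\Pr\{S_{2m}=0\}\sim 1/(\pi m)$, so the Green generating function satisfies $P(s):=\sum_{n\ge0}\Pr\{S_n=0\}s^n\sim \frac{1}{\pi}\log\frac{1}{1-s}$ as $s\uparrow 1$. Writing $F(s)$ for the first-return generating function, the renewal identity $P(s)=1/(1-F(s))$ yields for the escape sequence
\begin{equation}
\sum_{n\ge0} q_n s^n=\frac{1-F(s)}{1-s}=\frac{1}{(1-s)P(s)}\sim \frac{\pi}{(1-s)\log\frac{1}{1-s}}.
\end{equation}
Since $q_n$ is monotone decreasing, Karamata's Tauberian theorem applied to the slowly varying factor $\pi/\log(\cdot)$ gives $q_n\sim \pi/\log n$ and, summing, $\mathbb{E}\{V_k\}=\sum_{j\le k}q_j\sim \pi k/\log k$, as claimed.

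For the variance I would expand $\mathbb{E}\{V_k^2\}=\sum_{i,j}\mathbb{E}\{Z_iZ_j\}$ and compare term by term with $(\mathbb{E}\{V_k\})^2=\sum_{i,j}q_iq_j$. Fix $i<j$ and split $\{Z_j=1\}$ into $B_{ij}=\{S_j\neq S_m,\ i\le m<j\}$ and $C_{ij}=\{S_j\neq S_m,\ 0\le m<i\}$. Conditioning on $S_i$ and invoking the Markov property, the event $A_i:=\{Z_i=1\}$ and $B_{ij}$ become independent with probabilities $q_i$ and $q_{j-i}$, so bounding $\mathbf{1}_{C_{ij}}\le 1$ gives the clean inequality $\mathbb{E}\{Z_iZ_j\}\le q_iq_{j-i}$, while writing $\mathbf{1}_{C_{ij}}=1-\mathbf{1}_{\overline{C_{ij}}}$ gives the exact identity
\begin{equation}
\mathbb{E}\{Z_iZ_j\}=q_iq_{j-i}-\mathbb{E}\{\mathbf{1}_{A_i}\mathbf{1}_{B_{ij}}\mathbf{1}_{\overline{C_{ij}}}\},
\end{equation}
where $\overline{C_{ij}}$ is the event that $S_j$ lands on a site visited before time $i$. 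The "free" part $\sum_{i<j}(q_iq_{j-i}-q_iq_j)$ is handled directly by the $q$-asymptotics above, so the excess second moment is governed entirely by the sum of the correction terms.

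The hard part will be the correction term $\mathbb{E}\{\mathbf{1}_{A_i}\mathbf{1}_{B_{ij}}\mathbf{1}_{\overline{C_{ij}}}\}$: it measures the probability that the walk, after staying fresh on the block $[i,j)$, nevertheless falls back onto its own pre-$i$ range, and estimating it sharply requires the planar potential-kernel bound $a(x)\sim\frac{1}{\pi}\log|x|$ together with the $1/(\pi m)$ return asymptotics, uniformly over the relevant displacements and time gaps. Pushing these estimates through the double sum over $i<j$ — and verifying that the logarithmic factors combine to exactly $k^2\log\log(k)/\log(k)^3$ rather than a cruder $k^2/\log(k)^2$ — is where the genuine difficulty lies; the $\log\log$ factor can be traced to the dominant near-diagonal pairs with $j-i=o(i)$ and to the second-order correction in the expansion of $q_j$. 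Everything outside this estimate is routine Abelian/Tauberian bookkeeping.
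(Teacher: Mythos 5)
This lemma is not proved in the paper at all: it is quoted verbatim from Dvoretzky and Erd\H{o}s \cite{dvoretzky1951some} as a known ingredient for Proposition III.4, so there is no in-paper argument to compare yours against. Judged on its own terms, your reconstruction of the expectation is correct and is essentially the classical argument: the decomposition $V_k=\sum_j Z_j$, the reversal identity $\Pr\{Z_j=1\}=q_j$, the renewal/generating-function relation $\sum_n q_n s^n=\bigl((1-s)P(s)\bigr)^{-1}$, and the Tauberian step giving $q_n\sim\pi/\log n$ and hence $\mathbb{E}\{V_k\}\sim\pi k/\log k$ (the lemma's ``$=$'' should of course be read as an asymptotic equivalence, as you do).

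The variance half, however, is left genuinely open, and you also misplace where the difficulty sits. Since the lemma only claims a one-sided bound $\mathrm{Var}(V_k)=O(k^2\log\log k/\log^3 k)$ and your correction term $\mathbb{E}\{\mathbf{1}_{A_i}\mathbf{1}_{B_{ij}}\mathbf{1}_{\overline{C_{ij}}}\}$ enters the identity with a \emph{minus} sign, you may simply discard it: the inequality $\mathbb{E}\{Z_iZ_j\}\le q_i q_{j-i}$ already yields $\mathrm{Var}(V_k)\le \mathbb{E}\{V_k\}+2\sum_{i<j\le k}q_i\,(q_{j-i}-q_j)$, so no potential-kernel estimate of the ``fall back onto the pre-$i$ range'' event is needed. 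The real gap is twofold: (i) the soft Karamata argument gives only $q_n\sim\pi/\log n$ with no rate, whereas evaluating $\sum_{i<j\le k}q_i(q_{j-i}-q_j)$ to the stated order requires the quantitative refinement $q_n=\pi/\log n+O(\log\log n/\log^2 n)$, which must be extracted from the renewal identity together with a local CLT with error term; and (ii) the double sum itself must actually be computed --- its main term contributes $O(k^2/\log^3 k)$ and the $q$-error term contributes the extra $\log\log k$ factor. Neither step appears in your write-up, so as it stands the variance claim is asserted rather than proved; you should either carry out (i) and (ii) or, as the paper does, cite \cite{dvoretzky1951some} for the result.
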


\begin{myprop} In torus networks, the average time complexity of the TCM algorithm is of the order $O(n\log(\log(n)))$.
\label{th3_6}
\end{myprop}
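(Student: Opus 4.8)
The plan is to reduce the whole dynamics to the behaviour of the single highest token $ID_1$, which, never being in chasing mode, performs an ordinary simple random walk on the $\sqrt{n}\times\sqrt{n}$ torus, together with the trapping property noted after Definition \ref{def3_2}: once any lower token enters the visited set of $ID_1$ it can no longer escape and must eventually coalesce with $ID_1$. I would therefore bound $T_{coal}(ID_i)$, for a fixed $i\geq 2$, by the sum of two contributions: a \emph{build-up} phase, in which $ID_1$ inflates its event horizon to a target size, and a \emph{capture} phase, the hitting time of that event horizon by the random walk carrying $ID_i$. The order in which I would carry out the steps is: (i) fix a number of steps $k$ and lower bound the size of $EH1(k)$; (ii) translate this into an effective radius and apply the hitting lemma; (iii) optimise over $k$; (iv) pass from a single $T_{coal}$ to $T_{run}$.

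For step (i) I would invoke Lemma \ref{lemma3_5}: treating $ID_1$ as a walk on $\mathbb{Z}^2$ for $k$ steps, $\mathbb{E}\{|EH1(k)|\}=\mathbb{E}\{V_k\}=\pi k/\log k$, and the variance bound $\mathrm{Var}(V_k)=O(k^2\log\log k/\log^3 k)$ gives, via Chebyshev, that $|EH1(k)|=\Theta(k/\log k)$ with high probability, since the standard deviation is $o$ of the mean. Modelling this visited set, as in Fig.~\ref{fig11}, by a disc of equal area on the normalised torus yields effective radius $r=\sqrt{k/(n\log k)}$. Since a compact disc has the smallest capacity among sets of a given number of nodes, replacing the genuinely spread range by such a disc can only \emph{increase} the hitting time, so this substitution is conservative and yields an upper bound. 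For step (ii) I would then apply Lemma \ref{lemma3_4}: a lower token, starting at the natural inter-token distance $\mathcal{R}=\Theta(1)$ (normalised), with $r<\mathcal{R}/2$ for large $n$, hits the event horizon in expected time $\Theta(n\log(r^{-1}))$, after which capture is immediate on the torus scale.

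The balancing in step (iii) is the quantitative heart of the argument. With $r^{-1}=\sqrt{n\log k/k}$ one has $\log(r^{-1})=\tfrac12(\log n-\log k+\log\log k)$, so the total cost behaves like $k+\tfrac{n}{2}(\log n-\log k+\log\log k)$; differentiating shows the optimum at $k=\Theta(n)$. Substituting $k=\Theta(n)$ gives $r=\Theta(1/\sqrt{\log n})$ and hence an expected capture time $\Theta(n\log(r^{-1}))=\Theta\!\left(n\log\log n\right)$, while the build-up phase costs only $\Theta(n)=o(n\log\log n)$. Thus $\mathbb{E}\{T_{coal}(ID_i)\}=O(n\log\log n)$ for each fixed $i$.

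The step I expect to be the main obstacle is (iv), controlling $T_{run}(n)=\max_{i}T_{coal}(ID_i)$ rather than a single coalescing time. A crude union bound over the $\Theta(n)$ tokens, using the roughly exponential tail of a hitting time with mean $\Theta(n\log\log n)$, would cost an extra $\log n$ factor and yield only $O(n\log n\log\log n)$. To recover the claimed $O(n\log\log n)$ I would argue, in the spirit of the heuristic inequality \eqref{eqopen} used for complete graphs, that the event horizon \emph{keeps growing} while uncaptured tokens wander, so $r(t)$ increases and the per-unit-time capture probability strictly rises with $t$; combined with the fact that tokens also coalesce with one another, thereby reducing their number below $n$, this should suppress the tail enough to keep the expected maximum at the same order. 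Making this last point precise is exactly the delicate, non-rigorous part, consistent with the ``naive analysis'' the proposition advertises; the remaining computations (the Chebyshev concentration, the $\mathbb{Z}^2$-to-torus transfer for $k=\Theta(n)$, and the tail integration along the lines of \eqref{eq2a}) are routine.
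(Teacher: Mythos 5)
Your proposal follows essentially the same route as the paper's proof: Lemma \ref{lemma3_5} to size the visited set of $ID_1$ after $k=\Theta(n)$ steps, the disc approximation of Fig.~\ref{fig11} with radius $\sqrt{k/(n\log k)}$, and Lemma \ref{lemma3_4} to obtain a $\Theta(n\log\log n)$ hitting time, followed by an $O(n)$ chase inside the event horizon. The extra care you take (the Chebyshev concentration for $|EH1(k)|$ and the explicit acknowledgement that passing from a single $T_{coal}(ID_i)$ to $T_{run}(n)=\max_i T_{coal}(ID_i)$ is the unresolved step) only makes explicit the gaps that the paper's own ``naive analysis'' silently glosses over.
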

\begin{proof}
Consider the token $ID_1$. From Lemma \ref{lemma3_5}, $\frac{\pi k}{\log{k}}$ number of nodes are visited on average by token $ID_1$ after $k$ steps. To simplify the analysis, we approximate the region of visited nodes with a disc of radius $\sqrt{k/n\log{k}}$ on a unit torus (see Fig. \ref{fig11}). Hence, after $k=\beta n$ steps, radius of the disc would be $\sqrt{\beta/\log(\beta n)}$ where $\beta<<1$. Furthermore, any other token $ID_i$ ($i\geq 2$) walks randomly or follows another token on a trajectory of a random walk. Hence, from Lemma \ref{lemma3_4}, token $ID_i$ hits the disc after $\Theta (n\log(\log(n)))$ average time units if it does not coalesce with any other token during this time interval. Following that, at most $2n$ time slots are required to reach token $ID_1$. Therefore, the time complexity is of the order $O(n\log(\log(n)))$.
\end{proof}

\section{Robustness Analysis}
In this section, we study the robustness of CRW and TCM algorithms. In the literature of distributed systems, identifying robust algorithms is done mostly from a qualitative rather than quantitative perspective. For instance, there is a common belief that gossip algorithms have a robust structure against network perturbations such as node failures or time-varying topologies \cite{Dimakis2010b}. Nevertheless, this advantage is achieved by huge time and message complexities \cite{Dimakis2010b}.

To the best of our knowledge, there exist a few works \cite{gupta2006robustness,leblanc2013resilient} on analyzing the robustness of distributed function computation (DCF) algorithms. One of the main challenges is that it is difficult to devise a well defined robustness metric. Despite the challenges, there exist some methodologies for defining a robustness metric in a computing system \cite{ali2004measuring,shestak2008stochastic}. Here, we follow the same approach in these methodologies. To do so, three steps should be taken:

1) First, a metric should be considered for the system performance. In our case, we consider it as the probability of successful computation at the end of the algorithm, i.e. $\Pr \{v_i=f(v_1^0,\cdots,v_n^0), \forall i \in \{1,\cdots,n\} ,\mbox{ node $i$ has not failed} \}$ where $v_i$ is the output of node $i$. Note that the correct result is a function of initial values of whole nodes. 

2) In the second step, network perturbations should be modelled. In the CRW and TCM algorithms, the final result may be corrupted if an active node fails. Thus, studying the impact of such event on the robustness of these algorithms is quite important. In order to model node failures, we assume that each node may crash according to exponential distribution with rate $\lambda$. Therefore, the average lifespan of a node is $1/\lambda$. As a result, at most $n\times (1-e^{-\lambda \mathbb{E}\{T_{run}(n)\}})$ number of nodes fail on average. We assume that the expected number of crashed nodes during the execution of the algorithm is at most a small fraction of network size, i.e. $\lambda \mathbb{E}\{T_{run}(n)\}<-\log(1-\alpha)\approx \alpha$ where $\alpha<<1$.

3) At the end, it should be identified how much perturbation the algorithm can tolerate such that the performance metric remains in an acceptable region. For this purpose, we define the following robustness metric.

\begin{mydef} The robustness metric, $r(\epsilon)$, is defined by the following equation:
\begin{align}
\nonumber r(\epsilon)&\triangleq \max \lambda_0 \\
 s.t. \Pr \{v_i=f(v_1^0,\cdots,v_n^0), \forall i  \in \{1,\cdots &,n\} , \mbox{ {\small node $i$ has not failed}}| \lambda=\lambda_0\}\geq 1-\epsilon,
\end{align}
\label{def4_1}
\end{mydef}
Intuitively, the robustness metric shows maximum failure rate which an algorithm can tolerate such that the probability of successful computation is greater than a desired threshold, $1-\epsilon$. 
In order to execute CRW and TCM algorithms in the presence of node failure, it is assumed that each token chooses a random neighbor node for the next clock tick, if the contacting node at the current moment has been failed. 

\subsection{Robustness of CRW algorithm in complete graphs}
We first derive the probability that node $i$ is active at time $t$, i.e. $\Pr\{\zeta_i(t)=1\}$.

\begin{mylm} In the non-failure scenario, node $i$ is active at time $t$ with probability $\Pr\{\zeta_i(t)=1\}=1/(t+1)$.
\label{lemma4_1}
\end{mylm}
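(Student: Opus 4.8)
The plan is to collapse the question to a single scalar ordinary differential equation for $p(t):=\Pr\{\zeta_i(t)=1\}$ and solve it. First I would exploit symmetry: since all nodes start identically and the complete graph, together with the i.i.d.\ rate-one Poisson clocks, is invariant under relabelling of vertices, $\Pr\{\zeta_i(t)=1\}$ does not depend on $i$, so I may write $p(t)$ unambiguously and record that the expected number of circulating tokens satisfies $\mathbb{E}\{N(t)\}=n\,p(t)$, where $N(t)$ is the number of active nodes at time $t$.

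Next I would write an instantaneous balance for a single node. An active node loses its token (becomes inactive) exactly when its own clock ticks, i.e.\ at rate $1$. An inactive node becomes active precisely when one of the currently active nodes ticks and routes its token to it; on the complete graph each active node selects a given vertex with probability $1/(n-1)$, so an inactive node is hit at rate $\mathbb{E}\{N(t)\}/(n-1)$. Collecting gain and loss terms yields the forward equation
\begin{equation}
p'(t)=\bigl(1-p(t)\bigr)\frac{\mathbb{E}\{N(t)\}}{n-1}-p(t),\qquad p(0)=1.
\end{equation}
Substituting $\mathbb{E}\{N(t)\}=n\,p(t)$ and using $n/(n-1)\to 1$ for large $n$, the right-hand side reduces to $(1-p)p-p=-p^2$, i.e.\ the Riccati equation $p'(t)=-p^2(t)$. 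The same equation arises from the token-count picture: the coalescence chain used in the proof of Theorem~\ref{th3_1} leaves state $k$ at rate $k(k-1)/(n-1)$, so the fluid approximation $\dot N=-N(N-1)/(n-1)\approx -N^2/n$ with $N(0)=n$ gives $N(t)=n/(t+1)$ and hence $p(t)=N(t)/n=1/(t+1)$. Solving $p'=-p^2$ with $p(0)=1$ is then immediate: separating variables gives $1/p(t)=t+1$, so $p(t)=1/(t+1)$, as claimed.

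I expect the only delicate step to be the decoupling that turns the exact balance into the autonomous equation $p'=-p^2$. Two approximations are buried there: replacing the hitting rate of an inactive node by $\mathbb{E}\{N(t)\}/(n-1)$, which neglects the correlation between the event $\{\zeta_i=0\}$ and the count $N(t)$, and replacing $n/(n-1)$ by $1$; in the token-count route this is the same as replacing $\mathbb{E}\{N(N-1)\}$ by $\mathbb{E}\{N\}^2$, i.e.\ dropping $\mathrm{Var}(N)$ and the $O(1/n)$ drift correction. These are precisely the propagation-of-chaos / fluid-limit approximations, exact as $n\to\infty$, and could be underpinned by a Kurtz-type concentration argument showing that $N(t)/n$ stays close to the ODE solution. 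For small $n$ there are genuine corrections; for instance a direct two-state computation at $n=2$ gives $p(t)=(1+e^{-2t})/2$, which agrees with $1/(t+1)$ for small $t$ but deviates as $t$ grows (tending to $1/2$ rather than $0$). I would therefore read the stated equality as the large-$n$ (fluid) value and rely on the fluid-limit justification above.
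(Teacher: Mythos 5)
Your proposal is correct and follows essentially the same route as the paper, which also invokes symmetry on the complete graph and a mean-field argument to arrive at $\frac{dp(t)}{dt}=-p^2(t)$ with $p(0)=1$, hence $p(t)=1/(t+1)$. Your version is simply more explicit about the gain/loss bookkeeping and about the fluid-limit approximations (dropping correlations and the $n/(n-1)$ factor) that the paper's appeal to the ``mean field theorem'' leaves implicit; the $n=2$ counterexample correctly shows the statement should be read as a large-$n$ approximation.
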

\begin{proof}
We use the mean field theorem to calculate the probability $p(t)=\Pr\{\zeta_i(t)=1\}$ (for more on mean field theorem, see \cite{takayasu1992extinction}). Due to symmetry property of the complete graphs, each node is active at time $t$ with the same probability $p(t)$. Thus, the portion of active nodes will decrease with rate $-p^2(t)$. Therefore, we have: $\frac{dp(t)}{dt}=-p^2(t)$. By solving the differential equation and considering the fact that $p(0)=1$, we have: $p(t)=1/(t+1)$ and $\mathbb{E}\{c(t)\}=n/(t+1)$ where $c(t)=\displaystyle\sum_{i=1}^n\zeta_i(t)$ is the the number of active nodes at time $t$.
\end{proof}

\begin{mylm} In the CRW algorithm, the probability of successful computation is greater than $n^{-\lambda n}$ for the node failure rate $\lambda <\alpha/\mathbb{E}\{T_{run}(n)\}$.
\label{lemma4_2}
\end{mylm}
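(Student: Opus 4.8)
The plan is to express the event of successful computation as the event that no token is ever lost, that is, that no \emph{active} node ever fails while it is carrying a token, and then to bound the probability of this event by integrating the instantaneous loss hazard against the mean number of active nodes supplied by Lemma~\ref{lemma4_1}. First I would condition on a realization of the coalescing trajectory $\{c(t)\}$, the number of active nodes as a function of time, which decreases from $n$ down to $1$. Given this trajectory and given that no loss has occurred so far, every one of the $c(t)$ token-holding nodes is alive and fails at rate $\lambda$, so the instantaneous rate of a token-loss event is exactly $\lambda\,c(t)$. Summing the independent per-node active durations $|A_i|$ (with $\sum_i |A_i|=\int_0^{T_{run}} c(t)\,dt$), the conditional no-loss probability is the survival probability of a nonhomogeneous Poisson process,
\begin{equation}
\Pr\{\text{success}\mid\{c(t)\}\}=\prod_{i=1}^{n} e^{-\lambda |A_i|}=\exp\Big(-\lambda\int_0^{T_{run}}c(t)\,dt\Big).
\label{eqplan1}
\end{equation}

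Next I would take expectations over the trajectory and apply Jensen's inequality to the convex map $x\mapsto e^{-\lambda x}$, which gives the lower bound
\begin{equation}
\Pr\{\text{success}\}=\mathbb{E}\Big[\exp\Big(-\lambda\int_0^{T_{run}}c(t)\,dt\Big)\Big]\geq \exp\Big(-\lambda\,\mathbb{E}\Big\{\int_0^{T_{run}}c(t)\,dt\Big\}\Big).
\label{eqplan2}
\end{equation}
The remaining expectation is evaluated by decomposing the integral over the sojourn times $T_k$ of the Markov chain of Fig.~\ref{figmarkov}: since $c(t)=k$ throughout state $k$ and $\mathbb{E}\{T_k\}=(n-1)/(k(k-1))$ from the proof of Theorem~\ref{th3_1}, one obtains $\mathbb{E}\{\int_0^{T_{run}}c(t)\,dt\}=\sum_{k=2}^{n} k\,\mathbb{E}\{T_k\}=\sum_{k=2}^{n}(n-1)/(k-1)\approx (n-1)(\log(n-1)+0.577)$. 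This is precisely the $\Theta(n\log n)$ message-complexity sum of Theorem~\ref{th3_1}, so substituting into (\ref{eqplan2}) yields $\Pr\{\text{success}\}\geq e^{-\lambda n\log n}=n^{-\lambda n}$, as claimed; the hypothesis $\lambda<\alpha/\mathbb{E}\{T_{run}(n)\}$ with $\mathbb{E}\{T_{run}(n)\}\approx n$ then keeps this bound in the polynomial regime, since $n^{-\lambda n}>n^{-\alpha}$.

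The step I expect to be the main obstacle is the justification of (\ref{eqplan1}), because the node-failure clocks are not genuinely independent of the coalescing trajectory: a node that fails while \emph{inactive} causes no immediate token loss but does perturb the subsequent random walks, since tokens must route around the dead node under the failure rule, so $\{c(t)\}$ and the failure times are coupled, and the per-node factorization in (\ref{eqplan1}) is only exact under a decoupling assumption. In the regime $\lambda\,\mathbb{E}\{T_{run}(n)\}<\alpha\ll 1$ the expected number of crashed nodes, $n(1-e^{-\lambda \mathbb{E}\{T_{run}(n)\}})\approx \alpha n$, is a small fraction of $n$, and I would argue that this coupling shifts $\mathbb{E}\{c(t)\}$ only by a lower-order amount, so that treating the loss hazard as $\lambda\,c(t)$ against the failure-free trajectory introduces only a negligible constant-factor correction, consistent with the naive-analysis level of rigor used elsewhere in the paper. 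A fully rigorous version would replace this heuristic with a stochastic-domination or coupling argument showing that failures can only remove active nodes and hence only reduce $\int_0^{T_{run}} c(t)\,dt$, which would preserve the direction of the inequality and make the lower bound $n^{-\lambda n}$ exact.
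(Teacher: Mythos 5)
Your argument is correct at the paper's level of rigor, but it reaches the bound by a genuinely different route. The paper works forward in time: it writes a differential equation $dP_{succ}/dt=-P_{succ}(t)\,\lambda\,\mathbb{E}\{c(t)\}$ using the mean-field formula $\mathbb{E}\{c(t)\}=n/(t+1)$ from Lemma~\ref{lemma4_1}, solves it to get $P_{succ}(t)=(t+1)^{-\lambda n}$, and only then applies Jensen's inequality to the convex map $t\mapsto(t+1)^{-\lambda n}$ over the random $T_{run}(n)$, using $\mathbb{E}\{T_{run}(n)\}\le n-1$. You instead condition on the entire trajectory $\{c(t)\}$ to get $\exp(-\lambda\int_0^{T_{run}}c(t)\,dt)$, apply Jensen once to $x\mapsto e^{-\lambda x}$ against the integrated hazard, and evaluate $\mathbb{E}\{\int_0^{T_{run}}c(t)\,dt\}=\sum_{k=2}^{n}k\,\mathbb{E}\{T_k\}$ exactly from the Markov chain of Theorem~\ref{th3_1}, bypassing Lemma~\ref{lemma4_1} and the mean-field ODE altogether. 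Your version is cleaner in two respects: the Jensen step acts on a well-defined random variable rather than conflating the conditional and unconditional success probabilities as the paper's $\mathbb{E}_{T_{run}}\{P_{succ}(T_{run})\}$ does, and you explicitly flag the decoupling of failure clocks from the coalescing trajectory, which the paper assumes silently in its step (a). What the paper's route buys is the time-indexed formula $P_{succ}(t)=\exp(-\lambda\int_0^t\mathbb{E}\{c(\tau)\}d\tau)$, which it reuses verbatim for the TCM case in Lemma~\ref{lemma4_4}. One small arithmetic caveat: your exact sum is $(n-1)H_{n-1}\approx n\log n+0.577\,n$, which slightly exceeds $n\log n$, so strictly you obtain $n^{-\lambda n}e^{-\Theta(\lambda n)}=n^{-\lambda n}(1-O(\alpha))$ rather than $n^{-\lambda n}$ itself; under the hypothesis $\lambda n<\alpha\ll 1$ this is a negligible constant factor, and no worse than the approximations already present in the paper's own derivation, but if you want the stated constant exactly you should either drop the Euler--Mascheroni term as lower order or weaken the claim to $\Pr\{\text{success}\}\ge n^{-\lambda n(1+o(1))}$.
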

\begin{proof}
The function computation is successful iff none of active nodes fail up to time $T_{run}(n)$.\footnote{In controlled flooding mechanism, the value of last active node is broadcasted to all nodes. Thus, node failures have negligible impact on the final result in this phase and we neglect it in our analysis.} 
 Let $F_{[t_0,t_1)}$ be the event that none of active nodes fails in the time interval $[t_0,t_1)$. Thus, the probability $P_{succ}(t)\triangleq \Pr\{F_{[0,t)}\},(t<T_{run}(n))$, satisfies the following equation:
\begin{align}
\nonumber P_{succ}(t+dt)&=P_{succ}(t)\times\Pr\{F_{[t,t+dt)}|F_{[0,t)}\},
\\ \nonumber &=P_{succ}(t)\times \mathbb{E}_{c(t)}\left\lbrace\Pr\{F_{[t,t+dt)}|c(t), F_{[0,t)}\}\right\rbrace,
\\ \nonumber &=^{a}P_{succ}(t) \times\mathbb{E}_{c(t)}\{ e^{-\lambda c(t)dt}\},
\\ \nonumber &=P_{succ}(t) \times \mathbb{E}_{c(t)}\{1-\lambda c(t) dt\}+ O(dt^2),
\\ \nonumber &=^{b}P_{succ}(t) \times (1-\frac{\lambda n}{t+1} dt).
\end{align}
(a) From property of exponential distribution considered in modelling node failures.
\\(b) We assume that $\mathbb{E}\{c(t)\}\approx n/(t+1)$ is not affected by missing a small fraction of nodes.

Therefore, we have:
\begin{equation}
\frac{dP_{succ}(t)}{dt}=-P_{succ}(t)\frac{\lambda n}{t+1}.
\end{equation}

By solving the above differential equation, we have: $P_{succ}(t)=(t+1)^{-\lambda n}$. Hence, we can obtain a lower bound on the probability of successful computation, $P_{succ}$, as follows:
\begin{equation}
P_{succ}= \mathbb{E}_{T_{run}(n)}\big\{P_{succ}\big(T_{run}(n)\big)\big\}\geq (\mathbb{E}\{T_{run}(n)\}+1)^{-\lambda n}\geq n^{-\lambda n}.
\label{eqrobust1}
\end{equation}
The above inequality holds due to Jensen's inequality and considering the fact that function $f(x)=(x+1)^{-n\lambda}, x>0$ is convex. 
\end{proof}

After some manipulations, it can be easily verified that: $r(\epsilon)> \log((1-\epsilon)^{-1})/(n\log (n))$. Hence, the single CRW can tolerate failure rates of order $O(1/(n\log(n)))$. But, how can we improve the performance of this algorithm such that it tolerates failure rates of order $\alpha/\mathbb{E}\{T_{run}(n)\}=\alpha/n$? One effective solution is to run multiple CRWs in parallel. More specifically, we run $R$ instances of CRW algorithm denoted by $1,\dots,R$; As a result, if an active node fails in some instances of the CRW algorithm, it might be inactive in the other instances and those instances survive from that node failure. 

In order to run multiple instances of the algorithm, tokens carry the index of the corresponding instance in the execution of the algorithm and can only coalesce with token of the same index. At the end of the algorithm, nodes decide on the output of an instance which includes as many values as possible in computing the target function. To do so, we can assume that each node $i$ has a count parameter $size(i)$ which is equal to one at the beginning of the algorithm (see section II). The sum of these count parameters is obtained alongside computing the target function of initial values for each instance of the algorithm. Nodes decide on the output of instance with maximum count parameter.

\begin{mylm} To tolerate the failure rate of $\alpha/n$ and get the correct result with probability $1-\epsilon$, the number of instances of the CRW algorithm should be greater than:
\begin{align}
R>\log(\epsilon^{-1})n^{\alpha}.
\end{align}
\label{lemma4_3}
\end{mylm}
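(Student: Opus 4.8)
The plan is to reduce the multi-instance analysis to the single-instance bound of Lemma \ref{lemma4_2} together with the independence of the parallel instances. First I would observe that running $R$ copies with independent clocks and independent random port choices makes the $R$ computations mutually independent; moreover, since each token carries its instance index and tokens coalesce only within the same index, no instance interferes with another. Setting the failure rate to $\lambda=\alpha/n$ and recalling that $\mathbb{E}\{T_{run}(n)\}=\Theta(n)$ for the CRW algorithm on complete graphs (Theorem \ref{th3_1}), Lemma \ref{lemma4_2} gives that a single instance computes the function correctly with probability at least $n^{-\lambda n}=n^{-\alpha}$, so a single instance fails with probability at most $1-n^{-\alpha}$.

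Next I would argue that the count-based decision rule turns ``at least one instance succeeds'' into overall success. A successful instance is one in which a single active node remains having aggregated all $n$ initial values, so its accumulated parameter $size$ equals $n$; since no instance can exceed count $n$, whenever at least one instance succeeds the maximum-count instance is itself a successful one, and every node outputs $f_n(v_1^0,\dots,v_n^0)$. Hence the probability of overall success is at least $1-\Pr\{\text{all }R\text{ instances fail}\}$, and by independence this is at least $1-(1-n^{-\alpha})^{R}$.

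Finally I would enforce the target reliability $1-\epsilon$. Using the elementary bound $1-x\le e^{-x}$, I get $(1-n^{-\alpha})^{R}\le e^{-R n^{-\alpha}}$, so it suffices to require $e^{-R n^{-\alpha}}\le\epsilon$, i.e. $R\,n^{-\alpha}\ge\log(\epsilon^{-1})$, which rearranges to $R>\log(\epsilon^{-1})\,n^{\alpha}$, exactly the claimed bound.

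I expect the main obstacle to be the second step rather than the arithmetic. One must justify carefully that the per-instance failure events are genuinely independent, so that the product bound $(1-n^{-\alpha})^{R}$ applies, and that the count-based tie-breaking rule never selects a corrupted instance over a correct one (in particular, that a \emph{failed} instance cannot spuriously report count $n$). The probabilistic content—converting the per-instance success probability $n^{-\alpha}$ into an instance count that is logarithmic in $\epsilon^{-1}$ and polynomial in $n$—is then a one-line application of $1-x\le e^{-x}$.
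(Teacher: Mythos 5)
Your proposal is correct and follows essentially the same route as the paper: set $\lambda=\alpha/n$, invoke Lemma \ref{lemma4_2} to get per-instance success probability at least $n^{-\alpha}$, assume the $R$ instances are independent so overall success is at least $1-(1-n^{-\alpha})^{R}$, and solve for $R$. Your final step via $1-x\le e^{-x}$ is a slightly cleaner (exact rather than approximate) version of the paper's $\log(\epsilon)/\log(1-n^{-\alpha})\approx\log(\epsilon^{-1})n^{\alpha}$, and your explicit caveats about independence and the count-based tie-breaking are reasonable points that the paper itself only addresses by assumption.
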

\begin{proof} 
Assuming that the multiple instances are approximately independent and considering $\lambda= \alpha/n$ and Lemma \ref{lemma4_2}, the probability of successful computation of the target function with $R$ instances of CRW algorithm is greater than:
\begin{align}
\nonumber 1-(1-&n^{-\alpha})^R\geq 1-\epsilon,
\\ \rightarrow R \geq &\frac{\log(\epsilon)}{\log(1-n^{-\alpha})}\approx \log(\epsilon^{-1})n^{\alpha}.
\end{align}
\end{proof}
\begin{mycol}The CRW algorithm is robust against failing $\alpha$ fraction of nodes by running $O(n^\alpha)$ instances of CRW algorithm in parallel. Thus, the message complexity is of the order $O(n^{1+\alpha}\log(n))$. Since $\alpha<<1$, this solution imposes low message overhead.
\label{col4_1}
\end{mycol}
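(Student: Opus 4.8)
The plan is to read off the corollary directly from Lemma~\ref{lemma4_3} and the single-instance cost established in Theorem~\ref{th3_1}, with only a small amount of bookkeeping to translate between the failure rate $\lambda=\alpha/n$ and the phrase ``$\alpha$ fraction of nodes.'' First I would fix the confidence level $1-\epsilon$ as a constant that does not scale with $n$, so that $\log(\epsilon^{-1})$ contributes only an $O(1)$ factor. Lemma~\ref{lemma4_3} then states that $R>\log(\epsilon^{-1})n^{\alpha}$ instances suffice to tolerate the failure rate $\lambda=\alpha/n$ while still returning the correct output with probability at least $1-\epsilon$; hence $R=O(n^{\alpha})$. To see that this rate corresponds to losing an $\alpha$ fraction of the nodes, I would recall from the robustness setup that the expected fraction of nodes failing during execution is $1-e^{-\lambda\mathbb{E}\{T_{run}(n)\}}$. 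Since $\mathbb{E}\{T_{run}(n)\}=\Theta(n)$ for the CRW algorithm by Theorem~\ref{th3_1}, the choice $\lambda=\alpha/n$ gives $\lambda\mathbb{E}\{T_{run}(n)\}\approx\alpha$ and therefore an expected failure fraction $\approx 1-e^{-\alpha}\approx\alpha$, which is exactly the regime named in the statement.

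For the message complexity I would argue that running $R$ instances in parallel simply sums their individual message counts: tokens carry an instance index and never coalesce across instances, and every token passage in every instance is charged one message, so there is no cross-instance sharing or saving. By Theorem~\ref{th3_1} each CRW instance on a complete graph costs $\Theta(n\log(n))$ messages, so the total is $R\cdot\Theta(n\log(n))=O(n^{\alpha})\cdot\Theta(n\log(n))=O(n^{1+\alpha}\log(n))$. The closing remark then follows because $\alpha\ll 1$ makes the multiplicative overhead $n^{\alpha}$ grow much more slowly than $n$, so the parallel scheme buys robustness at sub-linear extra cost in the number of nodes.

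The step that deserves the most care is the independence hypothesis that Lemma~\ref{lemma4_3} invokes and that this corollary inherits. The $R$ copies are driven by a single physical crash process: when a node crashes it does so simultaneously for all instances, so the ``instance $i$ is corrupted'' events are coupled through the shared set of crash times and crashed nodes. What rescues the argument is that, \emph{conditioned} on this crash process, whether a crashed node happened to be active in instance $i$ is governed by instance $i$'s own Poisson clocks and random port choices, which are independent across instances; the corruption events therefore factorize conditionally, and the only residual coupling is through which nodes crash and when. I would treat this residual coupling exactly as in Lemma~\ref{lemma4_3}, invoking the ``approximately independent'' assumption already made there rather than re-deriving a correlation-aware bound. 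Beyond this point the corollary is pure arithmetic.
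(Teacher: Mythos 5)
Your proposal is correct and matches the paper's intent: the paper states Corollary~\ref{col4_1} as an immediate consequence of Lemma~\ref{lemma4_3} (which gives $R=O(n^{\alpha})$ for constant $\epsilon$) multiplied by the per-instance message complexity $\Theta(n\log(n))$ from Theorem~\ref{th3_1}, exactly as you do. Your additional discussion of the conditional factorization behind the ``approximately independent'' assumption is more careful than the paper, which simply asserts that approximation, but it does not change the route.
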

\subsection{Robustness of TCM algorithm in complete graphs}
To study the robustness of TCM algorithm, we first need to obtain the average percentage of active nodes at time $t$. However, deriving $\mathbb{E}\{c(t)\}/n$ for TCM algorithm in complete graphs is not an easy task as the one for the CRW algorithm. Since it is required to compute the following sum:
\begin{equation}
\mathbb{E}\{c(t)\}=\frac{1}{n}\displaystyle\sum_{i=1}^n\Pr\{T_{coal}(ID_i)>t\},
\end{equation}
where obtaining $\Pr\{T_{coal}(ID_i)>t\}, \forall i\in\{2,\cdots,n\}$ (or even bounds on them) is quite challenging. In order to simplify the analysis, we consider a form of function $\mathbb{E}\{c(t)\}/n\approx \log_2(t+2)/(at^2+bt+1)$ where $a=0.23$ and $b=1.8$. The reason for choosing this form is that the average running time is of the order $O(\sqrt{n\log(n)})$ and it can also be fitted properly to the simulation results\footnote{From simulation results, the root mean square error (RMSE) of fitted function is less than $10^{-3}$ for all $n\in [100,2500]$.}. According to this assumption, we can derive the probability of successful computation by the following lemma.

\begin{mylm} The probability of successful computation by TCM algorithm is greater than $ e^{-\gamma n\lambda }$ in complete graphs where $\gamma\approx 4.13$.
\label{lemma4_4}
\end{mylm}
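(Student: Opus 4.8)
The plan is to follow the same differential-equation approach used for the CRW algorithm in Lemma \ref{lemma4_2}, but now with the TCM-specific expression for the expected number of active nodes. First I would let $P_{succ}(t)\triangleq \Pr\{F_{[0,t)}\}$ denote the probability that no active node fails during $[0,t)$, exactly as before. Conditioning on the number of active nodes $c(t)$ at time $t$ and using the memorylessness of the exponential failure distribution, I would write
\begin{equation}
P_{succ}(t+dt)=P_{succ}(t)\times \mathbb{E}_{c(t)}\{e^{-\lambda c(t)dt}\}=P_{succ}(t)\times\left(1-\lambda\,\mathbb{E}\{c(t)\}\,dt\right)+O(dt^2),
\end{equation}
which yields the differential equation $\frac{dP_{succ}(t)}{dt}=-\lambda\,\mathbb{E}\{c(t)\}\,P_{succ}(t)$. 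The only difference from the CRW case is that the coefficient $\mathbb{E}\{c(t)\}$ is no longer $n/(t+1)$ but instead the fitted TCM form $\mathbb{E}\{c(t)\}\approx n\log_2(t+2)/(at^2+bt+1)$ with $a=0.23$, $b=1.8$.

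Solving this equation gives $P_{succ}(t)=\exp\!\left(-\lambda n\int_0^t \frac{\log_2(s+2)}{as^2+bs+1}\,ds\right)$. The crucial observation is that, because the running time is $O(\sqrt{n\log n})$ and the density of active nodes decays like $\log_2(t)/t^2$, the integral $\int_0^\infty \frac{\log_2(s+2)}{as^2+bs+1}\,ds$ converges to a finite constant. I would evaluate (or numerically bound) this improper integral and identify its value with $\gamma\approx 4.13$. Thus for every finite $t$ we have the uniform lower bound $P_{succ}(t)\geq \exp(-\gamma n\lambda)$, since the integral up to $t$ never exceeds $\gamma$. Finally, averaging over the random running time $T_{run}(n)$ gives
\begin{equation}
P_{succ}=\mathbb{E}_{T_{run}(n)}\{P_{succ}(T_{run}(n))\}\geq e^{-\gamma n\lambda},
\end{equation}
which is the claimed bound.

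The main obstacle is justifying the finite constant $\gamma$ rigorously, since it rests entirely on the heuristic fit $\mathbb{E}\{c(t)\}/n\approx \log_2(t+2)/(at^2+bt+1)$ rather than on a proven expression for $\Pr\{T_{coal}(ID_i)>t\}$; the paper itself acknowledges that obtaining these tail probabilities exactly is ``quite challenging.'' Consequently the proof is only as rigorous as the goodness of the simulation fit (the quoted RMSE $<10^{-3}$), and the convergence of the integral is what converts the fast $O(\sqrt{n\log n})$ running time into an $O(1)$ multiplier on $n\lambda$ in the exponent — in sharp contrast to the CRW case, where the slow $n/(t+1)$ decay produced the much weaker polynomial bound $n^{-\lambda n}$. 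I would note that this improved scaling is precisely what later allows the TCM algorithm to tolerate the failure rate $\alpha/n$ with only $O(1)$ parallel instances.
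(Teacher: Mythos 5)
Your derivation is correct and reaches the paper's bound, but your final step is genuinely simpler than the one the paper uses. The paper, after arriving at the same expression $P_{succ}(t)=\exp\bigl(-\lambda\int_0^t \mathbb{E}\{c(\tau)\}\,d\tau\bigr)$ with the same fitted form for $\mathbb{E}\{c(t)\}$, handles the expectation over the random running time via Jensen's inequality: it checks that $h(t)=e^{-\lambda t}$ is convex and non-increasing and that $g(t)=\int_0^t \mathbb{E}\{c(\tau)\}\,d\tau$ is concave, concludes that $P_{succ}(t)=h(g(t))$ is convex, and thus bounds $\mathbb{E}_{T_{run}(n)}\{P_{succ}(T_{run}(n))\}\geq P_{succ}(\mathbb{E}\{T_{run}(n)\})$ before finally bounding the integral up to $\mathbb{E}\{T_{run}(n)\}$ by $\gamma=\int_0^\infty \frac{\log_2(\tau+2)}{a\tau^2+b\tau+1}\,d\tau$. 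You instead observe that the integral up to \emph{any} finite $t$ is already at most $\gamma$, giving the uniform pointwise bound $P_{succ}(t)\geq e^{-\gamma n\lambda}$, from which the bound on the expectation is immediate — no convexity or Jensen needed. Your route is more elementary and avoids a step that is only truly necessary in the CRW case (Lemma \ref{lemma4_2}), where the integral diverges and one must compare $\mathbb{E}\{(T+1)^{-\lambda n}\}$ with $(\mathbb{E}\{T\}+1)^{-\lambda n}$. Both arguments share the same (acknowledged) non-rigorous foundation, namely the simulation-fitted form of $\mathbb{E}\{c(t)\}/n$, so neither is more rigorous than the other on that front; you correctly identify this as the real weak point of the lemma.
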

\begin{proof}
By the same arguments in the proof of Lemma \ref{lemma4_2}, we have:
\begin{equation}
P_{succ}(t)=\exp\bigg(-\lambda \int_{0}^t \mathbb{E}\{c(\tau)\}d\tau\bigg).
\end{equation}

Since $h(t)=e^{-\lambda t}$ is convex and non-increasing and $g(t)=\int_{0}^t \mathbb{E}\{c(\tau)\}d\tau$ is concave ($\frac{d}{dt} \mathbb{E}\{c(t)\}<0, t>0$), the $P_{succ}(t)=h(g(t))$ is convex. Hence, we have from Jensen's inequality:
\begin{align}
\nonumber P_{succ}=\mathbb{E}_{T_{run}(n)}\big\{P_{succ}\big(T_{run}(n)\big)\big\}&\geq \exp\bigg(-n \lambda \int_{0}^{\mathbb{E}\{T_{run}(n)\}} \frac{\log_2(\tau+2)}{a\tau^2+b \tau+1}d\tau\bigg)
\\ &\geq e^{-\gamma n\lambda},
\label{eqrobust2}
\end{align}
where $\int_{0}^{\mathbb{E}\{T_{run}(n)\}} \frac{\log_2(\tau+2)}{a\tau^2+b \tau+1}d\tau \leq  \int_{0}^{\infty} \frac{\log_2(\tau+2)}{a\tau^2+b \tau+1}d\tau=\gamma$. 
\end{proof}

\begin{mycol} From Lemma \ref{lemma4_4}, we can see that $r(\epsilon)$ is at least $\epsilon/(\gamma n)$ for a single TCM algorithm. Similar to the CRW algorithm, we can run multiple instances of TCM algorithm in parallel to improve its robustness. In order to tolerate the failure rate of $\alpha/n$, the required number of instances running in parallel should be of the order $O(1)$.
\label{col4_2}
\end{mycol}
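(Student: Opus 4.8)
The plan is to read both claims directly off the single-instance bound established in Lemma \ref{lemma4_4}, namely $P_{succ}\geq e^{-\gamma n\lambda}$. For the first claim, I would substitute this bound into Definition \ref{def4_1}: it suffices to find the largest $\lambda$ for which the lower bound alone already forces $P_{succ}\geq 1-\epsilon$. Imposing $e^{-\gamma n\lambda}\geq 1-\epsilon$ and solving gives $\lambda\leq -\log(1-\epsilon)/(\gamma n)$. Since $-\log(1-\epsilon)=\epsilon+\epsilon^2/2+\cdots\geq \epsilon$, every $\lambda$ up to $\epsilon/(\gamma n)$ satisfies the constraint, so $r(\epsilon)\geq \epsilon/(\gamma n)$. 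This is one short monotone computation and carries no real difficulty.

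For the second claim, the key observation — and the only conceptual step — is what happens to the exponent at the target rate $\lambda=\alpha/n$. Substituting into Lemma \ref{lemma4_4} collapses $\gamma n\lambda$ to the \emph{constant} $\gamma\alpha$, so a single instance of TCM succeeds with probability at least $e^{-\gamma\alpha}$, a quantity that does not decay with $n$. This is precisely where TCM departs from CRW: in Lemma \ref{lemma4_2} the analogous single-instance bound was $n^{-\lambda n}=n^{-\alpha}$ at $\lambda=\alpha/n$, which vanishes and thereby forces $O(n^{\alpha})$ copies in Lemma \ref{lemma4_3}. Because the TCM exponent is free of $n$, the same parallel-repetition argument must produce a constant.

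I would then mirror the proof of Lemma \ref{lemma4_3} step for step. Running $R$ instances treated as approximately independent, the probability that the target function is computed correctly in at least one instance is at least $1-(1-e^{-\gamma\alpha})^R$. Setting this $\geq 1-\epsilon$ and solving yields
\[
R\geq \frac{\log(\epsilon)}{\log(1-e^{-\gamma\alpha})}=\frac{\log(\epsilon^{-1})}{\log\!\big(1/(1-e^{-\gamma\alpha})\big)}.
\]
With $\gamma\approx 4.13$, $\alpha$, and $\epsilon$ all fixed constants independent of $n$, the right-hand side is a constant, giving $R=O(1)$ as claimed.

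The proof has essentially no technical obstacle; all the substance lives in the already-established Lemma \ref{lemma4_4}, and the novelty is the single observation that the $n$ in the exponent cancels against $\lambda=\alpha/n$. The one point deserving care is the modeling assumption — carried over verbatim from Lemma \ref{lemma4_3} — that the $R$ parallel instances are approximately independent. In reality the instances share the same physical node-failure events, so their success indicators are correlated; I would flag that this is the identical assumption already adopted for CRW, so that the $O(1)$-versus-$O(n^{\alpha})$ comparison between TCM and CRW is made on a consistent footing.
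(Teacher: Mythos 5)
Your proposal is correct and follows exactly the route the paper intends: the paper leaves Corollary IV.2 unproved because it is the direct analogue of the CRW case, obtained by substituting the bound $P_{succ}\geq e^{-\gamma n\lambda}$ from Lemma IV.4 into Definition IV.1 (using $-\log(1-\epsilon)\geq\epsilon$) and by repeating the parallel-repetition computation of Lemma IV.3 with the constant single-instance success probability $e^{-\gamma\alpha}$ in place of the vanishing $n^{-\alpha}$. Your flag about the shared failure events across instances is the same approximate-independence assumption the paper already adopts for CRW, so the comparison is on consistent footing.
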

\begin{figure}[!t]
    \centering
    \subfigure[The TCM algorithm]
    {
        \includegraphics[width=3in]{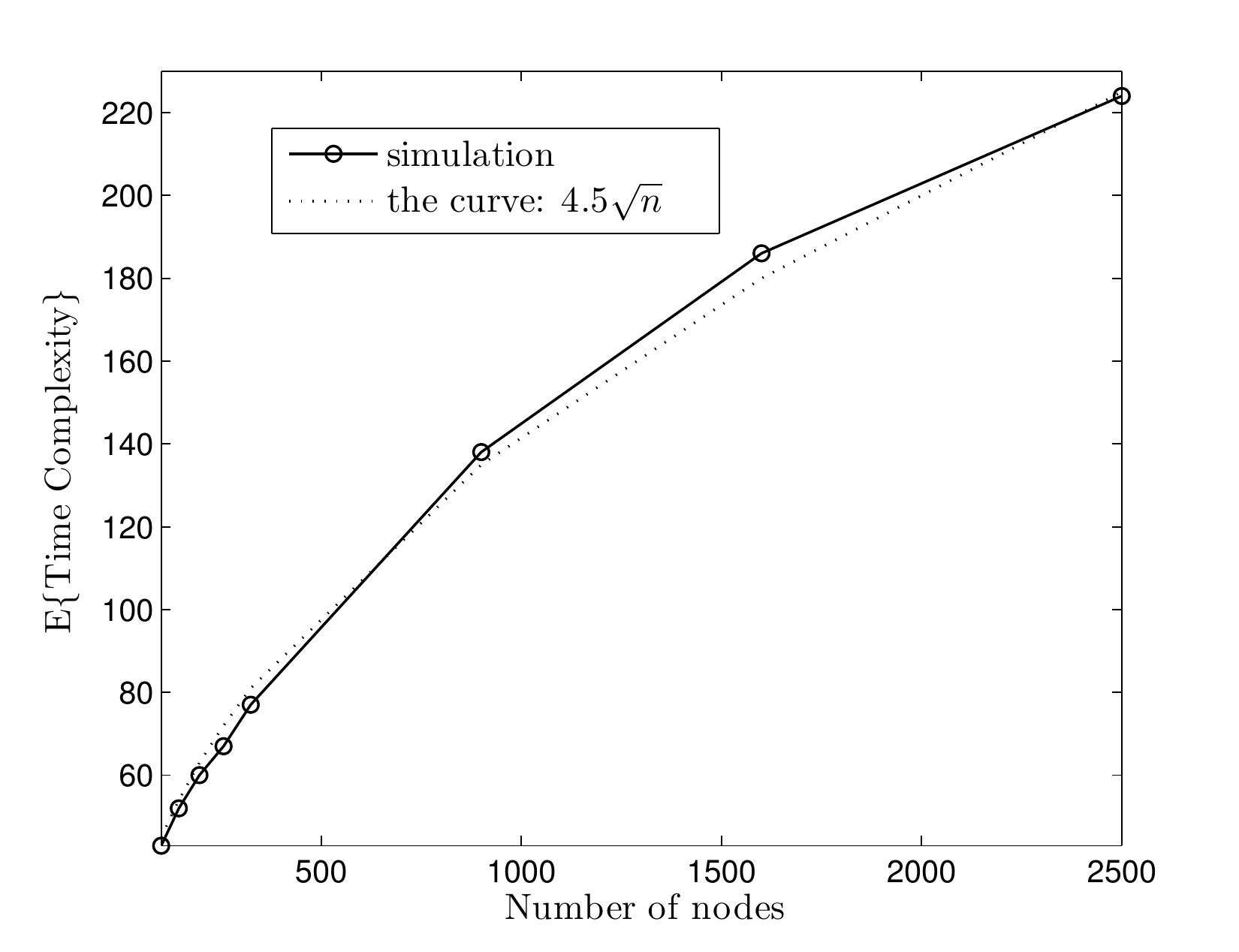}
        \label{fig1a}
    }
    \subfigure[The CRW algorithm]
    {
        \includegraphics[width=3in]{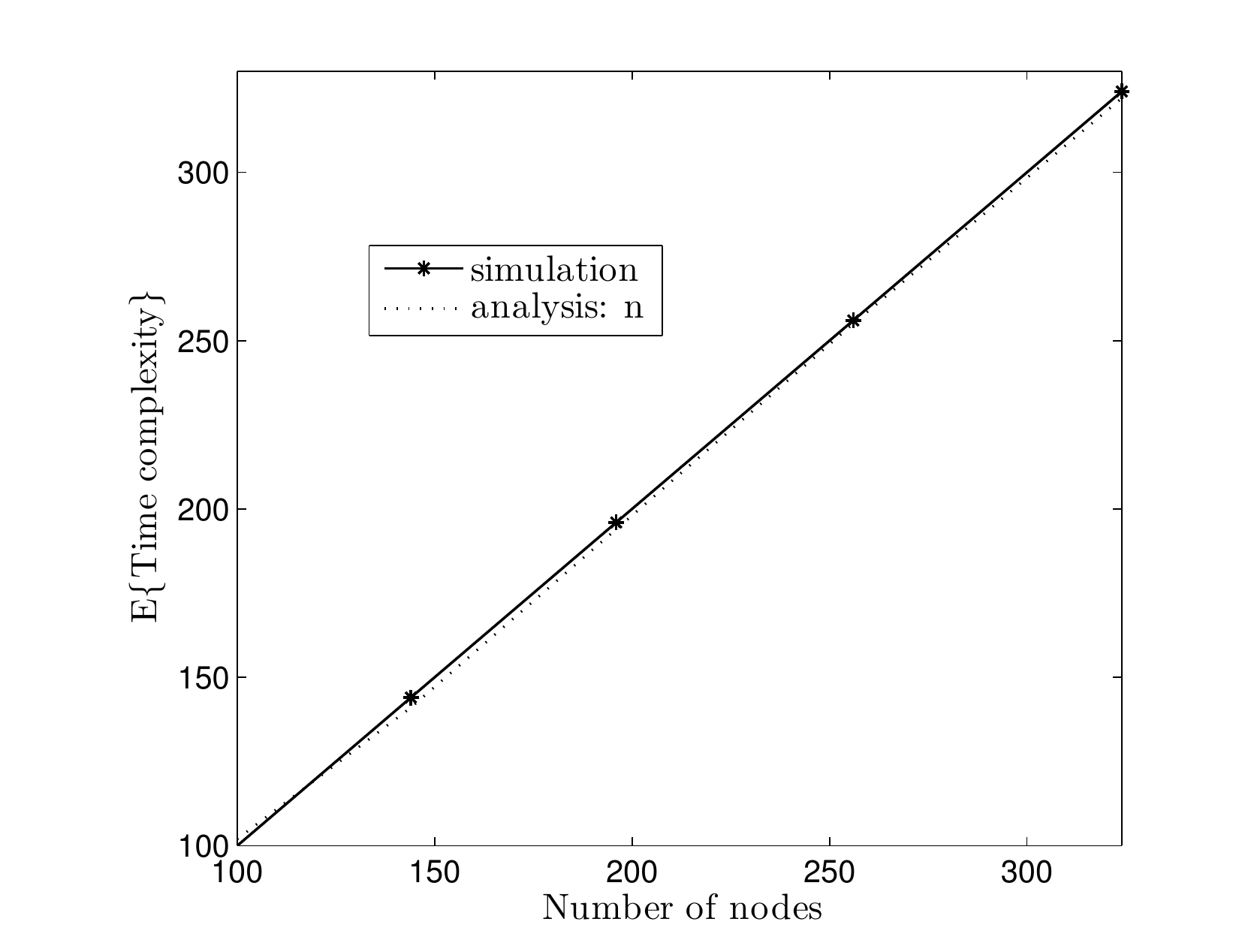}
        \label{fig1b}
    }
	 \caption{Average time complexities of TCM and CRW algorithms in complete graphs.}
\label{fig1}
\end{figure} 
\section{Simulation Results}
In this section, we evaluate the performances of TCM and CRW algorithms through simulation.
Simulation results are averaged over 10000 runs for both algorithms in complete graphs, torus networks, and Erd\"{o}s-Renyi model. 

In Fig. \ref{fig1}, average time complexities of TCM and CRW algorithms are given for complete graphs. In the TCM algorithm, $p_{send}$ is set to $\frac{1}{2}$. As it can be seen, simulation results are close to our analysis. Furthermore, the TCM algorithm outperforms the CRW algorithm by a scale factor $\sqrt{n}$. For instance, for $n=256$, the average time complexities of TCM and CRW algorithms are $67$ and $255$ time units, respectively. Hence, the amount of improvement is $255/67=3.81 \approx n/(4.5n^{0.5})=3.56 $. In Fig. \ref{fig2}, the average message complexities of TCM and CRW algorithms are depicted in complete graphs. As it can be seen, the average message complexity of TCM algorithm is always less than half of the one for the CRW algorithm. 

\begin{figure}[!t]
    \centering
    \subfigure[The TCM algorithm]
    {
        \includegraphics[width=3in]{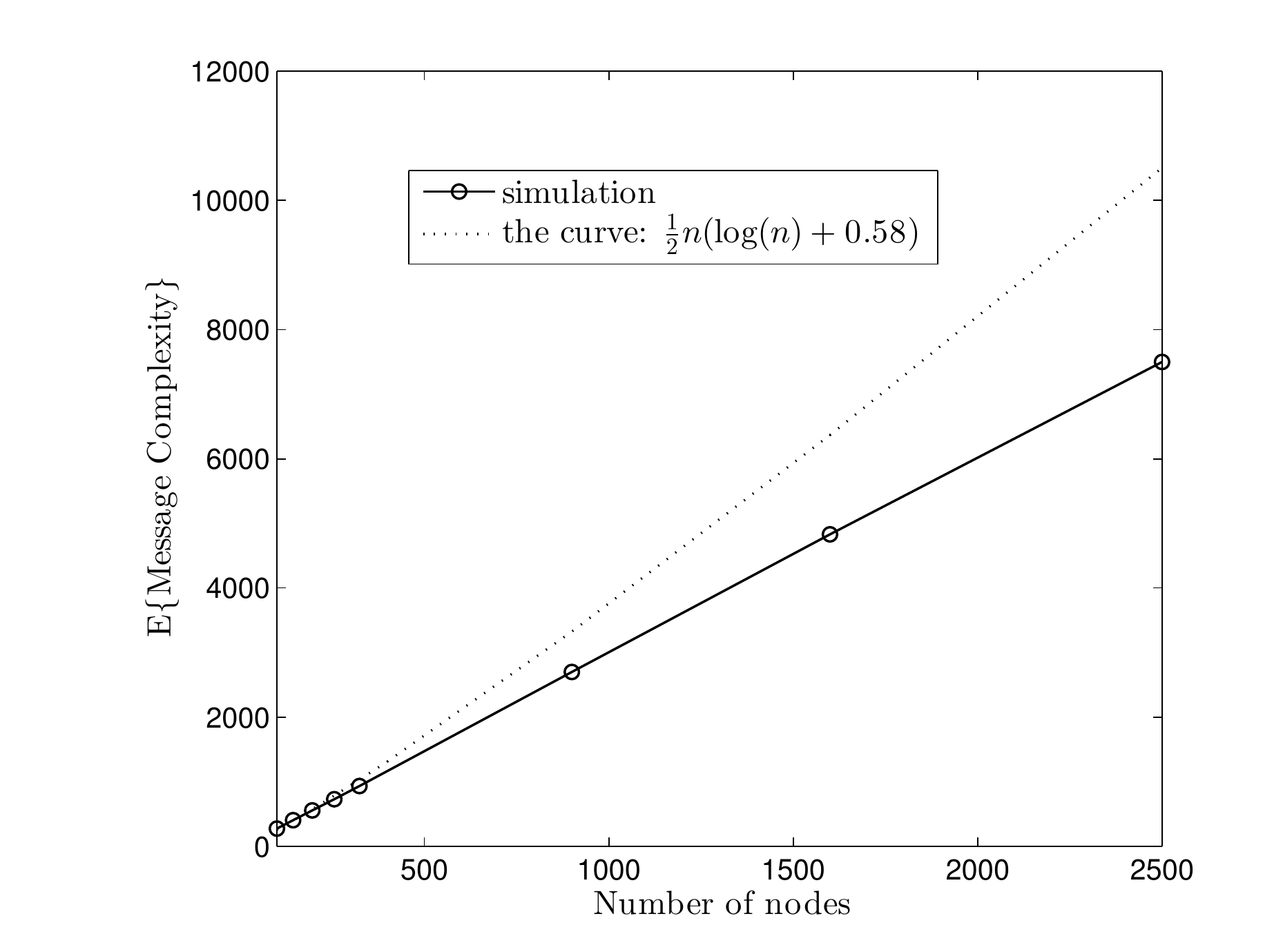}
        \label{fig2a}
    }
    \subfigure[The CRW algorithm]
    {
        \includegraphics[width=3in]{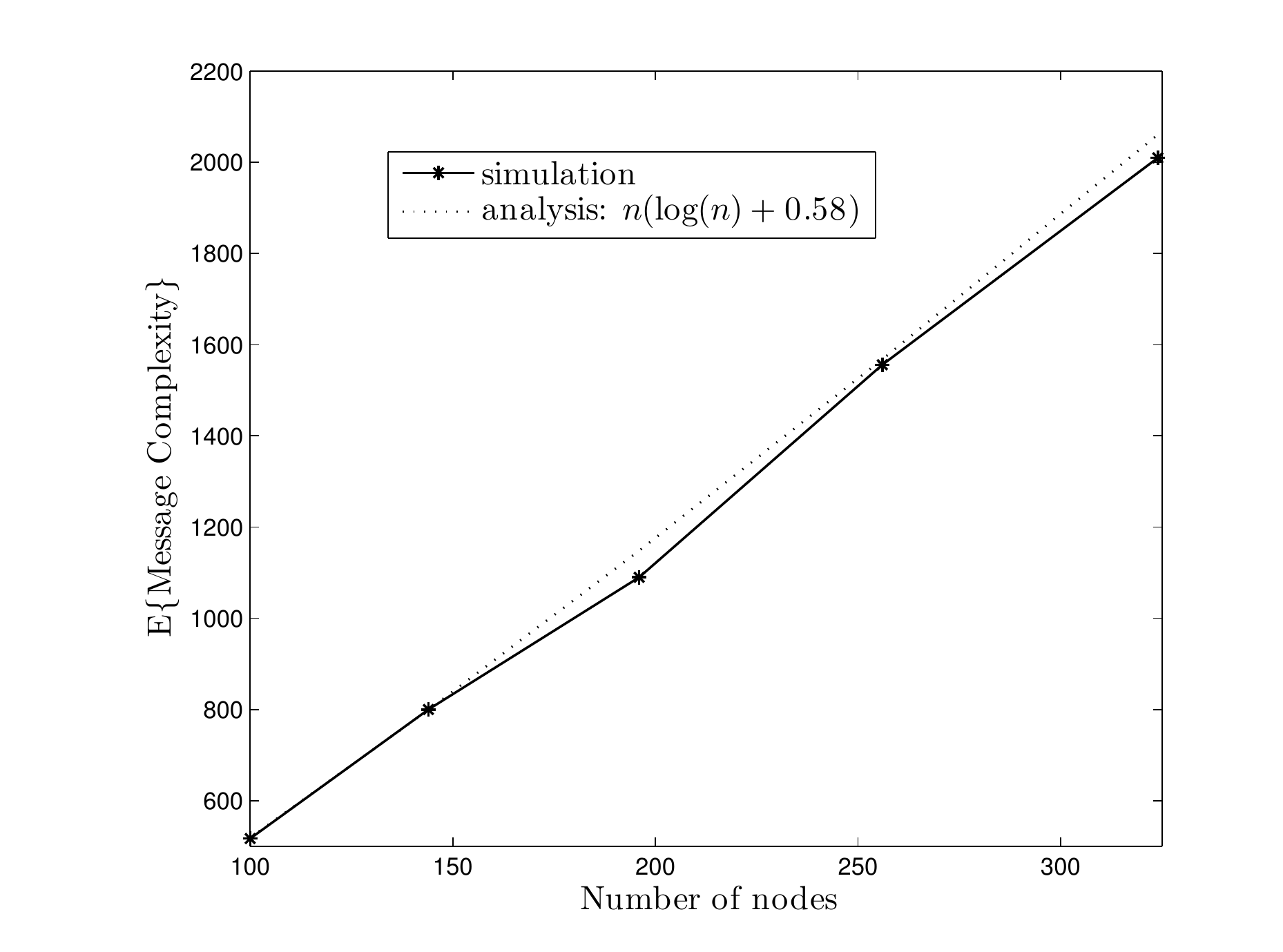}
        \label{fig2b}
    }
	 \caption{Average message complexities of TCM and CRW algorithms in complete graphs.}
\label{fig2}
\end{figure} 

In order to study the effect of parameter $p_{send}$ on the running time of TCM algorithm, the average time complexity is plotted versus $p_{send}$ for the complete graphs in Fig. \ref{figpsend}. Intuitively, the event horizon of token $ID_1$ grows with a pace inversely proportional to $p_{send}$. On the other hand, the relative velocity of two tokens is approximately related to $1-p_{send}$. Thus, the average time complexity increases as $p_{send}$ goes to zero or one. Furthermore, the optimal $p_{send}$ gets close to $0.5$ as network size increases. 
\begin{figure}[!t]
\centering
\includegraphics[width=3.6in]{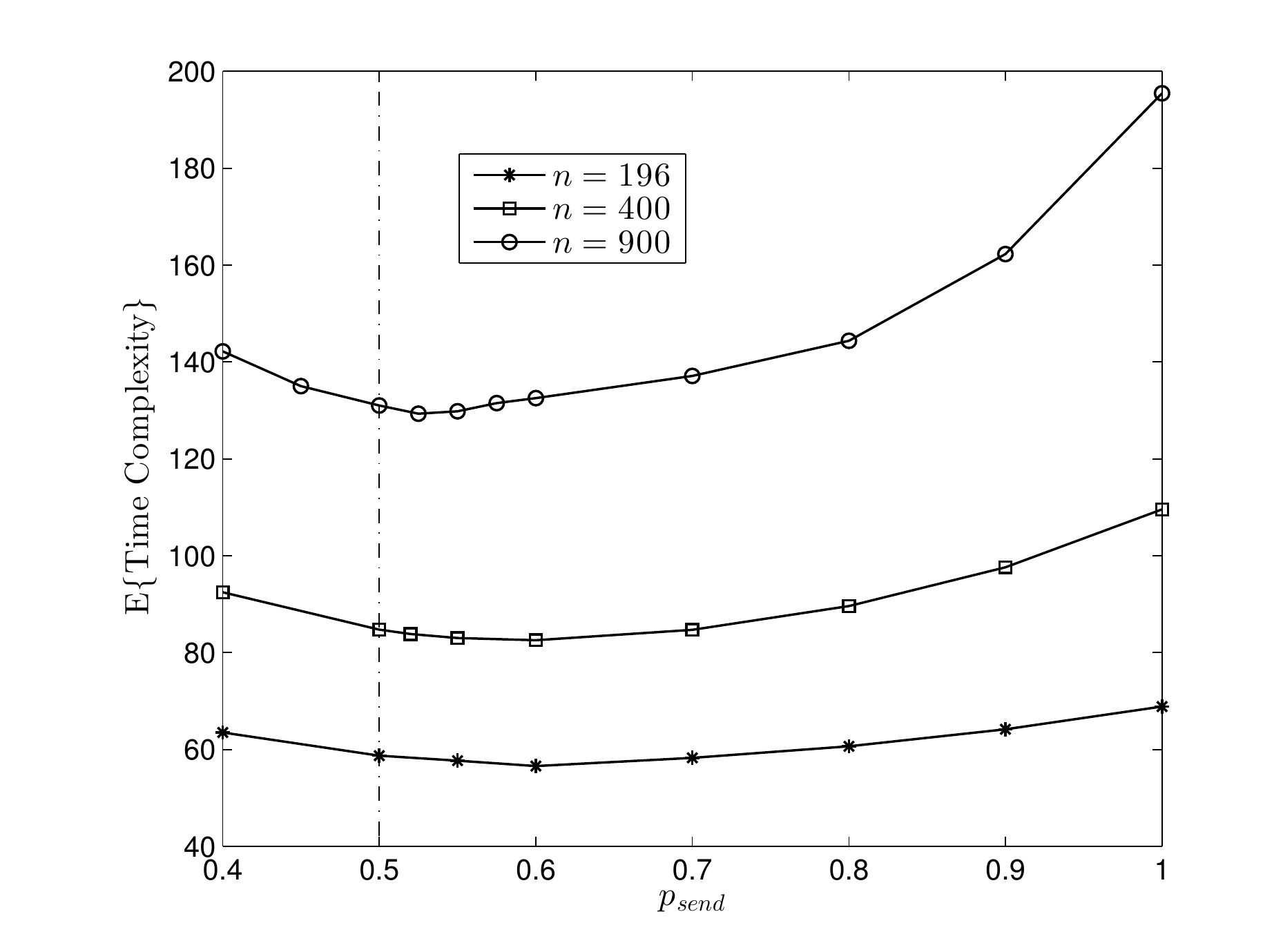}
\caption{Average time complexity of TCM algorithm versus $p_{send}$. } 
\label{figpsend}
\end{figure}
\begin{figure}[!t]
    \centering
    \subfigure[Average time complexity]
    {
        \includegraphics[width=3in]{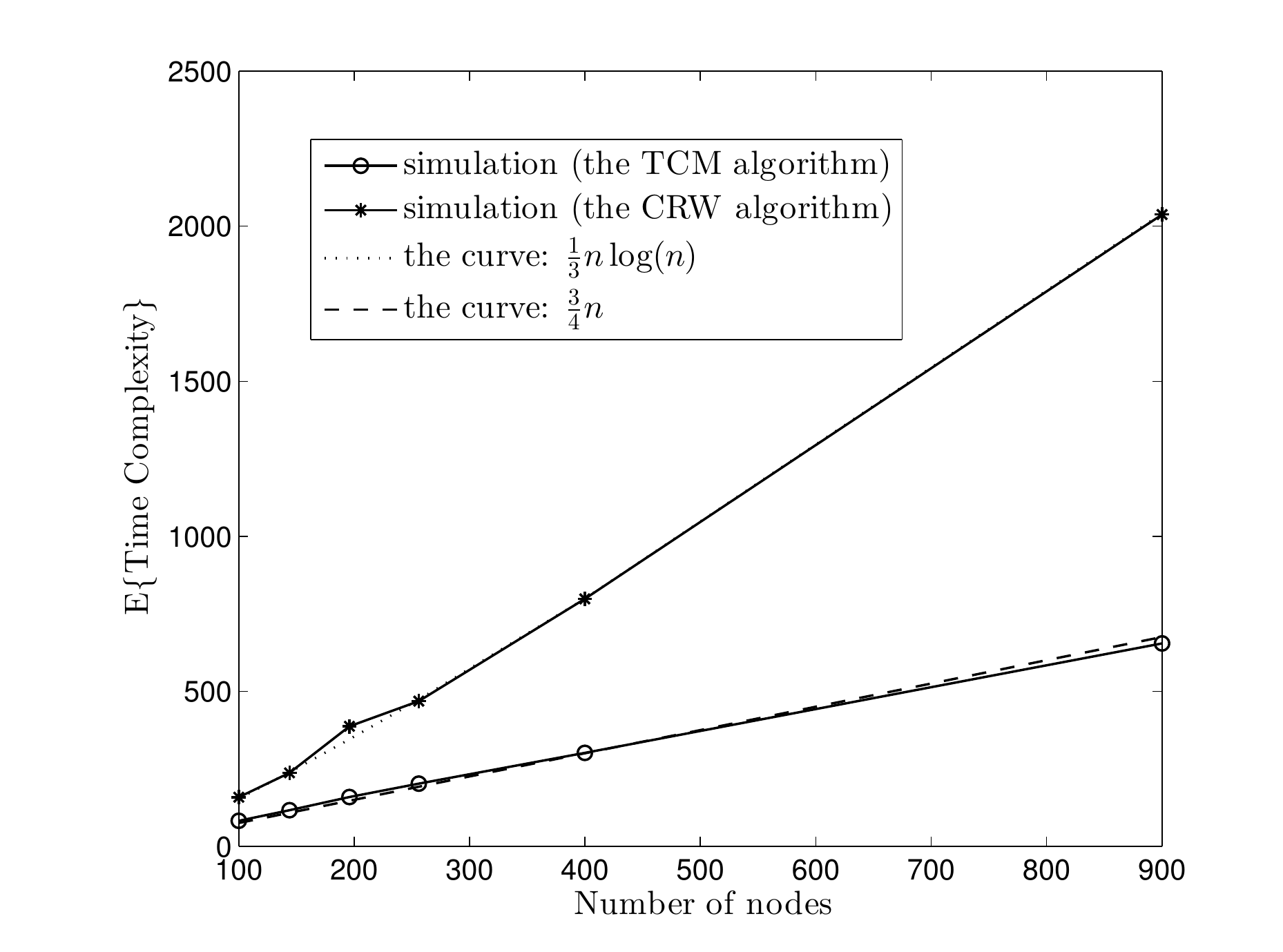}
        \label{fig5a}
    }
    \subfigure[Average message complexity]
    {
        \includegraphics[width=3in]{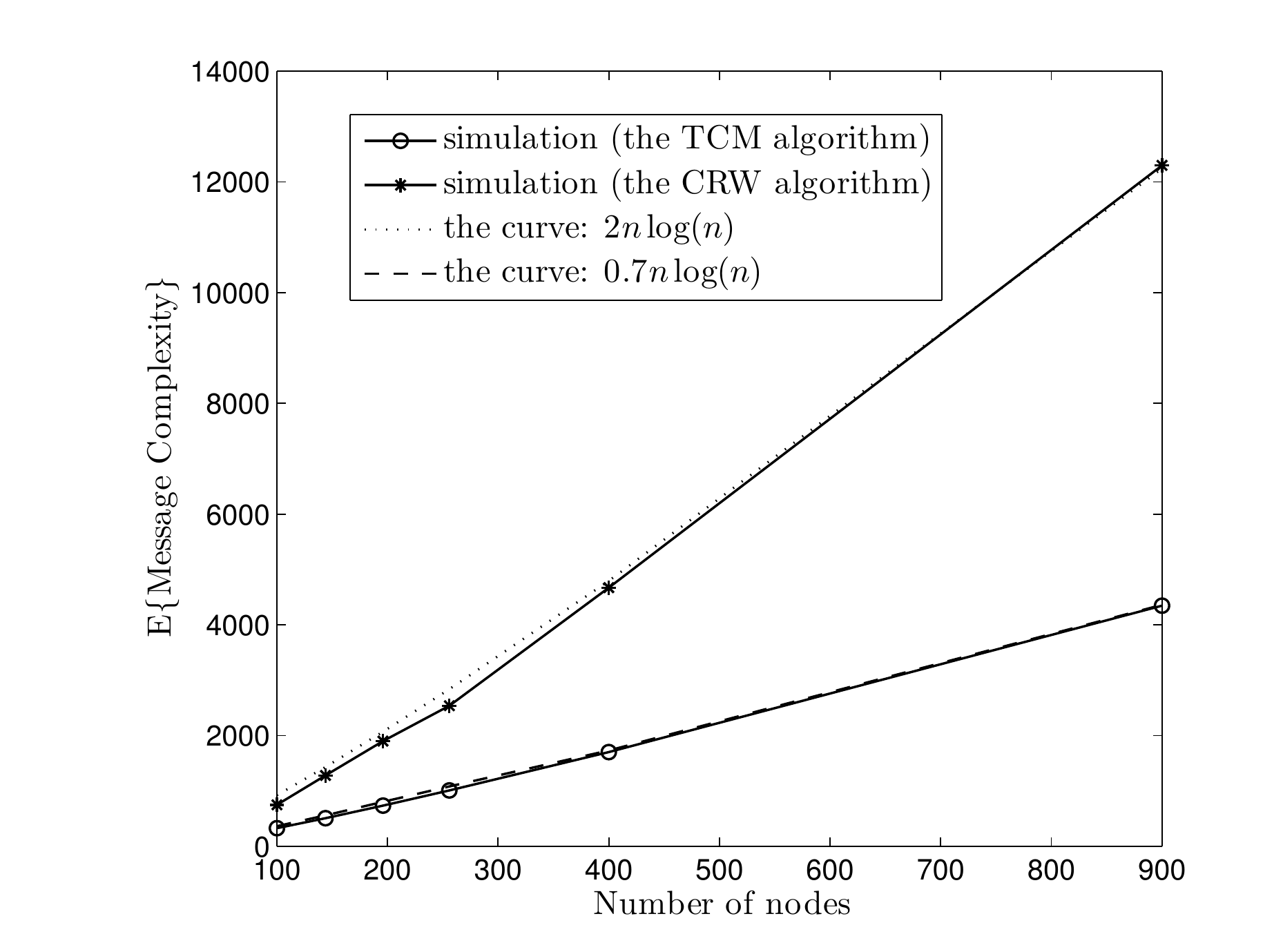}
        \label{fig5b}
    }
	 \caption{Average time and message complexities of TCM and CRW algorithms in torus networks.}
\label{fig5}
\end{figure}

In Fig. \ref{fig5}, we evaluate the average time and message complexities of TCM and CRW algorithms in torus networks. We can see that TCM algorithm has at least a gain of $\log(n)$ in time complexity and a scale factor of $2.85$ in message complexity. In Fig. \ref{fig7n}, the average time and message complexities of TCM and CRW algorithms are depicted in Erd\"{o}s-Renyi model. According to Fig. \ref{fig7na}, the TCM algorithm has an improvement in time complexity by a factor $\sqrt{n}$. Furthermore, the average message complexity of TCM algorithm is approximately half of the CRW algorithm.
\begin{figure}[!t]
    \centering
    \subfigure[Average time complexity]
    {
        \includegraphics[width=3in]{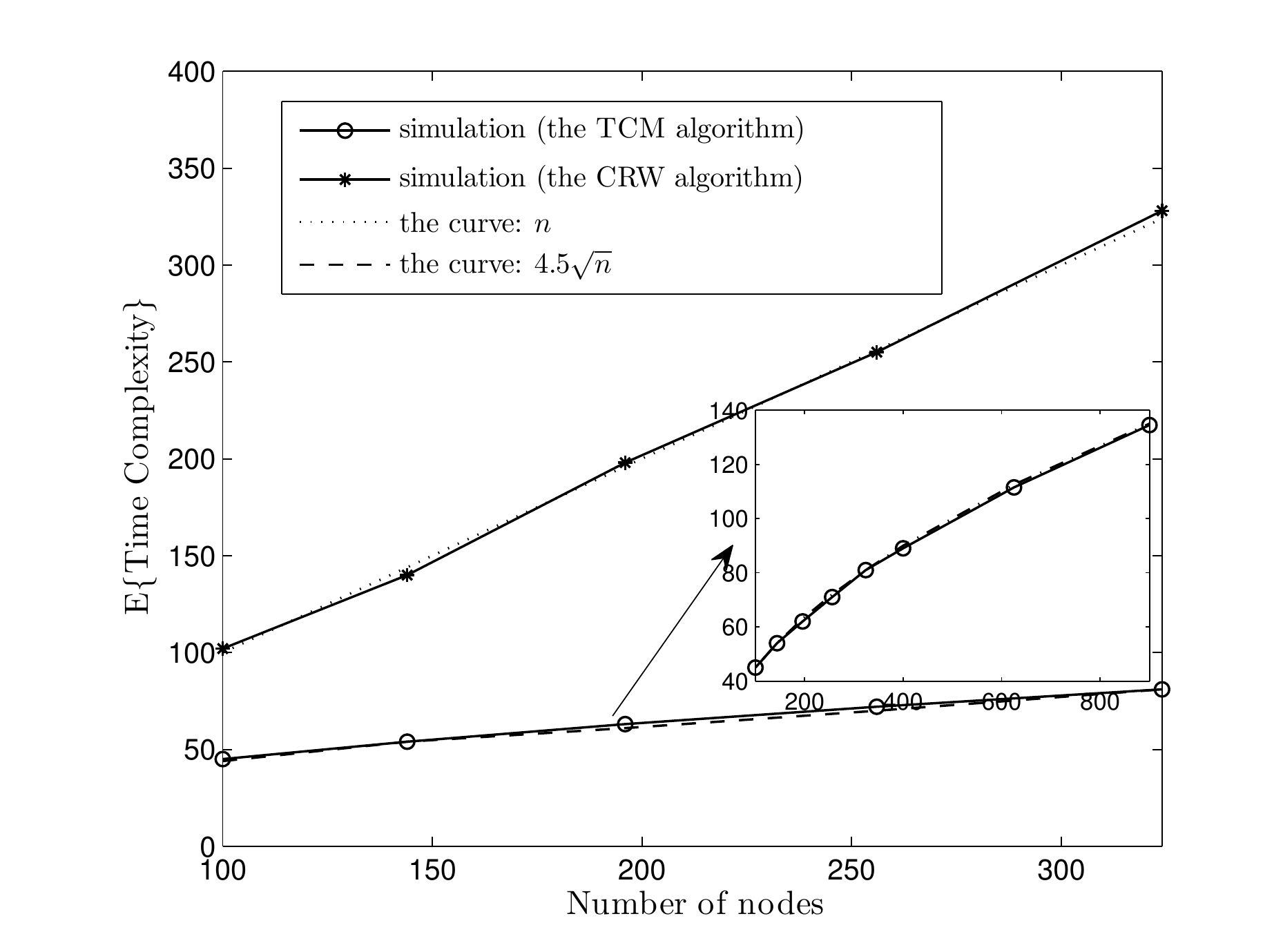}
        \label{fig7na}
    }
    \subfigure[Average message complexity]
    {
        \includegraphics[width=3in]{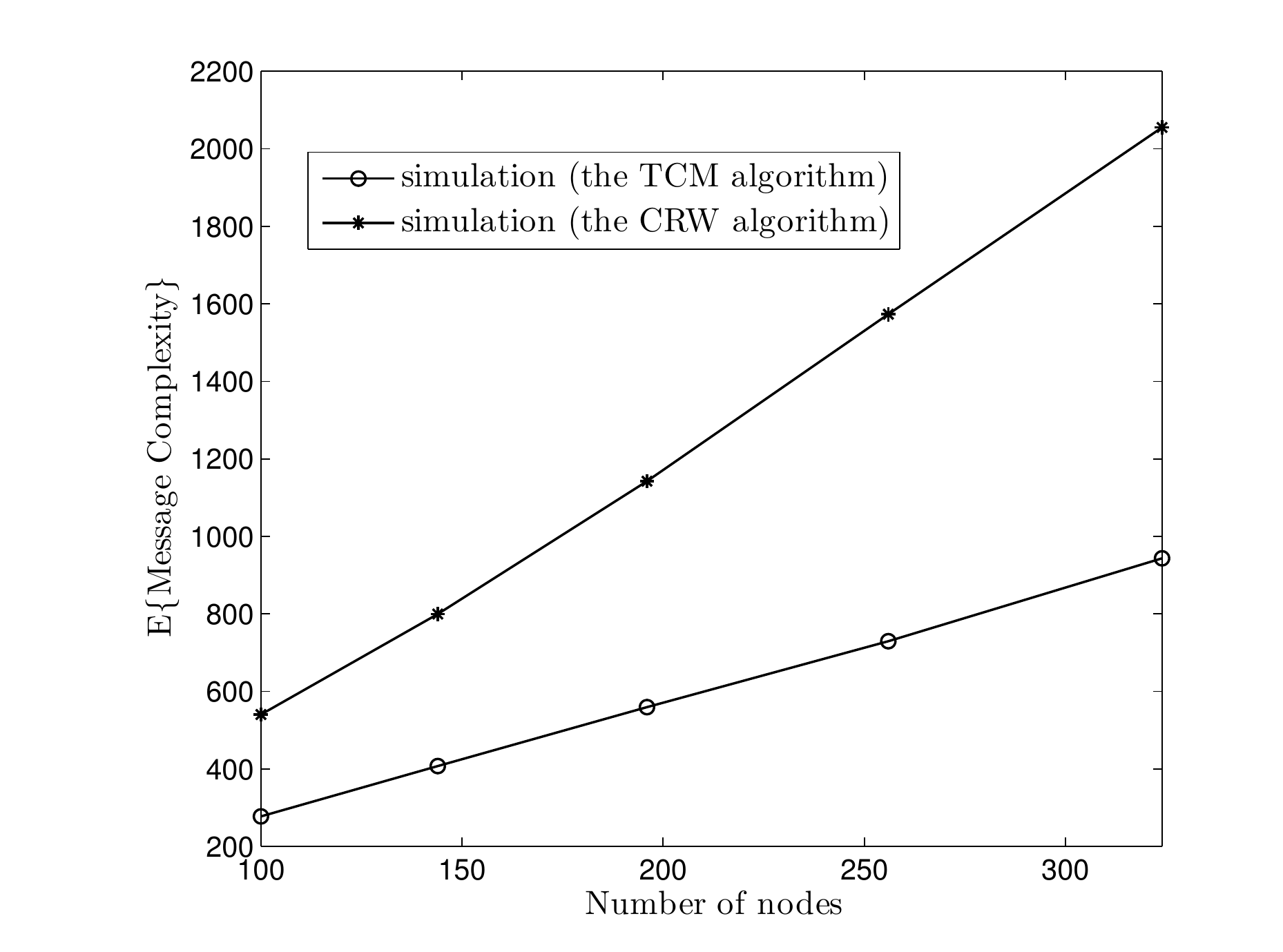}
        \label{fig7nb}
    }
	 \caption{Average time and message complexities of TCM and CRW algorithms in Erd\"{o}s-Renyi model.}
\label{fig7n}
\end{figure}

In Fig. \ref{figpsucc}, the probability of successful computation by running one instance of TCM and CRW algorithms are depicted in the case of complete graphs. The failure rate is set to $0.05/n$. For the TCM algorithm, $P_{succ}$ is approximately equal to $0.83$ for different values of $n$ in the range $[100,400]$. Besides, results from analysis are close to it by an offset of $0.001$. In the case of CRW algorithm, results from the simulation and the analysis are also close to each other. For this algorithm, $P_{succ}$ is greater than $0.74$ for various values of $n$ in the range $[100,400]$. 
\begin{figure}[!t]
    \centering
    \subfigure[The TCM algorithm]
    {
        \includegraphics[width=3in]{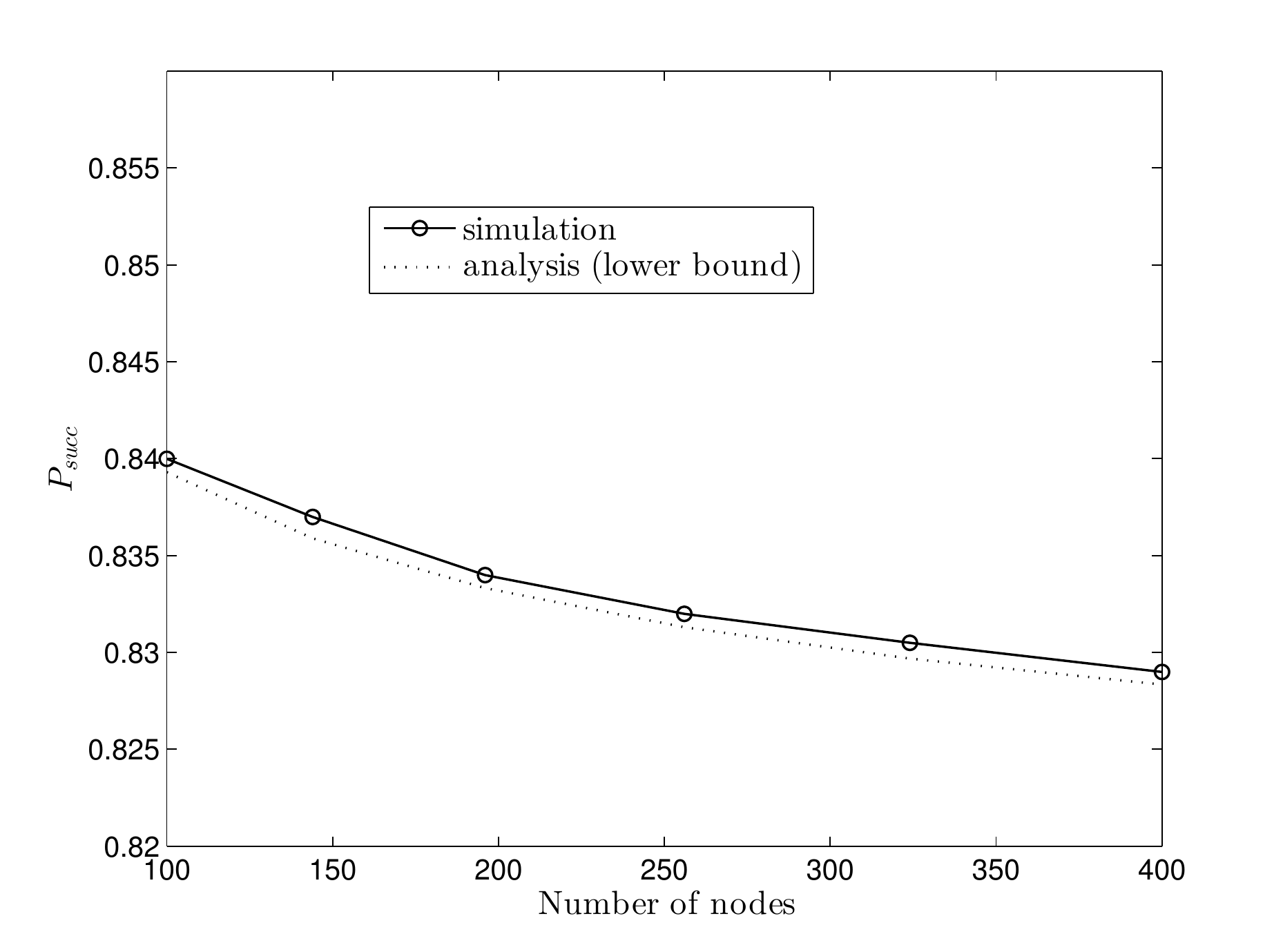}
        \label{figpsucca}
    }
    \subfigure[The CRW algorithm]
    {
        \includegraphics[width=3in]{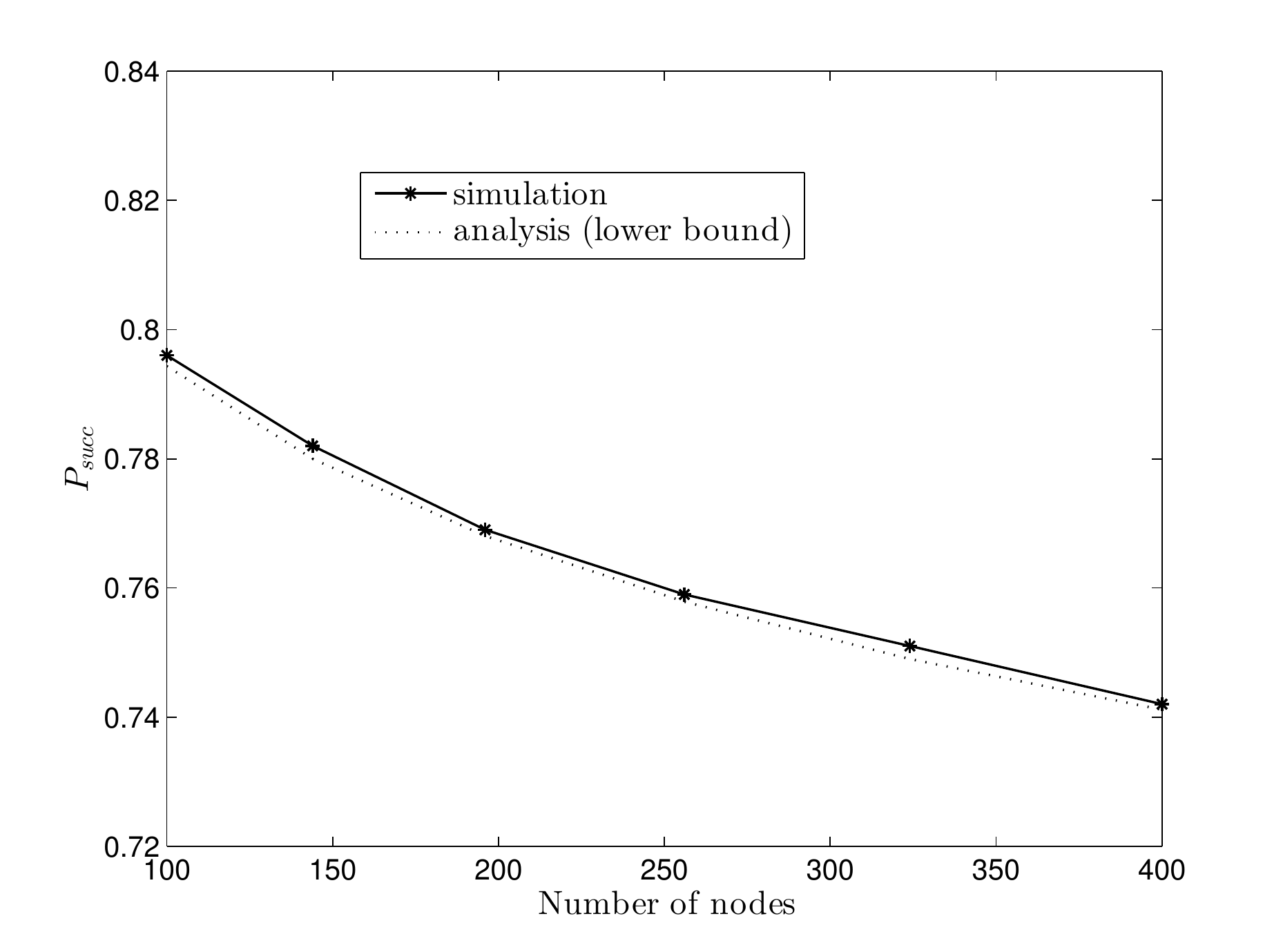}
        \label{figpsuccb}
    }
	 \caption{The probabilities of successful computation in TCM and CRW algorithms for complete graphs, $R=1$.}
\label{figpsucc}
\end{figure} 

\begin{figure}[!t]
    \centering
    \subfigure[Message complexity]
    {
        \includegraphics[width=3in]{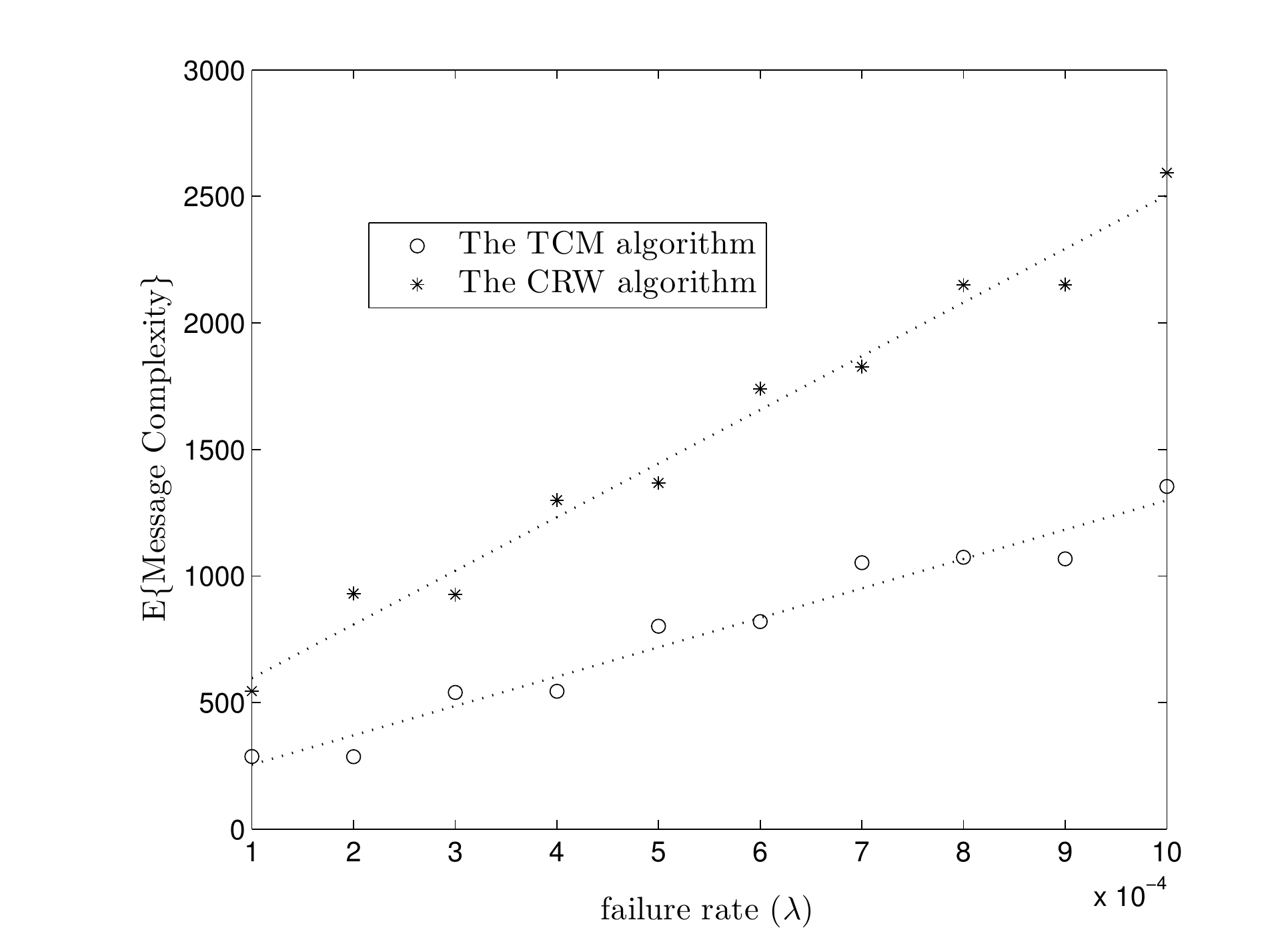}
        \label{figrobust1a}
    }
    \subfigure[Time complexity]
    {
        \includegraphics[width=3in]{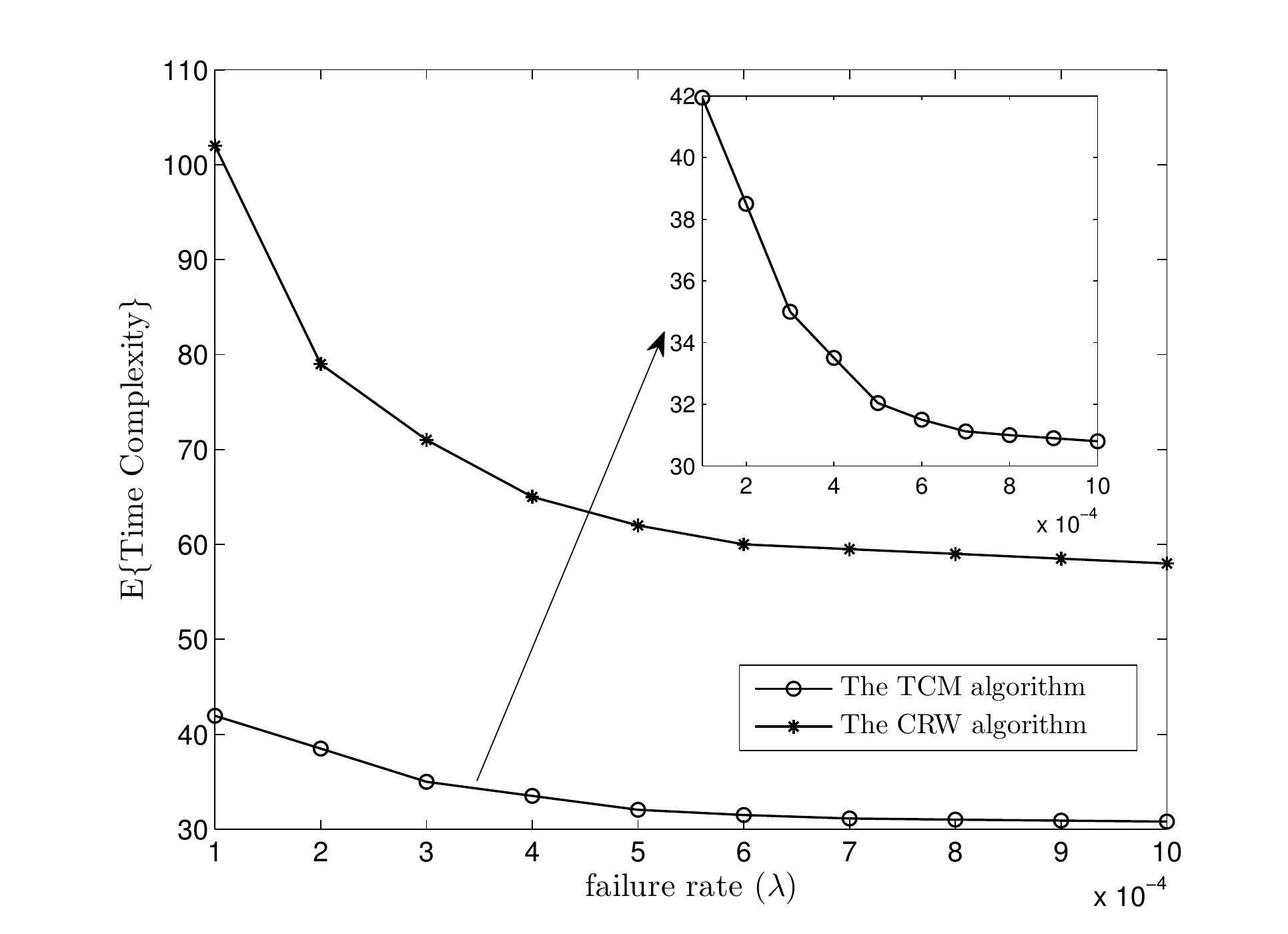}
        \label{figrobust1b}
    }
	 \caption{Average time and message complexities of TCM and CRW algorithms versus failure rate in complete graphs, $n=100$. The dashed lines represent the linear regression between message complexity and failure rate.}
\label{figrobust1}
\end{figure} 

In Fig. \ref{figrobust1a}, the message complexities of the TCM and CRW algorithms are plotted versus failure rate in a complete graph with $n=100$ nodes. The number of parallel instances is determined such that the probability of successful computation is equal to $0.95$. As it can be seen, it is required to run a few more instances of the TCM and CRW algorithms to tolerate higher failure rate. Furthermore, message complexity of the TCM algorithm is less than the one for the CRW algorithm. In Fig. \ref{figrobust1b}, the time complexities of both algorithms are given versus failure rate. For higher failure rate, we need to run more instances of the TCM/CRW algorithm to have $P_{succ}=0.95$. On the other hand, executing multiple instance of the algorithms improves the time complexity. Since the target function is computed if any of the instances is terminated successfully. 

\begin{figure}[!t]
\centering
\includegraphics[width=3.5in]{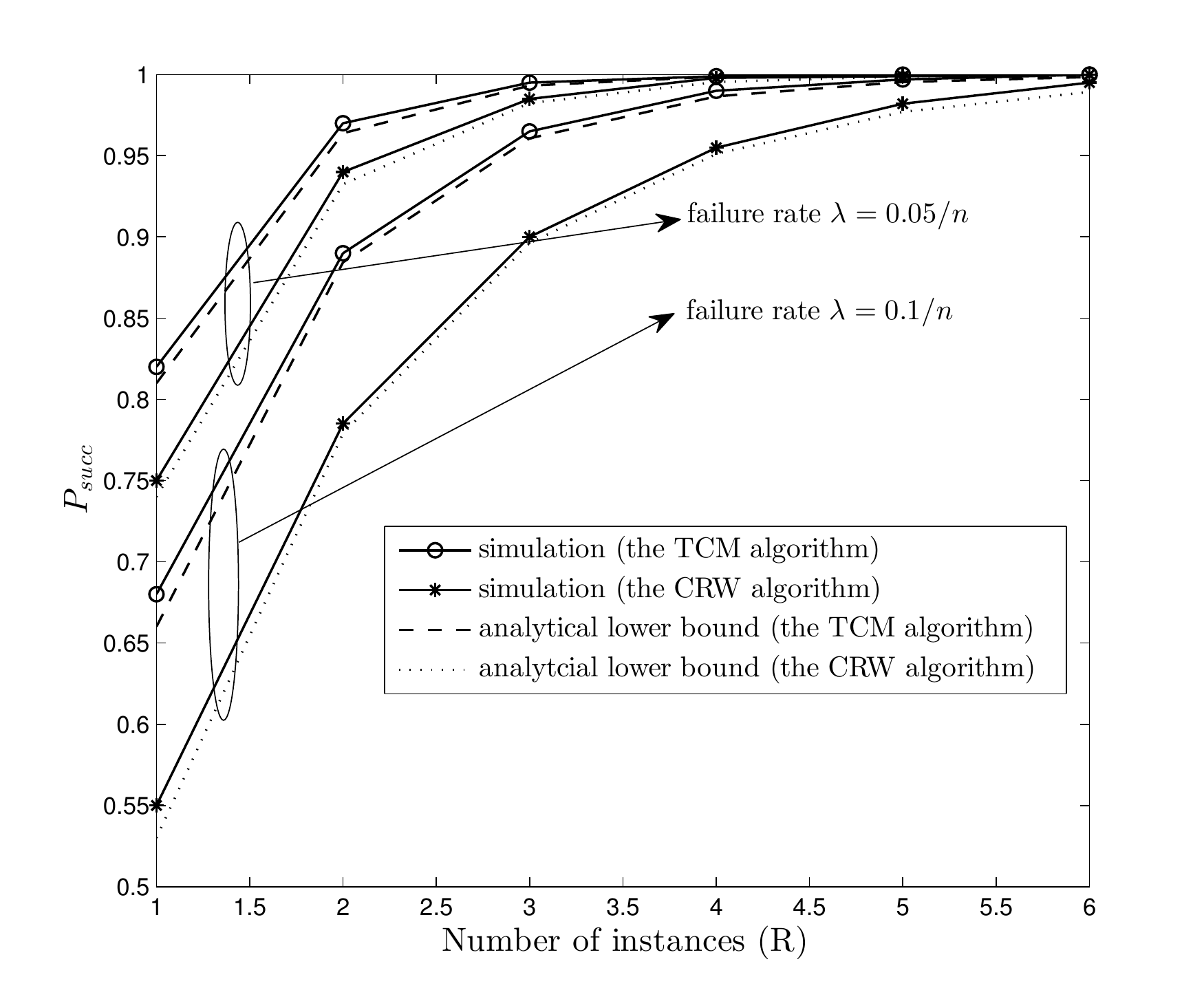}
\caption{The probability of successful computation versus number of multiple instances in complete graphs, $n=400$.}
\label{figrobust2}
\end{figure}

In Fig. \ref{figrobust2}, the probabilities of successful computation of the TCM and CRW algorithms are plotted versus number of multiple instances in a complete graph with $n=400$ nodes for the failure rates $\lambda=0.05/n,0.1/n$. It can be seen that the analytical lower bounds in (\ref{eqrobust1}) and (\ref{eqrobust2}) are close to simulation results. Furthermore, $P_{succ}$ goes to one in all cases when $6$ number of instances are executed in parallel. Thus, the proposed solution makes both algorithms robust against node failures by running a few number of instances in parallel as we expected from Corollaries \ref{col4_1} and \ref{col4_2}.

Studying the impact of dynamic topologies on the performance of distributed algorithms is quite important. Here, we evaluate the performance of TCM and CRW algorithms under node mobility. There exist different mobility models in the literature of mobile ad hoc networks \cite{camp2002survey}. In the simulations, we consider the Random Walk (RW) mobility model which is frequently used in determining the protocol performance and it can mimic movements of mobile nodes walking in an unpredictable way \cite{camp2002survey}. 

Initially, suppose that nodes are located randomly over a square of unit area. Let $[x_i(t),y_i(t)]$ be the location of node $i$ at time $t$. In the RW mobility model, the differences $x(t+h)-x(t)$ and $y(t+h)-y(t)$ are two independent normally distributed random variables with zero mean and variance $2Dh$ $,\forall h>0$ where $D$ is the diffusion coefficient \cite{groenevelt2006relaying}. Thus, the mean square displacement of a node is related to the parameter $D$. In particular, the probability of large displacement increases as diffusion coefficient $D$ grows. We assumed that if a node reaches the boundary of simulated area, it will be bounced off the boundary according to the same angle. Furthermore, two nodes are neighbor if the distance between them is less than a fixed transmission range. The transmission range is set to a value such that the graph remains connected with high probability for the static case, i.e. $D=0$ \cite{shah2009}. 

\begin{figure}[!t]
    \centering
    \subfigure[Time complexity]
    {
        \includegraphics[width=3in]{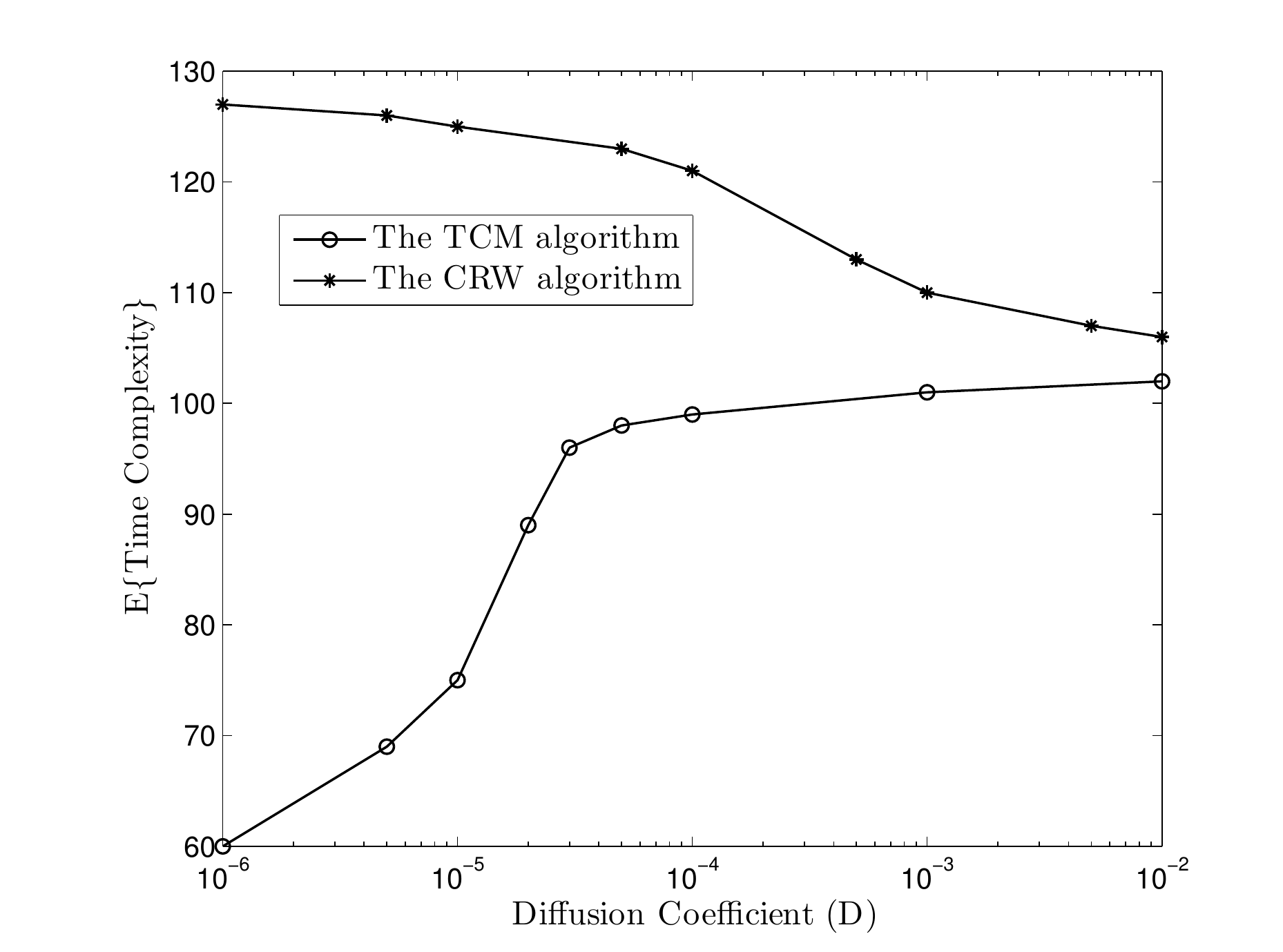}
        \label{figmoba}
    }
    \subfigure[Message complexity]
    {
        \includegraphics[width=3in]{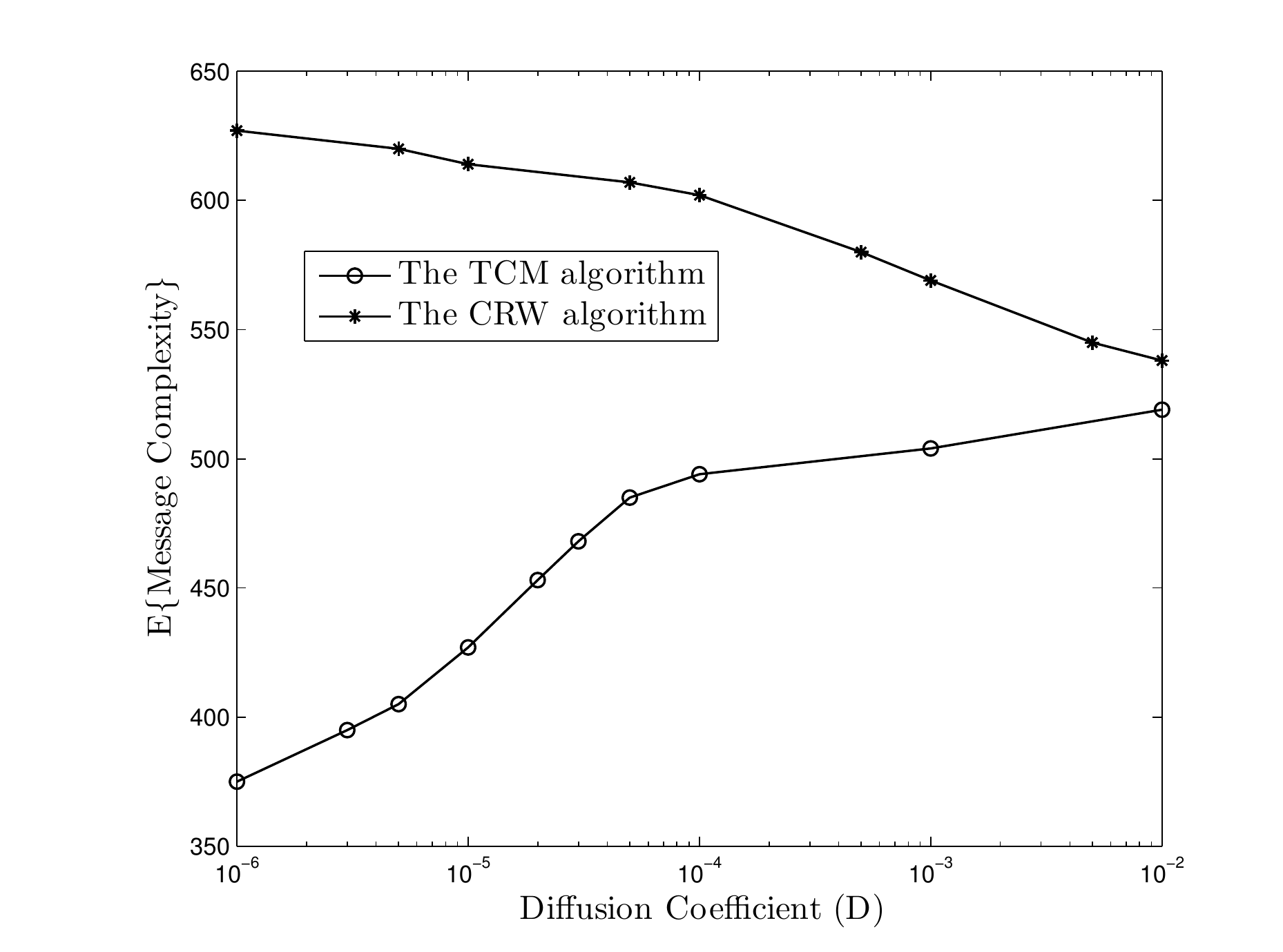}
        \label{figmobb}
    }
	 \caption{Average time and message complexities of TCM and CRW algorithms versus diffusion coefficient $D$ in a network with $n=100$ nodes which are deployed in a square of unit area. The transmission range is set to $0.18$.}
\label{figmob}
\end{figure} 

In the TCM algorithm, we assume that each node $i$ registers the UID of the node that the token $memory(i)$ passed to it. Whenever an active node should send a token to a node which is not in its transmission range any more, it will pass its token to a random neighbor node. In Fig. \ref{figmob}, the time and message complexities of TCM and CRW algorithms are depicted versus the parameter $D$ in a network with $n=100$ nodes. It is noteworthy that both algorithms can compute the class of target functions defined in Lemma \ref{lemma2_1} successfully even in high mobility networks. Furthermore, the time and message complexities of TCM algorithm increases as the parameter $D$ grows while node mobility improves the performance of CRW algorithm. In fact, higher mobility weakens the advantage of chasing mechanism. On the other hand, it gives an opportunity to a completely randomized solution, i.e. the CRW algorithm, to reduce the coalescing time of distant tokens. Nevertheless, simulation results show that the TCM algorithm outperforms the CRW algorithm in both time and message complexities.  

\section{Conclusions}
In this paper, we proposed the TCM algorithm to compute a wide class of target functions (such as sum, average, min/max, XOR) in a distributed manner. In complete graph and Erd\"{o}s-Renyi model, we showed that it reduces running time at least by factor $\sqrt{n/\log(n)}$ with respect to completely randomized solution, i.e. the CRW algorithm, and there is at least a factor of $\log(n)/\log(\log(n))$ improvement in torus networks. We defined a robustness metric to study the impact of node failures on the performance of CRW and TCM algorithms. The TCM and CRW algorithms can tolerate the failure rate of $\alpha/n$ by running $O(n^\alpha)$ and $O(1)$ instances in parallel, respectively. Furthermore, simulation results showed that both algorithm can compute the target functions successfully even in high mobility conditions. 

\section{ Appendix A}

{\em  Proof of Lemma III.1}: 

The pdf of $|EH_1(k)|$ can be approximated with Gaussian distribution $\mathcal{N}(\mu_k,\sigma_k)$ where $\mu_k=n-n(1-1/n)^k$ and $\sigma_k^2=n^2(1-1/n)(1-2/n)^k+n(1-1/n)^k-n^2(1-1/n)^{2k}$ [29]. After some manipulations, we have: 
\begin{equation}
\Pr\{|EH_1(2k)|\leq \mathbb{E}\{|EH_1(k)|\}\} \leq \frac{1}{2}e^{-(\mu_{k}-\mu_{2k})^2/2\sigma_{2k}^2} \leq e^{-n/4-k\eta}.
\end{equation}
where $\eta\geq 0.05$. Hence, the size of the set $EH_1(2k)$, is greater than $\mathbb{E}\{|EH_1(k)|\}$ with probability at least $1-e^{-n/4 -k\eta }$. 
\section{ Appendix B}
{\em Proof of Theorem III.2}: 

Consider token $ID_i$ ($i>1$). Let $x_k^i$ be the node visited by token $ID_i$ at $k$-th step and $S_k^i=\{x_1^i,\cdots,x_k^i\}$ be the history of the corresponding walk. We define the walk taken by token $ID_i$ as {\em weakly self-avoiding walk}, provided that: 
\begin{equation}
\Pr\{x_{k+1}^i|S_k^i\}
\begin{cases}
= \alpha_k  \qquad x_{k+1}^i\not \in S_k^i,\\
\leq \alpha_k \qquad  x_{k+1}^i\in S_k^i,
\end{cases}
\label{self}
\end{equation}
for some $\alpha_k$ where $\alpha_k\geq \frac{1}{n-1}$. Thus, in a weakly self-avoiding walk, token $ID_i$ visits new nodes with higher probability than the visited nodes.

\begin{mylm} In the TCM algorithm, the path traced by token $ID_i$ ($i>1$) is a weakly self-avoiding walk.
\label{lemmaa_1}
\end{mylm}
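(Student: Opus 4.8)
The plan is to verify the two defining inequalities of (\ref{self}) directly, by conditioning on the history $S_k^i$ of token $ID_i$ and decomposing each of its moves according to the two operating modes of Algorithm \ref{TCM}. Writing $m=|S_k^i|$, the next node $x_{k+1}^i$ ranges over the $n-1$ nodes other than the current node $x_k^i$; of these, $n-m$ are unvisited and $m-1$ are previously visited. I would show that, conditioned on $S_k^i$, every unvisited node is hit with one common probability $\alpha_k$ while every visited node is hit with probability at most $\alpha_k$, and that $\alpha_k \geq 1/(n-1)$.

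First I would dispose of the random walk mode (Case 1), in which $memory(x_k^i)=ID_i$ and the token selects a port uniformly at random. On a complete graph this sends the token to each of the $n-1$ remaining nodes with probability exactly $1/(n-1)$, irrespective of whether the target lies in $S_k^i$. Hence this mode contributes the same amount, $1/(n-1)$ times the (history-dependent) probability of being in Case 1, to every candidate node; in particular it treats visited and unvisited nodes identically and already guarantees the lower bound $\alpha_k \geq 1/(n-1)$. The equality clause of (\ref{self}) for unvisited nodes I would then obtain from a symmetry argument: the joint law of the UID assignment and of all random port choices is invariant under any relabeling of nodes that fixes $x_k^i$ and the set $S_k^i$. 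Since permuting the unvisited nodes is such a relabeling, the conditional probability $\Pr\{x_{k+1}^i=u \mid S_k^i\}$ is the same for every unvisited $u$; call this common value $\alpha_k$.

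It remains to bound the contribution of the chasing mode (Case 2), and this is where the real work lies. When $memory(x_k^i) > ID_i$ the token moves deterministically to $w=path(x_k^i)$, the port last taken out of $x_k^i$ by the highest-UID token ever to visit that node. Since the random walk contributions to a fixed unvisited node $u$ and a fixed visited node $v$ are equal and therefore cancel, proving $\Pr\{x_{k+1}^i=v \mid S_k^i\} \leq \Pr\{x_{k+1}^i=u \mid S_k^i\}$ reduces to showing that, conditioned on $S_k^i$ and on being in Case 2, the chase target $w$ is at least as likely to equal $u$ as to equal $v$. I would establish this by unwinding $w$ through the trajectories of the higher-UID tokens $ID_1,\dots,ID_{i-1}$: the direction recorded in $path(x_k^i)$ was itself produced either by a uniform port choice of some higher token (Case 1 for that token) or, recursively, by a further chase, and none of these choices is biased toward the particular set $S_k^i$. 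Invoking the same node-relabeling symmetry, restricted now to the higher tokens' randomness, gives that $w$ is no more likely to land on a given visited node than on a given unvisited node, so the chasing contribution to $v$ is at most its contribution to $u$. Combining the two modes then yields $\Pr\{x_{k+1}^i=v \mid S_k^i\} \leq \alpha_k$ for every visited $v$, which is exactly (\ref{self}).

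The hard part will be making the chasing-mode step fully rigorous: the event $\{memory(x_k^i) > ID_i\}$ forces $x_k^i$ to lie in the event horizon of a higher token, so $S_k^i$ and the higher tokens' trajectories are correlated, and one must check that this correlation does not secretly concentrate the chase target $w$ onto a node that $ID_i$ has already visited. The cleanest way I see to close this gap is a coupling that relabels the unvisited nodes and tracks how $path$ values are inherited along a chase, confirming that the recursion through higher tokens never prefers $S_k^i$; at the level of the present naive analysis, this is argued from the symmetry of the uniform port selection that ultimately seeds every recorded direction.
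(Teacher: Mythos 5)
Your overall decomposition matches the paper's: condition on $S_k^i$, split the move of token $ID_i$ according to the random-walk and chasing modes, observe that the random-walk mode is uniform over the other $n-1$ nodes and hence blind to $S_k^i$, and reduce everything to showing that the chasing target is not biased toward $S_k^i$. The gap is exactly where you locate it, and your proposed fix does not close it. A relabeling symmetry among unvisited nodes can at best show that all unvisited nodes are equally likely; it cannot by itself compare a visited node against an unvisited one. Moreover, the symmetry you invoke does not survive conditioning on $S_k^i$: in chasing mode the steps of $ID_i$ are deterministic functions of the higher tokens' trajectories, so the event that $ID_i$ traced the particular path $S_k^i$ pins down part of those trajectories, and you cannot freely permute nodes ``restricted to the higher tokens' randomness.'' You end by deferring the resolution to an unspecified coupling that ``confirms that the recursion through higher tokens never prefers $S_k^i$,'' which is precisely the statement to be proved. (A smaller issue: the random-walk mode alone contributes only $\Pr\{RW_i\mid S_k^i\}\cdot 1/(n-1)$ to each node, so it does not by itself give $\alpha_k\geq 1/(n-1)$; that bound also needs the chasing-mode claim.)

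The paper closes the gap with a concrete combinatorial exclusion that your write-up is missing. Let $ID_{i'}$ be the highest UID seen by node $x_k^i$, let $j$ be the \emph{last} step at which $ID_{i'}$ visited $x_k^i$ (so the chase target is $x_{j+1}^{i'}$), and let $l$ be the step index of $ID_i$ at the moment $ID_{i'}$ left $x_j^{i'}$. If the chase target lay in the recently traversed suffix $\{x_l^i,\dots,x_k^i\}$, say $x_p^i=x_{j+1}^{i'}$ with $l\leq p\leq k$, then since from that point on $ID_i$ replicates $ID_{i'}$'s steps, every $x_{\omega_1}^i$ with $p\leq\omega_1\leq k$ equals some $x_{\omega_2}^{i'}$ with $\omega_2>j$; taking $\omega_1=k$ forces $ID_{i'}$ to have revisited $x_k^i$ after step $j$, contradicting the maximality of $j$. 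Hence the chase target is supported on $\{1,\dots,n\}\setminus\{x_l^i,\dots,x_k^i\}$, and because the recorded direction was originally seeded by a uniform random-walk choice of $ID_{i'}$, it is uniform there, assigning probability $1/(n-|\{x_l^i,\dots,x_k^i\}|)\geq 1/(n-1)$ to each remaining node (earlier-visited and unvisited alike) and zero to the excluded suffix. This yields inequality (\ref{eql2}) --- every unvisited node is at least as likely a chase target as every visited node --- which, combined with your correct treatment of the random-walk mode, gives (\ref{self}). Without this exclusion argument your proof is incomplete at its central step.
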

\begin{proof}
Suppose that the token $ID_i$ enters a node $x_k^i$ visited by some other token with higher UID. Let $ID_{i^{\prime}}$ be the maximum UID, node $x_k^i$ has seen so far. Furthremore, assume that token $ID_{i^{\prime}}$ is in $k^{\prime}$ steps and has visited node $x_k^i$ in $j$-th step for the last time, i.e. $j=\max_{\omega\leq k^{\prime}} \omega \mbox{ } s.t. \mbox{ } x_{\omega}^{i^{\prime}}= x_k^i$. We denote the chasing and random walk modes of token $ID_i$ by $chase_i$ and $RW_i$, respectively. Now, for a given history $S_k^i$, we have:
\begin{equation}
\begin{split}
\Pr\{x_{k+1}^i=a|S_k^i\}&=\displaystyle\sum_{i^{\prime}=1}^{i-1} \Big[\displaystyle\sum_{j=1}^{k^{\prime}} \Pr\{x_{k+1}^i=a|x_j^{i^{\prime}}=x_k^i,chase_i,S_k^i\}\times \Pr\{x_j^{i^{\prime}}=x_k^i,chase_i|S_k^i\}\Big]\\   & \qquad \qquad+\Pr\{x_{k+1}^i=a|RW_i,S_k^i\}\times \Pr\{RW_i|S_k^i\}. 
\label{eql1}
\end{split}
\end{equation}

Suppose that token $ID_i$ was in $l$-th step when token $ID_{i^{\prime}}$ was leaving node $x_j^{i^{\prime}}$ (see Fig. \ref{figsupp1}). We prove that token $ID_i$ will not visit nodes in the set $\{x_l^i,\cdots,x_k^i\}$ in the next step. By contradiction, assume that there exists $l\leq p \leq k$ where $x_p^i=x_{j+1}^{i^{\prime}}$. However, we have:
\begin{equation}
\forall \omega_1\in \{p,\cdots,k\}, \exists \omega_2 \in\{j+1,\cdots,k^{\prime}\}, \mbox{ } s.t. \mbox{ } x_{\omega_1}^i=x_{\omega_2}^{i^{\prime}},
\end{equation}
due to the fact that token $ID_i$ is chasing token $ID_{i^{\prime}}$. For $\omega_1=k$, the above equation asserts that token $ID_{i^{\prime}}$ revisited node $x_k^i$ in some step later than $j$ which is contradiction. 

We know that token $ID_{i^{\prime}}$ was eventually in the random walk mode in $j$-th step. Hence, each node in the set $\{1,\cdots,n\}\backslash \{x_l^i,\cdots,x_k^i\}$ is selected with probability $1/(n-|\{x_l^i,\cdots,x_k^i\}|)$ in the $k+1$-th step. Consequently, we have:
\begin{equation}
\begin{split}
\Pr\{x_{k+1}^i=a|x_j^{i^{\prime}}=x_k^i,chase_i,S_k^i\}&\geq \Pr\{x_{k+1}^i=b|x_j^{i^{\prime}}=x_k^i,chase_i,S_k^i\}\\ &\forall j,k, \forall a,b\in \{1,\cdots,n\} , a\not\in S_k^i, b\in S_k^i.
\end{split}
\label{eql2}
\end{equation}
From (\ref{eql1}) and (\ref{eql2}), it can be concluded that:
\begin{equation}
\Pr\{x_{k+1}^i=a|S_k^i\} \geq \Pr\{x_{k+1}^i=b|S_k^i\}, \forall a,b\in \{1,\cdots,n\} , a\not\in S_k^i, b\in S_k^i.
\end{equation}

Thus, the proof is complete.

\end{proof}
 
 \begin{figure}[!t]
\centering
\includegraphics[width=5in]{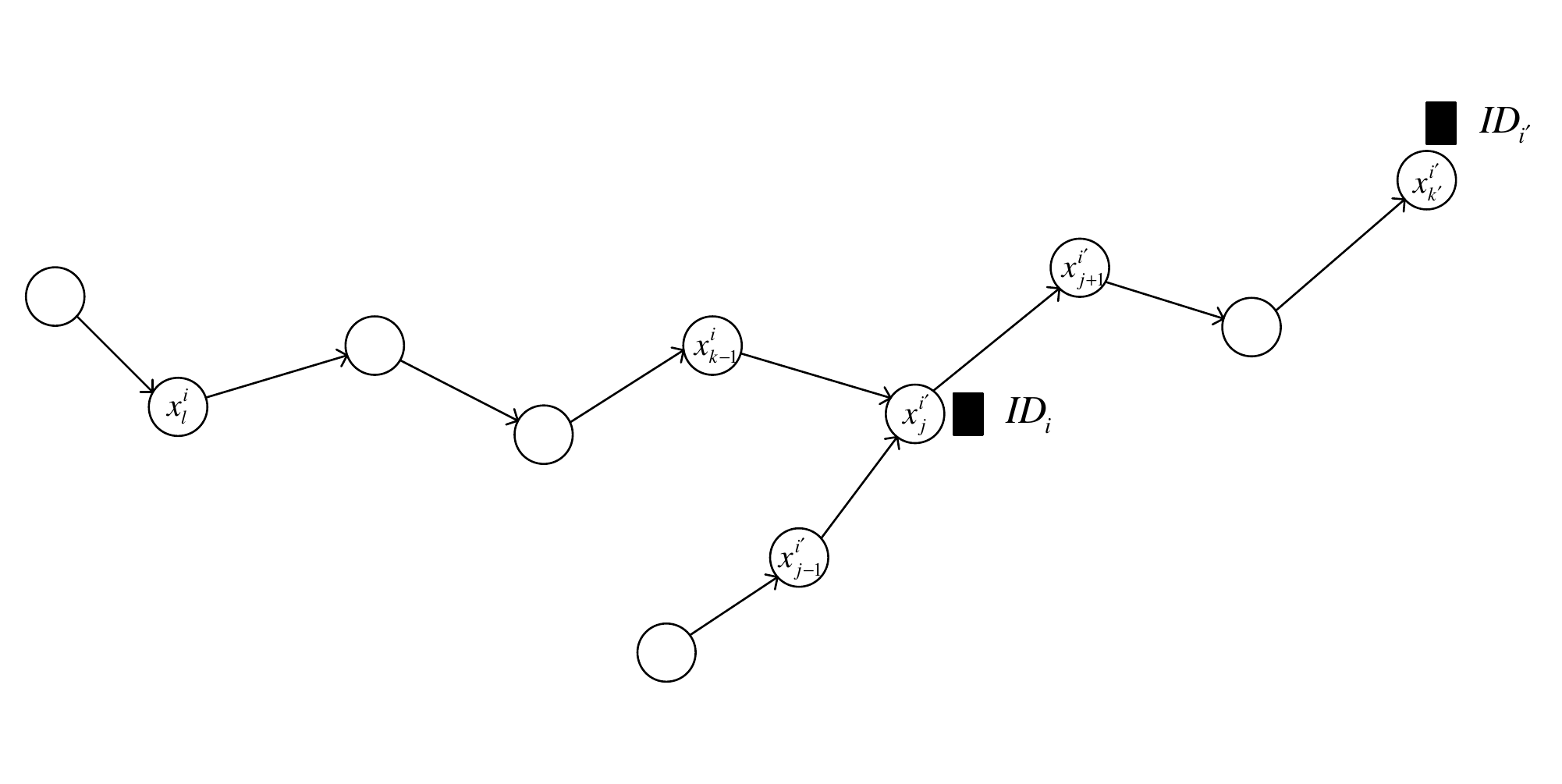}
\caption{Token $ID_i$ visits node $x_{k}^i=x_j^{i^{\prime}}$. Thus, token $ID_i$ will not visit nodes $\{x_l^i,\cdots,x_k^i\}$ in the next step. Tokens are depicted by black squares.}
\label{figsupp1}
\end{figure}

Assume that if token $ID_i$ coalesces with token $ID_j$ (where $ID_j>ID_i$), it virtually sticks to token $ID_j$. Now, if token $ID_j$ meets another token, say $ID_k$ with higher UID, token $ID_j$ and all tokens attached to it, stick to token $ID_k$. This process continues until token $ID_i$ hits the event horizon of $ID_1$ by itself or another token. We denote the time for token $ID_i$ to hit the event horizon of token $ID_1$ by $T_{EH1}(ID_i)$. Furthermore, let $EH_i(t)$ be the set of nodes visited by token $ID_i$ up to time $t$. 

Token $ID_1$ takes steps in the network according to a Poisson process with rate $1/2$ (assuming that $p_{send}=1/2$). At each step, it chooses one of nodes except its current node with probability $1/(n-1)$. Thus, each node (excluding the initial node having token $ID_1$) is not visited by token $ID_1$ up to time $t$ with probability $e^{-t/2(n-1)}$ independently from other nodes. Hence, the pdf of the number of visited nodes at time $t$ is:
\begin{equation}
\Pr\{|EH_1(t)|=r\}={n-1 \choose r-1} (1-e^{-t/2(n-1)})^{r-1}\times (e^{-t/2(n-1)})^{n-r},
\label{eqa2}
\end{equation}
for $1\leq r \leq n-1$.

\begin{mylm} We have the following probabilistic bound on the number of visited nodes by token $ID_1$ at time $2t$:
\begin{equation}
\Pr\{|EH_1(2t)|\leq \mathbb{E}\{|EH_1(t)|\}\} \leq  e^{-\alpha_0 t}, t\leq 2n,
\end{equation}
where $\alpha_0=(1-\log(2))/4$.
\label{lemmab1}
\end{mylm}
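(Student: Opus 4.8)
The plan is to convert the event-horizon size into a binomial random variable and then apply a Chernoff bound. From the exact distribution in (\ref{eqa2}), the number of non-initial nodes visited by token $ID_1$ up to time $t$ is distributed as $\mathrm{Bin}(n-1,\,q(t))$ with $q(t)=1-e^{-t/2(n-1)}$; equivalently, the number of \emph{unvisited} non-initial nodes $U_t:=n-|EH_1(t)|$ is $\mathrm{Bin}\bigl(n-1,\,e^{-t/2(n-1)}\bigr)$, since each such node stays unvisited independently with probability $e^{-t/2(n-1)}$. Because $\mathbb{E}\{|EH_1(t)|\}=n-(n-1)e^{-t/2(n-1)}$, the quantity to be bounded rewrites cleanly as an \emph{upper-tail} event for the later, larger horizon: $\{|EH_1(2t)|\le \mathbb{E}\{|EH_1(t)|\}\}=\{U_{2t}\ge (n-1)e^{-t/2(n-1)}\}$.

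Introduce $b:=e^{-t/2(n-1)}\in(0,1)$. Then $U_{2t}\sim\mathrm{Bin}(n-1,\,b^2)$, since its per-node survival probability is $e^{-2t/2(n-1)}=b^2$, and we are asking for $\Pr\{U_{2t}\ge (n-1)b\}$, a deviation \emph{above} the mean fraction because $b>b^2$. I would then apply the multiplicative Chernoff bound in its tightest (relative-entropy) form, $\Pr\{U_{2t}\ge (n-1)b\}\le \exp\bigl(-(n-1)\,D(b\,\|\,b^2)\bigr)$, and exploit the fortunate algebra coming from the square relation $p'=b^2$: here $D(b\,\|\,b^2)=b\log\tfrac{b}{b^2}+(1-b)\log\tfrac{1-b}{1-b^2}=-b\log b-(1-b)\log(1+b)$, a clean single-variable expression.

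It then remains to check that this rate dominates the target exponent, i.e. that $(n-1)\bigl(-b\log b-(1-b)\log(1+b)\bigr)\ge \alpha_0 t$ with $\alpha_0=(1-\log 2)/4$. Substituting $t=-2(n-1)\log b$ cancels the factor $n-1$ and reduces the claim to the scalar inequality $\bigl(b-\tfrac{1-\log 2}{2}\bigr)\log b+(1-b)\log(1+b)\le 0$. I expect this last step to be the main obstacle, and it is exactly where the hypothesis $t\le 2n$ enters: the ratio $D(b\,\|\,b^2)/(-2\log b)$ tends to $0$ as $b\to 0$ (that is, as $t\to\infty$), so no uniform positive $\alpha_0$ can hold for all $b$; the bound $t\le 2n$ confines $b$ to $[e^{-n/(n-1)},1)$, a region bounded away from $0$ (roughly $b\ge 1/e$), on which the ratio exceeds $(1-\log 2)/4$. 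I would finish by verifying the displayed scalar inequality on this interval with elementary single-variable calculus, checking the boundary behavior at $b\to 1$ (where both sides vanish) and the monotonicity of the ratio, so that its minimum over the interval is attained at the left endpoint and still exceeds $\alpha_0$.
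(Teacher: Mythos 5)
Your proposal follows essentially the same route as the paper: it reads off the binomial law from (\ref{eqa2}), applies the relative-entropy (Chernoff/Arratia--Gordon) tail bound for the binomial, and reduces the claim to a scalar inequality in $b=e^{-t/2(n-1)}$ on the range forced by $t\le 2n$; your divergence $D(b\,\|\,b^2)=-b\log b-(1-b)\log(1+b)$ is exactly the paper's $D(a\,\|\,b)$ after complementing both arguments, since $1-b_{\mathrm{paper}}=(1-a)^2$. The only differences are bookkeeping (you count unvisited rather than visited nodes) and that you make explicit the final endpoint check ($D(b\,\|\,b^2)/(-2\log b)\approx 0.085>\alpha_0\approx 0.077$ at $b=1/e$), which is the same marginal, large-$n$ verification implicit in the paper's last step.
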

\begin{proof}
From (\ref{eqa2}) and the proposed upper bound for binomial distribution in [30], we have:
\begin{equation}
\Pr\{|EH_1(2t)|\leq \mathbb{E}\{|EH_1(t)|\}\} \leq  e^{-nD},
\end{equation}
where $D=a\log(a/b)+(1-a)\log((1-a)/(1-b))$, $a=1-e^{-t/2(n-1)}$ and $b=1-e^{-t/(n-1)}$. Besides, we have:
\begin{equation}
\begin{split}
nD&=t/2e^{-t/2(n-1)}-(n-1)(1-e^{-t/2(n-1)})(1+e^{-t/2(n-1)})
\\&>t/2(1-\log(2))+t^2(\log(2)-1)/(8n).
\end{split}
\end{equation}

From above equation, it can be easily seen that $nD>t(1-\log(2))/4$ for $t\leq 2n$. Therefore, the proof is complete.
\end{proof}

\begin{mylm} Let $N_i(t_0,t_0+2t)$ be the number of steps taken by token $ID_i$ in time interval $[t_0,t_0+2t]$. Then, we have the following bound:
\begin{equation}
\Pr\{N_i(t_0,t_0+2t)<\lfloor t/2\rfloor\}\leq e^{-\alpha_1t},
\end{equation}
where $\alpha_1=\log(\sqrt{e/2})$.
\label{lemmab2}
\end{mylm}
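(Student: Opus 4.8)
The plan is to reduce the bound to a lower-tail estimate for a single Poisson variable. Concretely, I would show that the step-counting process of token $ID_i$ \emph{stochastically dominates} a homogeneous Poisson process of rate $1/2$ over the window $[t_0,t_0+2t]$, and then apply a Chernoff-type lower-tail inequality for the Poisson distribution, with the exponential multiplier tuned to $e^{s}=1/2$ so as to reproduce exactly the stated constant $\alpha_1=\log\sqrt{e/2}$.

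First I would identify the intensity at which token $ID_i$ gets to move. At any instant the token resides at a single host node whose clock is an independent rate-one Poisson process, and by the memorylessness of the exponential inter-tick times the stream of decision epochs seen by the token (including those it inherits through the virtual-sticking convention after it coalesces with a token of higher UID) is itself a rate-one Poisson process on $[t_0,t_0+2t]$. At each such epoch the token actually steps with probability $p_{send}=1/2$ in the random walk mode (Case 1) and with probability one in the chasing mode (Case 2). Hence, irrespective of how the token alternates between the two modes, its per-epoch move probability is bounded below by $1/2$.

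Next I would convert this into stochastic domination. Conditioning on the number $m$ of decision epochs in the window, the number of steps $N_i(t_0,t_0+2t)$ dominates a $\mathrm{Binomial}(m,1/2)$ variable; since $m\sim\mathrm{Poisson}(2t)$, Poisson thinning yields that $N_i(t_0,t_0+2t)$ stochastically dominates an $X\sim\mathrm{Poisson}(t)$ variable. Consequently $\Pr\{N_i(t_0,t_0+2t)<\lfloor t/2\rfloor\}\le\Pr\{X<\lfloor t/2\rfloor\}$. I would then bound this tail through the moment generating function $\mathbb{E}\{e^{sX}\}=e^{t(e^{s}-1)}$ with $s<0$: Markov's inequality gives $\Pr\{X<\lfloor t/2\rfloor\}\le e^{t(e^{s}-1)-s\lfloor t/2\rfloor}$, and choosing $e^{s}=1/2$ together with $\lfloor t/2\rfloor\le t/2$ produces $e^{-t/2+(t/2)\log 2}=e^{-(t/2)(1-\log 2)}=e^{-\alpha_1 t}$, since $\alpha_1=\tfrac12(1-\log 2)=\log\sqrt{e/2}$.

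The Chernoff calculation is routine; the step that needs genuine care is the domination claim, because the token's mode is history dependent and may switch adaptively, so I must argue that the per-epoch move probability stays at least $1/2$ \emph{uniformly} over all such adaptive choices, and that the virtual-sticking bookkeeping never lowers the effective clock rate below one. I expect to make this precise with an explicit coupling that advances the dominating $\mathrm{Poisson}(t)$ process only at those epochs where the token genuinely steps, thereby guaranteeing $N_i(t_0,t_0+2t)$ is pointwise at least the coupled Poisson count.
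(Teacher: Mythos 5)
Your proposal is correct and follows essentially the same route as the paper: both treat $N_i(t_0,t_0+2t)$ as dominating a Poisson variable of mean $t$ (rate at least $p_{send}=1/2$ per unit time over a window of length $2t$) and then apply a Chernoff lower-tail bound, the paper by bounding the explicit Poisson tail sum $\sum_{i\le \lfloor t/2\rfloor} e^{-t}t^i/i!\le (2/e)^{t/2}$ and you by the equivalent MGF optimization at $e^{s}=1/2$. Your extra care about the stochastic-domination step is a welcome refinement of what the paper simply asserts, but it is not a different proof.
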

\begin{proof}
The random variable $N_i(t_0,t_0+2t)$ is a Poisson process with rate at least $\lambda_{2t}=2t\times 1/2=t$. Thus, we have from the Chernoff bound:
\begin{equation}
\Pr\{N_i(t_0,t_0+2t)\leq \lfloor t/2\rfloor\}\leq\displaystyle\sum_{i=0}^{\lfloor t/2\rfloor} \frac{e^{-\lambda_{2t}}(\lambda_{2t})^i}{i!}\leq e^{-t}\frac{(et)^{t/2}}{(t/2)^{t/2}}=(2/e)^{t/2}.
\end{equation}
The proof is complete.
\end{proof} 
\begin{myrem}
By the same arguments in Lemma \ref{lemmab2}, it can be shown that: $\Pr\{ N_i(t_0,t_0+t)>2t\}\leq e^{-\alpha_2 t}$ where $\alpha_2=\log{(4/e)}$.
\label{remarkB.1}
\end{myrem}

Given a time $t$, we say that the event $E_i(t)$ occurs if $|EH_1(t)\backslash EH_i(t)|\geq 1/8\sqrt{n\log{n}}$. Let $E(t)=\underset{i}{\bigcap} E_i(t)$ and define $t^{\star}=\sqrt{n\log{n}}$. We have:
{\small
\begin{equation}
\begin{split}
\Pr\big\{ E^c(t^{\star})\big\}&\leq^a \displaystyle\sum_{i=2}^n\Pr\{E_i^c(t^{\star})\} \leq (n-1) \mathbb{E}\Bigg\{\displaystyle\sum_{j= |EH_1(t^{\star})|-1/8\sqrt{n\log{n}}}^{N_i(0,t^{\star})} { N_i(0,t^{\star}) \choose j} P_{EH_1(t^{\star})}^j P_{EH^c_1(t^{\star})}^{N_i(0,t^{\star})-j}\Bigg\} \\
&\leq^b \mathbb{E}\Bigg\{(n-1) \displaystyle\sum_{j=|EH_1(t^{\star})|-1/8\sqrt{n\log{n}}}^{N_i(0,t^{\star})} { N_i(0,t^{\star}) \choose j} \Big(\frac{|EH_1(t^{\star})|}{n-N_i(0,t^{\star})}\Big)^j\Bigg\}\\ 
&\leq^c (n-1) \Bigg(\displaystyle\sum_{j=\lfloor 1/8\sqrt{n\log{n}}\rfloor}^{\lceil 2\sqrt{n\log{n}}\rceil} { 2\sqrt{n\log{n}} \choose j} \Big(\frac{1/4\sqrt{n\log{n}}}{n-2\sqrt{n\log{n}}}\Big)^j+ e^{-\alpha_0 \sqrt{n\log{n}}/2} + e^{-\alpha_2 \sqrt{n\log{n}}}\Bigg)\\
&\leq^d \frac{1}{\sqrt{n\log{n}}}.
\end{split}
\label{eqkey}
\end{equation}
}
\\(a) The first sum is given according to the union bound. The second sum is greater than the probability of having $|EH_1(t^{\star})\cap EH_i(t^{\star})|\geq j$ where $P_{EH_1(t^{\star})}$ and $P_{EH^c_1(t^{\star})}$ are the probabilities of choosing a node from the set $EH_1(t^{\star})$ and $\{1,\cdots,n\}\backslash EH_1(t^{\star})$, respectively.
\\(b) From Lemma \ref{lemmaa_1}, the path traced by token $ID_i$ ($i>1$) is a weakly self-avoiding walk. Thus, we have: $P_{EH_1(t^{\star})}\leq \frac{|EH_1(t^{\star})|}{n-N_i(0,t^{\star})}$. 
\\(c) The sum has greater value for larger $|N_i(0,t)|$ and smaller $|EH_1(t)|$. We can obtain this inequality by bounding the probability $\Pr\{|EH_1(t^{\star})|<1/4\sqrt{n\log{n}}\}$ and $\Pr\{N_i(0,t^{\star})>2\sqrt{n\log{n}}\}$ from Lemma \ref{lemmab1} and Remark \ref{remarkB.1}, respectively.
\\(d) From Strling's approximation, the probability is in the order of $O(e^{-\log{n}\sqrt{n\log{n}}})$. Thus, it is less than $1/\sqrt{n\log{n}}$ for large enough $n$. 

\begin{mylm} Assume that the event $E(t^{\star})$ occurs. Then, the probability of not hitting the event horizon of token $ID_1$ by token $ID_i$ after $t^{\star}+2t$ is less than the following:
\begin{equation}
\Pr\{T_{EH1}(ID_i)>t^{\star}+2t|E(t^{\star})\}\leq e^{-\frac{1}{16} \sqrt{\log{n}/n}t} + e^{-\alpha_1 t}.
\end{equation}
\label{lemmab3}
\end{mylm}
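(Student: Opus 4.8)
The plan is to reduce the hitting problem to a geometric tail estimate, by combining the weakly self-avoiding property of the walk of token $ID_i$ with the guaranteed reservoir of unvisited nodes supplied by the event $E(t^{\star})$, and then to convert the resulting bound on the number of steps into a bound on elapsed time via Lemma \ref{lemmab2}. First I would fix the target set $U := EH_1(t^{\star})\setminus EH_i(t^{\star})$; conditioning on $E(t^{\star})$ guarantees $|U|\ge \tfrac{1}{8}\sqrt{n\log n}$. The key structural observation is that every node of $U$ belongs to $EH_1(\tau)$ for all $\tau\ge t^{\star}$, so the instant token $ID_i$ steps onto any node of $U$ it has hit the event horizon of $ID_1$. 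Hence, as long as $T_{EH1}(ID_i)$ has not yet occurred, no node of $U$ has been visited by $ID_i$, and the residual supply of target nodes remains of size at least $\tfrac{1}{8}\sqrt{n\log n}$ at every step.

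Next I would invoke Lemma \ref{lemmaa_1}: the path traced by $ID_i$ is a weakly self-avoiding walk, so each still-unvisited node, and in particular each node of $U$, is chosen at the following step with probability $\alpha_k\ge \tfrac{1}{n-1}$. Summing over $U$, the per-step probability of landing in the event horizon is at least $q:=\tfrac{1}{n-1}\cdot\tfrac{1}{8}\sqrt{n\log n}\ge \tfrac{1}{8}\sqrt{\log n/n}$, and this lower bound holds conditionally on the entire past as long as no hit has occurred. Letting $H$ denote the number of steps $ID_i$ takes after time $t^{\star}$ before first entering $U$, this makes $H$ stochastically dominated by a geometric variable, so that $\Pr\{H>m\}\le(1-q)^m\le e^{-qm}$.

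Finally I would tie the step count back to real time. Because $EH_1(t^{\star})\subseteq EH_1(\tau)$ for $\tau\ge t^{\star}$, the event $\{T_{EH1}(ID_i)>t^{\star}+2t\}$ is contained in $\{H>N_i(t^{\star},t^{\star}+2t)\}$. Splitting according to whether $ID_i$ takes at least $\lfloor t/2\rfloor$ steps in $[t^{\star},t^{\star}+2t]$ and controlling the complementary event with Lemma \ref{lemmab2} gives
\[
\Pr\{T_{EH1}(ID_i)>t^{\star}+2t\mid E(t^{\star})\}\le \Pr\{H>\lfloor t/2\rfloor\}+\Pr\{N_i(t^{\star},t^{\star}+2t)<\lfloor t/2\rfloor\}\le e^{-q\lfloor t/2\rfloor}+e^{-\alpha_1 t},
\]
and since $q\lfloor t/2\rfloor\ge \tfrac{1}{16}\sqrt{\log n/n}\,t$ the first term is at most $e^{-\frac{1}{16}\sqrt{\log n/n}\,t}$, which is exactly the claimed bound (the factor $\tfrac{1}{16}$ being $q/2$).

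The step I expect to be the main obstacle is making the stochastic-domination argument fully rigorous, rather than merely marginal: the step count $N_i$ and the hitting events are both driven by the token's mode-dependent dynamics (random walk versus chasing), so I must verify that $q$ is a legitimate lower bound on the hitting probability conditioned on the complete history at each step, including the clock ticks and the mode selected at that step. Securing this conditional bound, and cleanly disposing of the trivial case in which $ID_i$ had already entered $EH_1(t^{\star})$ before time $t^{\star}$, is where the care lies; the exponential estimates that follow are then routine.
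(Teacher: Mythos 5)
Your proposal is correct and follows essentially the same route as the paper's proof: both condition on $E(t^{\star})$ to keep a reservoir of at least $\tfrac{1}{8}\sqrt{n\log n}$ unhit event-horizon nodes, bound the per-step hitting probability to get a geometric tail $(1-q)^{m}$, and then split on whether $N_i(t^{\star},t^{\star}+2t)\geq \lfloor t/2\rfloor$ using Lemma \ref{lemmab2}. Your version is in fact slightly more careful than the paper's, since you explicitly invoke the weakly self-avoiding walk property (Lemma \ref{lemmaa_1}) to justify the per-step lower bound $q$, which the paper's proof uses only implicitly.
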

\begin{proof}
Suppose that the event $E(t^{\star})$ occurs at time $t^{\star}$. Thus, the size of the the set $EH_1(t)\backslash EH_i(t)$, $t>t^{\star}$, will be greater than $1/8\sqrt{n\log(n)}$ as far as token $ID_i$ does not hit it. Hence, the probability of not hitting the event horizon of $ID_1$ in time interval $[t^{\star},t^{\star}+2t]$ is less than $(1-1/8\sqrt{n\log{n}}/n)^{N_i(t^{\star},t^{\star}+2t)}$. By bounding $N_i(t^{\star},t^{\star}+2t)$ from below (see Lemma \ref{lemmab2}), we have:
\begin{equation}
\begin{split}
\Pr\{ T_{EH1}(ID_i)>t^{\star}+2t| E(t^{\star})\} & \leq \Pr\{ N_i(t^{\star},t^{\star}+2t)>\lfloor t/2\rfloor\} \times (1-1/8\sqrt{n\log{n}}/n)^{t/2} \\
&\quad + \Pr\{N_i(t^{\star},t^{\star}+2t)\leq \lfloor t/2\rfloor\}\times 1,\\
& \leq e^{-\frac{1}{16} \sqrt{\log{n}/n}t} + e^{-\alpha_1 t}.
\end{split}
\end{equation}

\end{proof}

\begin{mylm} Suppose that token $ID_i$ hits the event horizon of token $ID_1$ at time $t$. Then, it will coalesce with token $ID_1$ in next $3t$ time units with probability greater than $1-(e^{-\alpha_3 t}+e^{-\alpha_4 t})$ where $\alpha_3=1/36$ and $\alpha_4=\log(2/\sqrt{e})$.
\label{lemmab4}
\end{mylm}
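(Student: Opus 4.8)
The plan is to reduce the claim to a one-dimensional pursuit along the trail that token $ID_1$ leaves behind, and then to control that pursuit with Chernoff bounds on Poisson step counts. Once token $ID_i$ enters a node of $EH_1(t)$ it switches to the chasing mode, in which it advances at the full rate $1$ (it forwards on $path(\cdot)$ with probability one at every clock tick), whereas $ID_1$ stays in the random-walk mode and extends its trail at rate $1/2$ (since $p_{send}=1/2$). First I would argue that when $ID_i$ hits the event horizon, the number of trail hops separating it from $ID_1$ -- call it the \emph{gap} $G_0$ -- is at most the size of the trail, so that $G_0\le |EH_1(t)|\le N_1(0,t)$, where $N_1(0,t)\sim\mathrm{Poisson}(t/2)$ is the number of steps $ID_1$ has taken by time $t$. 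Each subsequent step of $ID_i$ shortens the gap by one and each step of $ID_1$ lengthens it by one, so if $A=N_i(t,t+3t)$ and $B=N_1(t,t+3t)$ are the two step counts during the next $3t$ time units, coalescence occurs the first time $A(\cdot)-B(\cdot)$ reaches $G_0$; since this difference starts at $0$ and moves by $\pm 1$, the end-of-window event $A-B\ge G_0$ is a sufficient condition for having hit $G_0$ inside the window.

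Next I would bound the two independent failure sources separately. For the initial gap, a Chernoff upper-tail bound on $N_1(0,t)\sim\mathrm{Poisson}(t/2)$ gives $\Pr\{G_0\ge t\}\le \Pr\{N_1(0,t)\ge t\}\le e^{-t/2}(e/2)^{t}=e^{-\alpha_4 t}$ with $\alpha_4=\log(2/\sqrt e)$, which is exactly the stated second exponent. For the pursuit, after conditioning on $G_0<t$ it suffices to bound $\Pr\{A-B<t\}$, because $A$ and $B$ (the latter over a time window disjoint from $[0,t]$, the former for a different token) are independent of $N_1(0,t)$. Here $A\sim\mathrm{Poisson}(3t)$ and $B\sim\mathrm{Poisson}(3t/2)$, so $A-B$ has mean $3t/2$ and variance $9t/2$; a Chernoff bound on this difference of independent Poisson variables yields $\Pr\{A-B<t\}\le e^{-\alpha_3 t}$ with $\alpha_3=1/36$. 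A union bound then gives that the probability $ID_i$ has not coalesced with $ID_1$ by time $t+3t$ is at most $e^{-\alpha_3 t}+e^{-\alpha_4 t}$, as claimed.

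I expect the real obstacle to be this first, modelling step rather than the tail estimates. One has to justify that the path-marker mechanism genuinely confines $ID_i$ to a trail terminating at $ID_1$, that the governing gap is bounded by the event-horizon size even though $ID_1$ keeps revisiting nodes and overwriting markers (revisits only create shortcuts, so they can only help), and that intermediate coalescences with other chasing tokens $ID_j$ ($ID_i<ID_j<ID_1$) do not break the bookkeeping -- by the virtual-sticking convention the merged group still advances toward $ID_1$ at rate at least $1$, so the same gap and relative-velocity accounting applies. The remaining delicate point is pinning down the constant $1/36$: the clean route is the explicit Chernoff optimization for the Skellam-type variable $A-B$, whose optimal exponent evaluates numerically to $\approx 0.0278\approx 1/36$; alternatively, and consistently with the Gaussian approximations used elsewhere in the appendix, the second-moment estimate $(t/2)^2/(2\cdot 9t/2)=t/36$ delivers the same constant, so I would invoke whichever of these the rest of the paper is prepared to assume.
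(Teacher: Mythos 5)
Your proposal is correct and follows essentially the same route as the paper's own proof: you reduce the pursuit to a gap of at most $N_1(0,t)$ along the trail, bound $\Pr\{N_1(0,t)\ge t\}\le e^{-\alpha_4 t}$ by a Poisson Chernoff bound, and control the catch-up via the Skellam-type deviation $(t/2)^2/(9t)=t/36$ for $N_i(t,4t)-N_1(t,4t)$, exactly as the paper does. The only difference is presentational: you spell out the modelling step (trail confinement, revisits as shortcuts, virtual sticking) that the paper compresses into the assertion that the worst case is a line of length $N_1(0,t)$ with $ID_i$ at its end.
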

\begin{proof}
In worst case scenario, the event horizon of token $ID_1$ is a line with length $N_1(0,t)$ and token $ID_i$ hits end of the line at time $t$. Thus, token $ID_i$ reaches token $ID_1$ at time $t^{\prime}$ given in the following equation:
\begin{equation}
N_i(t,t^{\prime})=N_1(0,t)+N_1(t,t^{\prime}).
\end{equation}

 Let us define random variable $Y(t^{\prime})=N_i(t,t^{\prime})-N_1(t,t^{\prime})$, which is the difference of two independent Poisson random variables $N_i(t,t^{\prime})$ and $N_1(t,t^{\prime})$ with rates $(t^{\prime}-t)$ and $(t^{\prime}-t)/2$, respectively. Hence, the random variable $Y(t^{\prime})$ has Skellam distribution and we have:
\begin{equation}
\Pr\{Y(t^{\prime})< N_1(0,t)\}\leq e^{-\frac{(N_1(0,t)-1/2(t^{\prime}-t))^2}{3(t^{\prime}-t)}}.
\end{equation}

Since token $ID_1$ takes at most $\lceil t \rceil$ steps in time interval $[0,t]$ with probability $\displaystyle\sum_{i=0}^{\lceil t\rceil} e^{-t/2} \frac{(t/2)^i}{i!}\leq e^{-\alpha_4 t}$, we have:
\begin{equation}
\Pr\{Y(4t)< N_1(0,t)\}\leq e^{-\alpha_3 t}+e^{-\alpha_4 t}.
\end{equation}
\end{proof}

\begin{mycol} From Lemmas \ref{lemmab3} and \ref{lemmab4}, we have: 
\begin{equation}
\begin{split}
\Pr\{T_{coal}(ID_i)>4t^{\star}+8t| E(t^{\star})\}&\leq  e^{-\frac{1}{16} \sqrt{\log{n}/n}t}+e^{-\alpha_1 t}+e^{-\alpha_3 t}+e^{-\alpha_4 t},\\
&\leq e^{-\frac{1}{16} \sqrt{\log{n}/n}t}+ 3e^{-\alpha_3 t}.
\end{split}
\end{equation}
\label{corb2}
\end{mycol}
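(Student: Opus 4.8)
The plan is to obtain the corollary by feeding Lemmas~\ref{lemmab3} and~\ref{lemmab4} into a single application of the law of total probability, decomposing the coalescence of token $ID_i$ into its two physical stages: the time $T_{EH1}(ID_i)$ needed to \emph{reach} the event horizon of $ID_1$, and the additional time needed to \emph{traverse} that horizon and merge with $ID_1$. Working throughout conditioned on $E(t^{\star})$, I would split on whether the horizon is reached by time $t^{\star}+2t$:
\begin{equation}
\Pr\{T_{coal}(ID_i)>4t^{\star}+8t\mid E(t^{\star})\}\leq \Pr\{T_{EH1}(ID_i)>t^{\star}+2t\mid E(t^{\star})\}+\Pr\{T_{EH1}(ID_i)\leq t^{\star}+2t,\ T_{coal}(ID_i)>4t^{\star}+8t\mid E(t^{\star})\}.
\end{equation}
The first summand is precisely the quantity controlled by Lemma~\ref{lemmab3}, contributing $e^{-\frac{1}{16}\sqrt{\log n/n}\,t}+e^{-\alpha_1 t}$.

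For the second summand I would exploit the deadline arithmetic $4t^{\star}+8t=4(t^{\star}+2t)$: if the horizon is reached at some time $\tau\leq t^{\star}+2t$, then at least $4(t^{\star}+2t)-\tau\geq 3(t^{\star}+2t)$ time units remain before the deadline, which is at least three times the window over which the horizon line was grown. Hence Lemma~\ref{lemmab4} applies and bounds the conditional probability of failing to coalesce in time by $e^{-\alpha_3 t}+e^{-\alpha_4 t}$. Adding the two contributions yields the first displayed inequality of the corollary, with its four exponential terms. The final step is purely arithmetic: among the three constants in the non--random-walk terms, $\alpha_3=1/36$ is the smallest (one checks that $\alpha_1=\log\sqrt{e/2}$ and $\alpha_4=\log(2/\sqrt e)$ both exceed $1/36$), so $e^{-\alpha_1 t}+e^{-\alpha_3 t}+e^{-\alpha_4 t}\leq 3e^{-\alpha_3 t}$, giving the second, cleaner bound.

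I expect the only genuine obstacle to be making the second summand uniform in $t$ rather than in the actual hitting time $\tau$, since the guarantee of Lemma~\ref{lemmab4} \emph{weakens} as its time argument shrinks (for tiny $\tau$ its bound is vacuous). The fix I would carry out is to apply Lemma~\ref{lemmab4} not at $\tau$ but with the full window length $t^{\star}+2t\geq t$: the horizon line has length at most $N_1(0,t^{\star}+2t)$ while the remaining budget is at least $3(t^{\star}+2t)$, so the Skellam deviation estimate underlying Lemma~\ref{lemmab4} produces the exponent $\alpha_3(t^{\star}+2t)\geq \alpha_3 t$, and likewise for $\alpha_4$. This is what licenses replacing the $\tau$-dependent estimate by the clean $e^{-\alpha_3 t}+e^{-\alpha_4 t}$ used above, and it is the one place where the combination is more than a mechanical addition of the two lemmas.
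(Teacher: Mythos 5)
Your proposal is correct and matches the derivation the paper intends (the paper states the corollary without an explicit proof): a union bound over the two stages, Lemma~\ref{lemmab3} for reaching the event horizon by time $t^{\star}+2t$ and Lemma~\ref{lemmab4} for the subsequent chase within the remaining $3(t^{\star}+2t)$ budget, followed by the observation that $\alpha_3=1/36$ is smaller than both $\alpha_1=\tfrac{1}{2}(1-\log 2)$ and $\alpha_4=\log 2-\tfrac{1}{2}$. Your final remark about applying the Skellam estimate at the full window length $t^{\star}+2t$ rather than at the actual hitting time $\tau$ is exactly the uniformization needed to make the second term $e^{-\alpha_3 t}+e^{-\alpha_4 t}$ legitimate, a point the paper glosses over.
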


Now, we can obtain an upper bound on the average time complexity:
\begin{equation}
\begin{split}
\mathbb{E}\{T_{run}(n)\}&= \mathbb{E}\{T_{run}(n)| E(t^{\star})\} \Pr\{E(t^{\star})\}+ \mathbb{E}\{T_{run}(n)| E^c(t^{\star})\} \Pr \{ E^c(t^{\star})\}, \\
&\leq^a \mathbb{E}\{T_{run}(n)| E(t^{\star})\} + (4 n\log(n)+2t^{\star}) \times \frac{1}{\sqrt{n\log{n}}},
\\ &=\int_0^{\infty} \Pr\{T_{run}(n)>\tau| E(t^{\star}) \} d\tau + 4 \sqrt{n\log{n}}+2,
\\&  \leq^b  \int_0^{\infty} \min(1,\displaystyle\sum_{i\in \{2,\cdots,n\}} \Pr\{T_{coal}(ID_i)>\tau|E(t^{\star})\}) d\tau + 4\sqrt{n\log{n}}+2,
\\&\leq^c 4t^{\star} +\int_{0}^{\infty} \min(1,(n-1)\times (e^{-\frac{1}{128} \sqrt{\log{n}/n}\tau}+3e^{-\alpha_3\tau/8}) )d\tau + 4\sqrt{n\log{n}}+2,
\\&\leq^d \int_{0}^{128\sqrt{ n \log(n)}} 1 dt +\int_{128\sqrt{n \log(n)}}^{\infty} n\times (e^{-\frac{1}{128} \sqrt{\log{n}/n}\tau}+3e^{-\alpha_3 \tau/8})d\tau + 8\sqrt{n\log{n}}+2,
\\ &\leq 128\sqrt{ n \log(n)} + 128\sqrt{ n/\log{n}} + \frac{24n}{\alpha_3} e^{-16\alpha_3 \sqrt{n\log{n}}}+8\sqrt{n\log{n}}+2 = O(\sqrt{n\log{n}}).
\end{split}
\end{equation}
\\(a) Regardless of the event $E(t^{\star})$, token $ID_1$ covers the complete graph in $t^{\star}+2n\log(n)$ time units on average [13]. Thus, any token $ID_i$ ($i>1$) will coalesce with it in at most $2\times(2n\log{n}+t^{\star})$ time units on average. Hence, we have: $\mathbb{E}\{T_{run}(n)| E^c(t^{\star})\}\leq 4n\log{n} + 2t^{\star}$. Besides, we know that  $\Pr \{ E^c(t^{\star})\}\leq 1/\sqrt{n\log{n}}$ according to (\ref{eqkey}).
\\(b) According to union bound.
\\(c) From Corollary \ref{corb2}.
\\(d) From the fact that $ne^{-\frac{1}{128} \sqrt{\log{n}/n}t}\geq 1$ for $t\leq 128\sqrt{n\log{n}}$.

\bibliographystyle{IEEEtran}
\bibliography{mybib}

\begin{thebibliography}{10}
\providecommand{\url}[1]{#1}
\csname url@samestyle\endcsname
\providecommand{\newblock}{\relax}
\providecommand{\bibinfo}[2]{#2}
\providecommand{\BIBentrySTDinterwordspacing}{\spaceskip=0pt\relax}
\providecommand{\BIBentryALTinterwordstretchfactor}{4}
\providecommand{\BIBentryALTinterwordspacing}{\spaceskip=\fontdimen2\font plus
\BIBentryALTinterwordstretchfactor\fontdimen3\font minus
  \fontdimen4\font\relax}
\providecommand{\BIBforeignlanguage}[2]{{%
\expandafter\ifx\csname l@#1\endcsname\relax
\typeout{** WARNING: IEEEtran.bst: No hyphenation pattern has been}%
\typeout{** loaded for the language `#1'. Using the pattern for}%
\typeout{** the default language instead.}%
\else
\language=\csname l@#1\endcsname
\fi
#2}}
\providecommand{\BIBdecl}{\relax}
\BIBdecl

\bibitem{Almodovar2012}
J.~Almodovar and J.~Nelson, ``A gossip-based distributed processing algorithm
  for multiple transmitter localization,'' in \emph{Statistical Signal
  Processing Workshop (SSP), 2012 IEEE}, 2012, pp. 169--172.

\bibitem{Chiuso2011}
A.~Chiuso, F.~Fagnani, L.~Schenato, and S.~Zampieri, ``Gossip algorithms for
  simultaneous distributed estimation and classification in sensor networks,''
  \emph{Selected Topics in Signal Processing, IEEE Journal of}, vol.~5, no.~4,
  pp. 691--706, 2011.

\bibitem{Necchi2007}
L.~Necchi, A.~Bonivento, L.~Lavagno, A.~Sangiovanni-Vincentelli, and
  L.~Vanzago, ``E2rina: an energy efficient and reliable in-network aggregation
  for clustered wireless sensor networks,'' in \emph{Wireless Communications
  and Networking Conference, 2007.WCNC 2007. IEEE}, 2007, pp. 3364--3369.

\bibitem{mateos2010distributed}
G.~Mateos, J.~A. Bazerque, and G.~B. Giannakis, ``Distributed sparse linear
  regression,'' \emph{Signal Processing, IEEE Transactions on}, vol.~58,
  no.~10, pp. 5262--5276, 2010.

\bibitem{Lee2012}
L.~Hyang-Won, E.~Modiano, and B.~Long, ``Distributed throughput maximization in
  wireless networks via random power allocation,'' \emph{Mobile Computing, IEEE
  Transactions on}, vol.~11, no.~4, pp. 577--590, 2012.

\bibitem{Nedic2009}
A.~Nedic and A.~Ozdaglar, ``Distributed subgradient methods for multi-agent
  optimization,'' \emph{Automatic Control, IEEE Transactions on}, vol.~54,
  no.~1, pp. 48--61, 2009.

\bibitem{lynch1996distributed}
N.~Lynch, \emph{Distributed algorithms}.\hskip 1em plus 0.5em minus 0.4em\relax
  Morgan Kaufmann, 1996.

\bibitem{sappidi2013computing}
R.~Sappidi, C.~Rosenberg, and A.~Girard, ``Computing statistical functions in
  wired networks,'' \emph{Selected Areas in Communications, IEEE Journal on},
  vol.~31, no.~4, pp. 731--742, 2013.

\bibitem{Dimakis2010b}
A.~Dimakis, S.~Kar, J.~Moura, M.~Rabbat, and A.~Scaglione, ``Gossip algorithms
  for distributed signal processing,'' \emph{Proceedings of the IEEE}, vol.~98,
  no.~11, pp. 1847--1864, 2010.

\bibitem{Boyd2006}
S.~Boyd, A.~Ghosh, B.~Prabhakar, and D.~Shah, ``Randomized gossip algorithms,''
  \emph{Information Theory, IEEE Transactions on}, vol.~52, no.~6, pp.
  2508--2530, 2006.

\bibitem{Ayaso2010}
O.~Ayaso, D.~Shah, and M.~Dahleh, ``Information theoretic bounds for
  distributed computation over networks of point-to-point channels,''
  \emph{Information Theory, IEEE Transactions on}, vol.~56, no.~12, pp.
  6020--6039, 2010.

\bibitem{saligrama2011token}
V.~Saligrama and M.~Alanyali, ``A token-based approach for distributed
  computation in sensor networks,'' \emph{Selected Topics in Signal Processing,
  IEEE Journal of}, vol.~5, no.~4, pp. 817--832, 2011.

\bibitem{cooper2013coalescing}
C.~Cooper, R.~Elsässer, H.~Ono, and T.~Radzik, ``Coalescing random walks and
  voting on connected graphs,'' \emph{SIAM Journal on Discrete Mathematics},
  vol.~27, no.~4, pp. 1748--1758, 2013.

\bibitem{camp2002survey}
T.~Camp, J.~Boleng, and V.~Davies, ``A survey of mobility models for ad hoc
  network research,'' \emph{Wireless communications and mobile computing},
  vol.~2, no.~5, pp. 483--502, 2002.

\bibitem{johnson1977urn}
N.~L. Johnson and S.~Kotz, \emph{Urn models and their application: an approach
  to modern discrete probability theory}.\hskip 1em plus 0.5em minus
  0.4em\relax Wiley New York, 1977.

\bibitem{mosk2008fast}
D.~Mosk-Aoyama and D.~Shah, ``Fast distributed algorithms for computing
  separable functions,'' \emph{Information Theory, IEEE Transactions on},
  vol.~54, no.~7, pp. 2997--3007, 2008.

\bibitem{tavare1984line}
S.~Tavar{\'e}, ``Line-of-descent and genealogical processes, and their
  applications in population genetics models,'' \emph{Theoretical population
  biology}, vol.~26, no.~2, pp. 119--164, 1984.

\bibitem{Benezit2011}
F.~Benezit, P.~Thiran, and M.~Vetterli, ``The distributed multiple voting
  problem,'' \emph{Selected Topics in Signal Processing, IEEE Journal of},
  vol.~5, no.~4, pp. 791--804, 2011.

\bibitem{newman20062}
M.~E. Newman, ``2 random graphs as models of networks,'' \emph{Handbook of
  graphs and networks: From the genome to the internet}, 2006.

\bibitem{erdHos1960evolution}
P.~Erd{\H{o}}s and A.~R{\'e}nyi, ``On the evolution of random graphs,''
  \emph{Magyar Tud. Akad. Mat. Kutat{\'o} Int. K{\"o}zl}, vol.~5, pp. 17--61,
  1960.

\bibitem{dembo2004cover}
A.~Dembo, Y.~Peres, J.~Rosen, and O.~Zeitouni, ``Cover times for brownian
  motion and random walks in two dimensions,'' \emph{Annals of mathematics},
  pp. 433--464, 2004.

\bibitem{dvoretzky1951some}
A.~Dvoretzky and P.~Erd{\"o}s, ``Some problems on random walk in space,'' in
  \emph{Proc. 2nd Berkeley Symp}, 1951, pp. 353--367.

\bibitem{gupta2006robustness}
V.~Gupta, C.~Langbort, and R.~M. Murray, ``On the robustness of distributed
  algorithms,'' in \emph{Decision and Control, 2006 45th IEEE Conference
  on}.\hskip 1em plus 0.5em minus 0.4em\relax IEEE, 2006, pp. 3473--3478.

\bibitem{leblanc2013resilient}
H.~J. LeBlanc, H.~Zhang, X.~Koutsoukos, and S.~Sundaram, ``Resilient asymptotic
  consensus in robust networks,'' \emph{Selected Areas in Communications, IEEE
  Journal on}, vol.~31, no.~4, pp. 766--781, 2013.

\bibitem{ali2004measuring}
S.~Ali, A.~A. Maciejewski, H.~J. Siegel, and J.-K. Kim, ``Measuring the
  robustness of a resource allocation,'' \emph{Parallel and Distributed
  Systems, IEEE Transactions on}, vol.~15, no.~7, pp. 630--641, 2004.

\bibitem{shestak2008stochastic}
V.~Shestak, J.~Smith, A.~A. Maciejewski, and H.~J. Siegel, ``Stochastic
  robustness metric and its use for static resource allocations,''
  \emph{Journal of Parallel and Distributed Computing}, vol.~68, no.~8, pp.
  1157--1173, 2008.

\bibitem{takayasu1992extinction}
H.~Takayasu and A.~Tretyakov, ``Extinction, survival, and dynamical phase
  transition of branching annihilating random walk,'' \emph{Physical review
  letters}, vol.~68, no.~20, pp. 3060--3063, 1992.

\bibitem{groenevelt2006relaying}
R.~Groenevelt, E.~Altman, and P.~Nain, ``Relaying in mobile ad hoc networks:
  The brownian motion mobility model,'' \emph{Wireless Networks}, vol.~12,
  no.~5, pp. 561--571, 2006.

\bibitem{shah2009}
D.~Shah, \emph{Gossip algorithms}.\hskip 1em plus 0.5em minus 0.4em\relax Now
  Publishers Inc., 2009.

\end{thebibliography}
\begin{spacing}{1.2}
\noindent
{\footnotesize [29] \space H.-K. Hwang and S. Janson, ``Local limit theorems for finite and infinite urn models,'' {\em The Annals of Probability}, pp. 992-1022, 2008.
}
\\{\footnotesize
[30] \space R. Arratia and L. Gordon, ``Tutorial on large deviations for the binomial distribution,'' {\em Bulletin of mathematical biology},
vol. 51, no. 1, pp. 125-131, 1989.
}
\end{spacing}

\end{document}